\DeclareMathAlphabet{\can}{OT1}{cmss}{m}{n}
\newtheorem{thm}{Theorem}[section]
\newtheorem{lem}[thm]{Lemma}
\newtheorem{exa}[thm]{Example}
\theoremstyle{definition}
\theoremstyle{fact}
\theoremstyle{conjecture}
\numberwithin{equation}{section}
\newcommand{\ord}{\operatorname{ord}}
\begin{document}
\title[Further   factorization of $x^{n}-1$ over finite fields (II)] {Further  factorization of $x^n-1$ over  finite fields (II)}


\author[Y. Wu]{Yansheng Wu}
\address{\rm School of Computer Science, Nanjing University of Posts and Telecommunications, Nanjing 210023, P. R. China  }
 \email{yanshengwu@njupt.edu.cn}

\author[Q. Yue]{Qin Yue}
\address{\rm Department of Mathematics, Nanjing University of Aeronautics and Astronautics,
Nanjing, 210016, P. R. China; State Key Laboratory of Cryptology, P. O. Box 5159, Beijing, 100878, P. R. China}
\email{yueqin@nuaa.edu.cn}


\subjclass[2010]{11T06, 12E05, 94B15}

\keywords{ Irreducible factor, cyclotomic polynomial}

\begin{abstract} Let $\Bbb F_q$ be a finite field with $q$ elements.
Let  $n$ be a positive integer with radical $rad(n)$, namely, the product of distinct prime divisors of $n$.
If the order of $q$ modulo $rad(n)$ is either 1 or a prime, then
   the irreducible factorization and a counting formula of irreducible factors of $x^n-1$ over $\Bbb F_q$ were obtained by  Mart\'{\i}nez, Vergara, and Oliveira (Des Codes Cryptogr  77 (1) : 277-286, 2015) and Wu, Yue, and Fan (Finite Fields Appl 54: 197-215, 2018).
In this paper,  we explicitly factorize  $x^{n}-1$ into irreducible factors in $\Bbb F_q[x]$ and calculate  the number of the irreducible factors when  the order of  $q$ modulo $rad(n)$ is a product of two primes.
\end{abstract}
\maketitle

\section{Introduction}

There is a close connection between the factorization of a polynomial over a finite field and  a wide variety of technological situations, including efficient and secure communications, error-correcting codes, deterministic simulations of random processes and digital tracking systems (see for instance \cite{G}).
Moreover, in theory, the factorization of  polynomials over finite fields is a classical topic of mathematics. There are various computational problems depending in one way or another on the factorization of polynomials over finite fields. The factorization of $x^{n} - 1$ is inextricably linked with the factorization of the $n$-th cyclotomic polynomial $\Phi_n(x)$ (see \cite{LN}), which has received a good deal of attention.

The explicit factorization of  $\Phi_{2^{n}r}(x )$  over $\Bbb F_{q}$ was studied in \cite{F},  where $r$ is an odd prime and $q \equiv \pm1\pmod r$.
This gave the explicit irreducible factors of $\Phi_{2^{n}3}(x )$ and the Dickson polynomial $D_{2^{n}3}(x )$ completely.
The paper \cite{WW} showed that  all irreducible factors of $\Phi_{2^{n}r}(x )$ can be obtained easily from irreducible factors of cyclotomic polynomials of small orders. Hence, the explicit factorization of $\Phi_{2^{n}5}(x )$ over finite fields was obtained by this way.
Assuming that the explicit factors of  $\Phi_{r}(x )$ are known,  Tuxanidy and  Wang obtained the irreducible factors of  $\Phi_{2^{n}r}(x )$ over $\Bbb F_{q}$ concretely in \cite{TW}, where $r \geqslant3$ is an arbitrary odd integer.
 Chen {\em et al.} \cite{CLT} obtained the irreducible factorization of $x^{2^mp^n}-1$ over $\Bbb F_q$  in a very explicit form, where $p$ is an odd prime divisor of $q-1$. It is also shown that all the irreducible factors of $x^{2^mp^n}-1$ over $\Bbb F_q$ are either binomials or trinomials.

Most recently,   Wu {\em et al.} \cite{WZF}  showed that all irreducible factors of $\Phi_{p^{n}}(x )$ and $\Phi_{p^{n}r}(x )$ can be obtained from irreducible factors of cyclotomic polynomials of small orders. Applying this result, they presented a general idea to factorize cyclotomic polynomials over finite fields.
Some explicit factorizations of certain cyclotomic polynomials or Dickson polynomials can be found in \cite{BGM}-\cite{LN}, \cite{W,WY},  \cite{WYF, YKT, YY, YYZ}.

Suppose that the order of $q$ modulo $rad(n)$ is  1,   Mart\'{\i}nez {\em et al.}  \cite{MVO} explicitly factorized   $x^{n} -1$ into irreducible binomials $x^{t} -a$ or trinomials $x^{2t} - ax^{t} + b$ over $\Bbb F_q$ and counted the number of irreducible polynomials.  Suppose that the order of $q$ modulo $rad(n)$ is a prime,   the irreducible factorization and a counting formula of   irreducible factors of $x^n-1$ over $\Bbb F_q$ were given in  \cite{WYF}.

In this paper,  we explicitly factorize  $x^{n}-1$ into irreducible factors in $\Bbb F_q[x]$ and calculate the number of the irreducible factors when  the order of  $q$ modulo $rad(n)$ is a product of two primes. By comparing with the case of the order of $q$ modulo $rad(n)$ is a prime number \cite{WYF}, the results presented in this paper is more complicated due to the number of conjugate polynomials of an irreducible factor  has to face  many situations.

This paper is organized as follows.  Section 2 presents some basic results.   Section 3  gives the  irreducible factorization of  $x^{n}-1$ over $\Bbb F_{q}$ and a counting formula of  its irreducible factors under the condition the order of $q$ modulo $rad(n)$ is a square of a prime.
In Section 4,  we explicitly factorize $x^{n}-1$ into irreducible factors over $\Bbb F_{q}$ and calculate the number of its  irreducible factors when the order of $q$ modulo $rad(n)$ is a product of two distinct primes. 
 Section 5
 concludes this paper.



 \section{Preliminaries}


Let $n$ be a  positive integer and  $ n= p_{1}^{\alpha_{1}}p_{2}^{\alpha_{2}} \cdots p_{l}^{\alpha_{l}}$,  where $p_1,\ldots, p_l$ are distinct primes and $\alpha_i\ge 1$, $1\le i\le l$. We denote $rad(n) = p_{1} p_{2} \cdots p_{l}$ and  $v_{p_i}(n)=\alpha_i$, $1\le i\le l$.
 For  two integers $m$ and $n$ with $\gcd(m,n)=1$, $\ord_{n} (m)$  denotes the order of $m$ modulo $n$, i.e.  it is the minimum positive integer $k$ such that $m^k \equiv 1 \pmod n$.

Let $\Bbb F_q$ be a finite field of order $q$, where $q$ is a positive power of a prime $p$ and $o(\alpha)$ denote the order of $\alpha\in \Bbb F^{\ast}_q$ in the multiplicative group $\Bbb F^{\ast}_q$. Through this paper,  we always assume that $\gcd(n,q)=1$.

There is a classical remarkable criterion on irreducible binomials over $\Bbb F_q$.
\begin{lem} \cite[Theorem 3.75]{LN} {\rm 
Assume that $t\geq 2$ is a positive integer and $\eta \in \Bbb F_q^\ast$. Then the binomial
 $x^t-\eta$ is irreducible over $\Bbb F_q$ if and only if all the following conditions are satisfied:

  (i) $ rad(t)$ divides $o(\eta)$,

  (ii) $\gcd(t, \frac{q-1}{o(\eta)})=1$,

(iii) if $4| t$, then $4 | (q-1)$.}
\end{lem}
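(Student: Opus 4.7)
The plan is to derive Lemma 2.1 from the classical Vahlen--Capelli criterion, which over an arbitrary field $F$ asserts that for $t \geq 2$ and $\eta \in F^*$, the polynomial $x^t - \eta$ is irreducible over $F$ if and only if (a) $\eta$ is not a $p$-th power in $F$ for every prime $p \mid t$, and (b) $-4\eta$ is not a fourth power in $F$ whenever $4 \mid t$. Granting this, the task reduces to translating (a) and (b) into conditions on $o(\eta)$ and $q-1$ using the cyclic structure of $\mathbb{F}_q^*$.

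For the first step, I would prove the following equivalence: for any prime $r$, $\eta$ fails to be an $r$-th power in $\mathbb{F}_q^*$ if and only if $r \mid o(\eta)$ and $r \nmid (q-1)/o(\eta)$. This is a short $r$-adic valuation computation: writing $\eta = g^m$ for a generator $g$ of $\mathbb{F}_q^*$, the identity $o(\eta) = (q-1)/\gcd(m,q-1)$ lets one relate $v_r(m)$ to $v_r(q-1)$ and $v_r(o(\eta))$, while ``$\eta$ is an $r$-th power'' becomes ``$\gcd(r,q-1) \mid m$''. Imposing the resulting non-$r$-th-power condition over all primes $r \mid t$ then yields precisely items (i) and (ii) of the lemma.

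For the second step, I assume (i), (ii), and $4 \mid t$, and split on the residue of $q$ modulo $4$. If $4 \mid q-1$, then $\mathbb{F}_q$ contains $i = \sqrt{-1}$, and the identity $(1+i)^4 = -4$ shows that $-4$ is already a fourth power in $\mathbb{F}_q$; together with the fact that (ii) forces $v_2(m) = 0$ (so $\eta$ is not even a square), condition (b) follows automatically, and (iii) is consistent with irreducibility. If instead $q \equiv 3 \pmod{4}$, then $\gcd(4, q-1) = 2$, so the fourth powers in $\mathbb{F}_q^*$ coincide with the squares; since both $-4$ and $\eta$ are non-squares, $-4\eta$ is a square and hence a fourth power, contradicting (b). This case analysis is precisely what forces condition (iii).

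The main obstacle I anticipate is the delicate interplay in the second step between the hypothesis $4 \mid t$ and the residue of $q$ modulo $4$: one must verify on both branches that the status of $-4$ as a fourth power and the status of $\eta$ itself as a fourth power are correctly coupled with conditions (i) and (ii). The first step is essentially bookkeeping in a cyclic group and should pose no surprises, but the $2$-adic case of the second step is where condition (iii) genuinely arises and requires careful verification.
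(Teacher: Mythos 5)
Your proposal is correct. Note first that the paper does not prove this lemma at all: it is quoted verbatim from Lidl--Niederreiter (Theorem 3.75), and the only thing the authors add is the remark after the statement exhibiting the explicit factorization $x^{4s}-a=(x^{2s}-bx^s+c)(x^{2s}+bx^s+c)$ in the excluded case $q\equiv 3\pmod 4$. So there is no in-paper argument to compare against; the relevant comparison is with the source. The proof in Lidl--Niederreiter goes through the multiplicative order of a root $\beta$ of $x^t-\eta$ in an extension field and the general description of how binomials split according to $\ord(\beta)$, whereas you reduce to the Vahlen--Capelli irreducibility criterion and then translate its two conditions into arithmetic of the cyclic group $\Bbb F_q^\ast$. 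Your translation is sound: the $r$-adic computation in step one correctly shows that ``$\eta\notin(\Bbb F_q^\ast)^r$'' is equivalent to ``$r\mid o(\eta)$ and $r\nmid (q-1)/o(\eta)$'' (the case $r\nmid(q-1)$ being degenerate on both sides), which packages into (i) and (ii); and in step two the identity $(1+i)^4=-4$ and the coincidence of fourth powers with squares when $q\equiv3\pmod4$ correctly isolate (iii). Two small points you should make explicit: your restatement of the Capelli condition as ``$-4\eta$ is not a fourth power'' is equivalent to the standard ``$\eta\notin -4F^4$'' only in odd characteristic (in characteristic $2$ the two differ, but there condition (i) already fails whenever $2\mid t$, so the lemma is unaffected); and your case split on $q\bmod 4$ silently assumes $q$ odd, which is again harmless for the same reason but worth a sentence. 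The trade-off between the two routes is the usual one: yours imports a stronger general theorem and makes the finite-field part pure bookkeeping, while the source's argument is self-contained within finite field theory.
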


In fact, if $t=4s$, $q\equiv 3\pmod 4$ and $a$ is not a square in $\Bbb F_q$, then $x^t-a=(x^{2s}-bx^s+c)(x^{2s}+bx^s+c)$ is reducible over $\Bbb F_q$, where $b, c\in \Bbb F_q$,  $c^2=-a$, and $b^2=2c$.

The following  lemmas are  concerned with the explicit factorization of  $x^{n}-1$ over $ \Bbb F_{q}$.

\begin{lem}\cite[Corollary 1]{MVO} {\rm Let $ \Bbb F_q$ be a finite field and $n$ a positive integer such that both $rad(n)|(q-1)$ and
either
 $q \not\equiv 3 \pmod {4}$ or $8\nmid n$. Let $m_1=\frac {n}{\gcd(n, q-1)}$, $l_1=\frac{q-1}{\gcd(n, q-1)}$, and  $\theta$  a  generator of $\Bbb F_q^*$. Then
the factorization of $x^{n} -1$ into irreducible factors in $\Bbb F_q [x ]$ is
$$\prod\limits_{t\mid m_{1}} \prod_{\mbox{\tiny$
\begin{array}{c}
1\leqslant u\leqslant \gcd(n,q-1)\\
\gcd(u,t)=1\\
\end{array}$
}}(x^{t}-\theta^{ul_{1}}).$$

Moreover,
for each $t| m_{1}$, the number of irreducible factors of degree $t$ is $\frac{\varphi(t)}{t}\cdot \gcd(n,q-1)$, where $\varphi$ denotes the Euler Totient function,  and the number of irreducible factors is
$$\gcd(n,q-1)\cdot \prod_{\mbox{\tiny$
\begin{array}{c}
p| m_{1}\\
p~ prime\\
\end{array}$
}}\bigg(1+v_{p}(m_{1})\cdot \frac{p-1}{p}\bigg).$$ }
\end{lem}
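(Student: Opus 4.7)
The plan is to verify the asserted equality by establishing four things in turn: (a) every binomial $x^t-\theta^{ul_1}$ on the right divides $x^n-1$; (b) each such binomial is irreducible over $\Bbb F_q$ via Lemma~2.1; (c) distinct admissible pairs $(t,u)$ yield distinct polynomials; and (d) the total degree on the right equals $n$. Since both sides are monic, (a)--(d) together will force equality. Item (a) is immediate: if $\alpha^t=\theta^{ul_1}$, then $\alpha^n=\theta^{ul_1 n/t}=\theta^{u(q-1)m_1/t}=1$ because $t\mid m_1$.

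The substantial work is step (b). Setting $d=\gcd(n,q-1)$, I would first observe that for any prime $p\mid t$, the chain $t\mid m_1=n/d$ forces $v_p(n)>v_p(q-1)$, whence $v_p(d)=v_p(q-1)$ and $v_p(l_1)=0$; in particular $rad(t)\mid d$. A direct calculation then gives $o(\theta^{ul_1})=d/\gcd(u,d)$, and conditions (i) and (ii) of Lemma~2.1 fall out: (i) follows from $rad(t)\mid d$ combined with $\gcd(u,rad(t))=1$, and (ii) reduces to $\gcd(t,l_1\gcd(u,d))=1$, which is clear since $v_p(l_1)=0$ for every prime $p\mid t$ and $\gcd(u,t)=1$. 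The hardest part is condition (iii): if $4\mid t$, then $4\mid q-1$. Here I would invoke the hypothesis by contradiction, assuming $v_2(q-1)=1$; then $v_2(d)=1$ and $v_2(m_1)=v_2(n)-1$, so $v_2(t)\ge 2$ yields $v_2(n)\ge 3$, i.e., $8\mid n$, while $v_2(q-1)=1$ (with $q$ odd, the only interesting case) gives $q\equiv 3\pmod 4$, contradicting the dichotomy ``$q\not\equiv 3\pmod 4$ or $8\nmid n$''.

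Step (c) uses that $\theta$ has order $q-1=dl_1$, so $\theta^{ul_1}=\theta^{u'l_1}$ with $1\le u,u'\le d$ forces $u=u'$, while distinct values of $t$ give factors of distinct degrees. For step (d), fixing $t\mid m_1$ and using $rad(t)\mid d$, the number of $u\in\{1,\dots,d\}$ coprime to $t$ equals $d\cdot\varphi(rad(t))/rad(t)=d\varphi(t)/t$, and summing yields
$$\sum_{t\mid m_1}t\cdot\frac{d\varphi(t)}{t}=d\sum_{t\mid m_1}\varphi(t)=dm_1=n,$$
which matches $\deg(x^n-1)$ and simultaneously gives the per-degree count $d\varphi(t)/t$. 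The total number of irreducible factors, $d\sum_{t\mid m_1}\varphi(t)/t$, is then computed by multiplicativity of $f(m)=\sum_{t\mid m}\varphi(t)/t$ with local factor $f(p^a)=1+a(p-1)/p$, delivering the product formula in the statement.
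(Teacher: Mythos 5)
Your proposal is correct, but note that the paper offers no proof of this statement at all: it is Lemma~2.2, imported verbatim as \cite[Corollary~1]{MVO}, so there is nothing internal to compare against. Your argument --- showing each binomial divides $x^n-1$, verifying conditions (i)--(iii) of Lemma~2.1 via the key observation that $p\mid t$ and $t\mid m_1$ force $v_p(q-1)<v_p(n)$ (hence $v_p(l_1)=0$ and $rad(t)\mid\gcd(n,q-1)$), checking distinctness, and closing with the degree count $\sum_{t\mid m_1} t\cdot\frac{\varphi(t)}{t}\gcd(n,q-1)=n$ --- is complete and is essentially the standard proof given in the cited source, including the correct use of the hypothesis ``$q\not\equiv 3\pmod 4$ or $8\nmid n$'' to rule out the failure of condition (iii).
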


\begin{lem}\cite[Corollary 2]{MVO} {\rm Let $ \Bbb F_q$ be a finite field and $n$ a positive integer such that  $rad(n)|(q-1)$,
 $q \equiv3 \pmod 4$, and   $8|n$. Let $m_2=\frac{n}{\gcd(n, q^2-1)}$, $l_2=\frac{q^2-1}{\gcd(n, q^2-1)}$, $l_1=\frac{q-1}{\gcd(n, q-1)}$, $r=\min\{v_2(n/2),v_2(q+1)\}$, and $\alpha$ a  generator of $\Bbb F_{q^2}^*$ satisfying $\theta=\alpha^{q+1}$.
Then the factorization of $x^{n} -1$ into irreducible factors in $\Bbb F_q [x ]$ is
$$\prod_{\mbox{\tiny$
\begin{array}{c}
t| m_{2}\\
t~odd\\
\end{array}$
}} \prod_{\mbox{\tiny$
\begin{array}{c}
1\leqslant v\leqslant \gcd(n,q-1)\\
\gcd(v,t)=1\\
\end{array}$
}}(x^{t}-\theta^{vl_{1}})\cdot\prod\limits_{t\mid m_{2}} \prod\limits_{u\in \mathcal{R}_{t}} (x^{2t}-(\alpha^{ul_{2}}+\alpha^{qul_{2}})x^{t}+\theta^{ul_{2}}), $$ where
$$\mathcal{R}_t=\bigg\{u\in \Bbb N: \begin{array}{l}
1\le u\le \gcd(n,q^{2}-1),2^{r}\nmid u,
 \\  \gcd(u,t)=1, u<\{qu\}_{\gcd(n,q^{2}-1)}\end{array}\bigg\}.$$
Note that  $\{a\}_{b}$ denotes the remainder of the division of $a$ by $b$.

Moreover,  for each odd $t$  with $t| m_{2}$, the number of irreducible polynomials of degree $t$  is $\frac{\varphi(t)}{t}\cdot \gcd(n,q-1)$; the number of irreducible polynomials of degree $2t$ is
\begin{equation*}\left\{
   \begin{array}{ll}
    \frac{\varphi(t)}{2t}\cdot 2^{r_{}}\gcd(n,q-1), &\mbox{ if $ t$ is even},\\
       \frac{\varphi(t)}{2t}\cdot (2^{r_{}}-1)\gcd(n,q-1), &\mbox{ if $ t$ is odd}.\\
    \end{array}\right.
  \end{equation*}
The number of irreducible factors of $x^{n}-1$ over $\Bbb F_q$ is
$$\gcd(n,q-1)\cdot \bigg(\frac1 2 +2^{r_{}-2}(2+v_{2}(m_{2}))\bigg)\cdot \prod_{\mbox{\tiny$
\begin{array}{c}
p|m_{2}\\
p~ odd ~prime\\
\end{array}$
}}\bigg(1+v_{p}(m_{2})\frac{p-1}{p}\bigg).$$}
\end{lem}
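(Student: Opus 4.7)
The plan is to obtain the factorization over $\Bbb F_q$ by first factoring $x^n-1$ over the quadratic extension $\Bbb F_{q^2}$ and then descending via the action of $\Gal(\Bbb F_{q^2}/\Bbb F_q)=\langle\phi_q\rangle$. Since $q$ is odd we have $q^2\equiv 1\pmod 8$, so Lemma 2.2 applied over $\Bbb F_{q^2}$ yields
\[
x^n-1 \;=\; \prod_{t\mid m_{2}} \prod_{\substack{1\le u\le g\\ \gcd(u,t)=1}} \bigl(x^{t}-\alpha^{u l_{2}}\bigr),\qquad g:=\gcd(n,q^{2}-1),
\]
with each factor irreducible over $\Bbb F_{q^2}$. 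Every irreducible factor of $x^n-1$ over $\Bbb F_q$ is then either one of these binomials (if it is $\phi_q$-fixed) or the product of a $\phi_q$-orbit of size~$2$.

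The first step is to identify the fixed binomials. Since $\phi_q(x^t-\alpha^{ul_2})=x^t-\alpha^{qul_2}$, fixedness is equivalent to $\alpha^{ul_2}\in\Bbb F_q^\ast$, i.e., $(q+1)\mid ul_2$. Writing $n=2^{a}m$ with $m$ odd and using $v_{2}(q-1)=1$, $v_{2}(q+1)\ge 2$, a direct computation gives $g=2^{r+1}\gcd(m,q-1)$ and $\gcd(n,q-1)=2\gcd(m,q-1)$, reducing the fixedness condition to $2^{r}\mid u$. Crucially, since $r\ge 2$, whenever $t$ is even the requirement $\gcd(u,t)=1$ forces $u$ to be odd, so the divisibility $2^{r}\mid u$ fails. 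Hence fixed factors occur \emph{only for odd $t\mid m_{2}$}, and for these the substitution $\alpha^{ul_2}=\theta^{vl_1}$ (via $\theta=\alpha^{q+1}$) sets up a bijection $u\leftrightarrow v\in\{1,\dots,\gcd(n,q-1)\}$ with $\gcd(v,t)=1$. Lemma 2.1 then gives irreducibility of each $x^t-\theta^{vl_1}$ over $\Bbb F_q$ (condition~(iii) is vacuous because $t$ is odd), producing the first product in the statement.

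For $2^{r}\nmid u$ the orbit $\{x^t-\alpha^{ul_2},\,x^t-\alpha^{qul_2}\}$ has size~$2$, and its product is the trinomial
\[
\bigl(x^t-\alpha^{ul_2}\bigr)\bigl(x^t-\alpha^{qul_2}\bigr) \;=\; x^{2t}-\bigl(\alpha^{ul_2}+\alpha^{qul_2}\bigr)x^t+\theta^{ul_2}\in\Bbb F_q[x].
\]
To prove irreducibility over $\Bbb F_q$, take any root $\beta$: then $\beta^{t}=\alpha^{ul_2}\notin\Bbb F_q$, so $\Bbb F_{q^2}\subseteq\Bbb F_q(\beta)$, and since $x^t-\alpha^{ul_2}$ is irreducible over $\Bbb F_{q^2}$ by Lemma 2.2 we have $[\Bbb F_{q^2}(\beta):\Bbb F_{q^2}]=t$, whence $[\Bbb F_q(\beta):\Bbb F_q]=2t$ matches the trinomial's degree. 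Noting that $u\mapsto qu\pmod g$ is a fixed-point-free involution on $\{u:1\le u\le g,\,2^{r}\nmid u,\,\gcd(u,t)=1\}$ (coprimality is preserved because $rad(n)\mid q-1$ forces every prime of $t$ to divide $g$), the condition $u<\{qu\}_{g}$ selects exactly one representative per orbit, recovering $\mathcal R_{t}$ and the second product.

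The counting formulas follow by bookkeeping. For odd $t\mid m_{2}$ there are $(\varphi(t)/t)\gcd(n,q-1)$ fixed binomials of degree~$t$ and $(\varphi(t)/(2t))(2^{r}-1)\gcd(n,q-1)$ trinomials of degree~$2t$; for even $t\mid m_{2}$ no binomials are fixed, yielding $(\varphi(t)/(2t))2^{r}\gcd(n,q-1)$ trinomials. Summing over $t\mid m_{2}$, invoking multiplicativity of $\sum_{t\mid N}\varphi(t)/t=\prod_{p\mid N}\bigl(1+v_{p}(N)(p-1)/p\bigr)$, and separating the $2$-part $v_{2}(m_{2})$ from the odd primes of $m_{2}$, collapses to the stated closed form. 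The main technical hurdle is the dichotomy $2^{r}\mid u$ versus $2^{r}\nmid u$; once the two-adic computation $v_{2}(g)=r+1$ is established, everything else reduces to orbit-counting under the quadratic Frobenius.
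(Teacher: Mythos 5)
Your argument is correct. Note, however, that the paper offers no proof of this statement: it is Lemma 2.3, quoted verbatim from \cite[Corollary 2]{MVO}, so there is no internal proof to compare against. What you have produced is a self-contained derivation that in fact mirrors the strategy the paper deploys for its own main results (Theorems 3.3, 3.5, 4.2, 4.4): factor $x^n-1$ over a suitable extension field via Lemma 2.2, then descend to $\Bbb F_q$ by grouping Frobenius orbits as in Lemma 2.4. All the key computations check out: $v_2(\gcd(n,q^2-1))=r+1$ and $\gcd(n,q^2-1)=2^r\gcd(n,q-1)$ follow from $v_2(q-1)=1$, $v_2(q+1)\ge 2$, and $v_2(n)\ge 3$; the fixedness criterion $(q+1)\mid ul_2$ does reduce to $2^r\mid u$; the identity $2^r l_2=(q+1)l_1$ justifies the substitution $\alpha^{ul_2}=\theta^{vl_1}$; and the final sum $1+\tfrac{2^r-1}{2}+2^{r-2}v_2(m_2)=\tfrac12+2^{r-2}(2+v_2(m_2))$ recovers the closed form. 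Two small remarks. First, your appeal to Lemma 2.1 for the fixed binomials is both unnecessary and incomplete as stated (you only verify condition (iii)); the cleaner observation is that a polynomial with coefficients in $\Bbb F_q$ that is irreducible over $\Bbb F_{q^2}$ is automatically irreducible over $\Bbb F_q$, which is exactly what Lemma 2.2 already gives you. Second, your irreducibility argument for the trinomials via $[\Bbb F_q(\beta):\Bbb F_q]=2t$ is precisely the content of Lemma 2.4 with $k=2$, so you could have cited it directly.
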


\begin{lem}\cite[Lemma 14]{WZF}{\rm Let $n$ be a positive integer, $q$  a prime power coprime with $n$, and $k$  the smallest positive integer such that $q^{k} \equiv 1 \pmod {rad(n)}$. If $f (x)$ is an irreducible factor of $\Phi_{n} (x)$ over $\Bbb F_{q^{k}}$, then $f^{\sigma}(x)$ is also an irreducible factor of $\Phi_{n} (x)$ over $\Bbb F_{q^{k}}$, where $\sigma : \Bbb F_{q^{k}} \longrightarrow
\Bbb F_{q^{k}} ; \alpha \longmapsto \alpha ^{q}$, is the Frobenius mapping of $\Bbb F_{q^{k}} $ over $\Bbb F_{q} $. Moreover, $f(x)f^{\sigma}(x)f^{\sigma^{2}}(x)\cdots f^{\sigma^{k-1}}(x)$ is an irreducible factor of  $\Phi_{n} (x)$ over $\Bbb F_{q} $.}
\end{lem}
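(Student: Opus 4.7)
The plan splits naturally into the two assertions of the lemma.

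\medskip

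\noindent\textbf{First assertion.} The Frobenius $\sigma:\alpha\mapsto\alpha^q$ extends coefficient-wise to a ring automorphism of $\mathbb{F}_{q^k}[x]$, hence preserves irreducibility; in particular $f^\sigma(x)$ is irreducible in $\mathbb{F}_{q^k}[x]$. Since $\Phi_n(x)$ has integer coefficients (thus lies in the prime field $\mathbb{F}_p\subseteq\mathbb{F}_q$), it is fixed by $\sigma$, so $f(x)\mid\Phi_n(x)$ in $\mathbb{F}_{q^k}[x]$ gives $f^\sigma(x)\mid\sigma(\Phi_n(x))=\Phi_n(x)$. This step is essentially formal.

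\medskip

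\noindent\textbf{Second assertion.} Set $g(x)=f(x)f^\sigma(x)\cdots f^{\sigma^{k-1}}(x)$. I first check $g\in\mathbb{F}_q[x]$: since $\sigma^k$ is the identity on $\mathbb{F}_{q^k}$, applying $\sigma$ cyclically permutes the factors $f^{\sigma^i}$, so $g^\sigma=g$ and the coefficients lie in the fixed field $\mathbb{F}_q$ of $\langle\sigma\rangle$. Next, take a root $\alpha$ of $f(x)$ in $\overline{\mathbb{F}}_q$; because $f\mid\Phi_n$, $\alpha$ is a primitive $n$-th root of unity, and the roots of $f^{\sigma^j}$ are $\alpha^{q^j},\alpha^{q^{j+k}},\ldots,\alpha^{q^{j+(d-1)k}}$ with $d=\deg f=\ord_n(q^k)$. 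Hence $g(x)$ splits over the full Frobenius orbit of $\alpha$ under $\sigma$ in $\overline{\mathbb{F}}_q$.

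\medskip

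\noindent\textbf{Key arithmetic step.} To conclude that $g$ is the minimal polynomial of $\alpha$ over $\mathbb{F}_q$, I need the orbit of $\alpha$ to have full size $kd$, equivalently $\ord_n(q)=k\cdot\ord_n(q^k)$. Using the standard identity $\ord_n(q^k)=\ord_n(q)/\gcd(\ord_n(q),k)$, this reduces to proving $k\mid\ord_n(q)$. But $k=\ord_{rad(n)}(q)$ and $rad(n)\mid n$ imply $\ord_{rad(n)}(q)\mid\ord_n(q)$, giving exactly what is needed. This divisibility claim is the crux of the argument; everything else is unwinding definitions.

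\medskip

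\noindent\textbf{Wrap-up.} With $\deg g=kd=\ord_n(q)=[\mathbb{F}_q(\alpha):\mathbb{F}_q]$, the minimal polynomial of $\alpha$ over $\mathbb{F}_q$ has degree $\deg g$; since that minimal polynomial divides $g$ (as $g(\alpha)=0$ and $g\in\mathbb{F}_q[x]$) and both are monic of the same degree, they coincide, proving $g$ is irreducible over $\mathbb{F}_q$. The main obstacle is really only the arithmetic step $k\mid\ord_n(q)$; once that is in hand the Galois-orbit argument delivers irreducibility immediately.
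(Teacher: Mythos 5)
Your proof is correct. Note that the paper itself offers no proof of this statement: it is imported verbatim as \cite[Lemma 14]{WZF} (Lemma 2.4 here), so there is no in-paper argument to compare against. Your argument is the standard and self-contained one: the first assertion is formal since $\sigma$ acts as a coefficient-wise automorphism of $\Bbb F_{q^k}[x]$ fixing $\Phi_n(x)\in\Bbb F_p[x]$, and for the second assertion you correctly identify the crux, namely that $k=\ord_{rad(n)}(q)$ divides $\ord_n(q)$ (immediate from $rad(n)\mid n$), which via $\ord_n(q^k)=\ord_n(q)/\gcd(\ord_n(q),k)$ gives $\deg g=k\cdot\deg f=\ord_n(q)=[\Bbb F_q(\alpha):\Bbb F_q]$, forcing $g$ to equal the minimal polynomial of $\alpha$ over $\Bbb F_q$. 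The only ingredients you invoke without proof --- that every irreducible factor of $\Phi_n$ over $\Bbb F_{q^k}$ has degree $\ord_n(q^k)$, and that a primitive $n$-th root of unity generates $\Bbb F_{q^{\ord_n(q)}}$ over $\Bbb F_q$ --- are standard (e.g.\ Theorem 2.47 of \cite{LN}), so the argument stands as a complete proof of the cited lemma.
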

It is easy to verify the following result.

\begin{lem}{\rm If $n$ is  a positive integer and $a,b$ are two positive divisors of $n$, then $\gcd(\frac n a, \frac n b)=\frac n{{lcm}(a,b)}$, where $\mbox{lcm}(a,b)$ denotes the least common multiple of $a$ and $b$.}
\end{lem}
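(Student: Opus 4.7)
The plan is to reduce the identity to a prime-by-prime verification via unique factorization. Two positive integers coincide if and only if their $p$-adic valuations $v_p$ agree for every prime $p$, so it suffices to check that $v_p\bigl(\gcd(n/a, n/b)\bigr) = v_p\bigl(n/\mathrm{lcm}(a,b)\bigr)$ for every prime $p$ (the primes not dividing $n$ contribute zero on both sides automatically).

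Fix such a prime $p$. I would invoke three standard facts: (i) $v_p(\gcd(x,y)) = \min(v_p(x), v_p(y))$, (ii) $v_p(\mathrm{lcm}(x,y)) = \max(v_p(x), v_p(y))$, and (iii) $v_p(m/d) = v_p(m) - v_p(d)$ whenever $d \mid m$. Since $a \mid n$ and $b \mid n$, and since the inequality $\max(v_p(a), v_p(b)) \leq v_p(n)$ forces $\mathrm{lcm}(a,b) \mid n$, all three quotients $n/a$, $n/b$, $n/\mathrm{lcm}(a,b)$ are positive integers, so (iii) applies to each of them.

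Substituting into the left-hand side yields
\[
v_p\bigl(\gcd(n/a, n/b)\bigr) = \min\bigl(v_p(n) - v_p(a),\; v_p(n) - v_p(b)\bigr) = v_p(n) - \max\bigl(v_p(a), v_p(b)\bigr),
\]
which equals $v_p(n) - v_p(\mathrm{lcm}(a,b)) = v_p\bigl(n/\mathrm{lcm}(a,b)\bigr)$. Since $p$ is an arbitrary prime, the two integers agree.

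There is no real obstacle here, in line with the author's remark that the identity is easy to verify: it is a direct consequence of unique factorization, once the valuation dictionary is in place. The only piece of bookkeeping worth highlighting is the divisibility $\mathrm{lcm}(a,b) \mid n$, which ensures that the right-hand side is genuinely a positive integer and follows immediately from $a \mid n$ and $b \mid n$.
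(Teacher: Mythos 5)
Your proof is correct and complete; the valuation argument ($v_p(\gcd) = \min$, $v_p(\mathrm{lcm}) = \max$, and $\min(x-s,x-t) = x - \max(s,t)$) is exactly the standard verification. The paper itself offers no proof — it only remarks that the lemma "is easy to verify" — so your write-up simply supplies the canonical argument the authors had in mind, including the worthwhile observation that $\mathrm{lcm}(a,b)\mid n$ makes the right-hand side an integer.
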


\section{Case: $w=w_1=w_2$}





In this section,  we always assume that  $w$ is  a prime and $\ord_{rad(n)}(q)=w^2$.


Suppose that $x^{n}-1$ has an irreducible factorization over $\Bbb F_{q^{w^2}}$ as follows:
\begin{equation}x^{n}-1=\prod\limits_{d|n}\Phi_{d}(x)=\prod\limits_{d|n}f_{d,1}(x)f_{d,2}(x)\cdots f_{d,s_{d}}(x).\end{equation}

 For each positive  divisor $d$ of $n$,  the order of $q$ modulo $rad(d)$ is $1$,  $w$, or $w^2$ by $rad(d)| rad(n)$ and $w$ a prime. We define two sets  $$D_1=\{d\mid n: 1\le  d\le n, rad(d)|(q-1)\},D_w=\{d\mid n: 1\le  d\le n, d\notin D_1, rad(d)|(q^w-1)\}.$$ Those irreducible polynomials in (3.1) can be divided into three parts:
$$X_{(0)}=\{f_{d,j}(x):d\in D_1, 1\leqslant j\leqslant s_{d}\},X_{(1)}=\{f_{d,j}(x):d\in D_w, 1\leqslant j\leqslant s_{d}\},$$
$$ X_{(2)}=\{f_{d,j}(x):d\notin D_1\cup D_w, 1\leqslant j\leqslant s_{d}\}.$$
 From \cite{WZF}, there are equivalence relations $\sim_1$ and  $\sim_2$ on $X_{(1)}$ and $ X_{(2)}$, respectively:   $$f_{d,i}(x)\sim_1 f_{d,j}(x)~ {if}~ and ~only~ if~ f_{d,j}(x)= f_{d,i}^{\sigma^{k}}(x) ~for ~some~k, 0 \leqslant k\leqslant w - 1,$$  $$f_{d,i}(x)\sim_2 f_{d,j}(x)~ {if}~ and ~only~ if~ f_{d,j}(x)= f_{d,i}^{\sigma^{k}}(x) ~for ~some~k, 0 \leqslant k\leqslant w^2 - 1. $$
By  Lemma 2.4, we have the irreducible factorization of $x^{n}-1$ over $\Bbb F_{q}$:
$$\prod\limits_{g(x)\in X_{(0)} }g(x) \prod\limits_{f(x)\in X^{\ast}_{(1)} } \prod\limits_{k=0 }^{w-1}f^{\sigma^k}(x)\prod\limits_{h(x)\in X^{\ast}_{(2)} } \prod\limits_{k=0 }^{w^2-1}h^{\sigma^k}(x),$$
 where $X^{\ast}_{(1)}$ and $X^{\ast}_{(2)}$ are complete systems  of equivalence class representatives of $X_{(1)}$ and $X_{(2)}$ relative to $\sim_1$ and  $\sim_2$, respectively.

 In the following, we consider the irreducible factorization of $x^n-1$ over $\Bbb F_q$ by two cases: (1) $w$ is an odd prime, (2) $w$=2.

  \subsection{ $w$  is an odd prime} $~$

  In the subsection, we always assume  that $w$ is  an odd prime and $\ord_{rad(n)}(q)=w^2$.
 
\begin{lem}{\rm Let $w$ be an odd prime.

$(1)$ Then   $$\gcd(q-1, \frac{q^{w}-1}{q-1})=\gcd(q-1, w)=\left\{\begin{array}{ll}
1, &\mbox{ if $w\nmid (q-1)$,}\\
w, &\mbox{ if $w|(q-1)$,}\\
\end{array}\right. $$ and $$\gcd(q-1, \frac{q^{w^2}-1}{q-1})=\gcd(q-1,w^2)=\left\{\begin{array}{ll}
1, &\mbox{ if $w\nmid (q-1)$,}\\
w, &\mbox{ if $v_w(q-1)=1$,}\\
w^2, &\mbox{ if $v_w(q-1)\geqslant 2$.}\\
\end{array}\right.$$

$(2)$ If  $w|(q-1)$, then  $v_w(\frac{q^{w}-1}{q-1})=1$ and $v_w(\frac{q^{w^2}-1}{q-1})=2$.}
\end{lem}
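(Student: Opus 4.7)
The proof is essentially a direct computation in two steps, leaning on the identity $\frac{q^m-1}{q-1}=1+q+q^2+\cdots+q^{m-1}$.

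For part (1), I would reduce this geometric sum modulo $q-1$. Since $q\equiv 1\pmod{q-1}$, each term satisfies $q^i\equiv 1\pmod{q-1}$, whence
$$\tfrac{q^w-1}{q-1}\equiv w\pmod{q-1}\quad\text{and}\quad\tfrac{q^{w^2}-1}{q-1}\equiv w^2\pmod{q-1}.$$
This immediately yields the two equalities $\gcd(q-1,\frac{q^w-1}{q-1})=\gcd(q-1,w)$ and $\gcd(q-1,\frac{q^{w^2}-1}{q-1})=\gcd(q-1,w^2)$. Since $w$ is prime, $\gcd(q-1,w)\in\{1,w\}$, giving the two cases in the first formula; and $\gcd(q-1,w^2)\in\{1,w,w^2\}$, determined precisely by whether $v_w(q-1)$ is $0$, $1$, or $\geq 2$, giving the three cases in the second formula.

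For part (2), assuming $w\mid(q-1)$, I would apply the Lifting-the-Exponent Lemma: for an odd prime $w$ with $w\mid(q-1)$ and $\gcd(q,w)=1$,
$$v_w(q^n-1)=v_w(q-1)+v_w(n).$$
Setting $n=w$ and $n=w^2$ yields $v_w(q^w-1)=v_w(q-1)+1$ and $v_w(q^{w^2}-1)=v_w(q-1)+2$. Subtracting $v_w(q-1)$ from both sides gives $v_w(\frac{q^w-1}{q-1})=1$ and $v_w(\frac{q^{w^2}-1}{q-1})=2$, as claimed. (If the authors prefer an elementary route, one could instead write $q=1+w^s t$ with $s=v_w(q-1)\geq 1$ and $\gcd(t,w)=1$, expand $q^i$ by the binomial theorem, and sum, using that $\frac{w(w-1)}{2}\equiv 0\pmod w$ for odd $w$ to conclude $\frac{q^w-1}{q-1}\equiv w\pmod{w^2}$; iterating yields the $w^2$ case.)

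The statement is essentially a bookkeeping lemma and the argument contains no real obstacle; the only thing to check carefully is that the odd-prime hypothesis on $w$ is used in exactly the places one expects (it is needed to ensure LTE applies, or equivalently to guarantee the $\binom{w}{2}$ term in the binomial expansion contributes an extra factor of $w$). All quantities involved are positive integers, so the verbatim identities above suffice.
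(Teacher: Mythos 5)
Your proposal is correct and follows essentially the same route as the paper: part (1) is the same reduction of the geometric sum modulo $q-1$, and part (2) via Lifting-the-Exponent is just the packaged form of the paper's direct computation, which writes $q=1+aw^r$ with $\gcd(a,w)=1$ and shows $q^w-1\equiv aw^{r+1}\pmod{w^{r+2}}$ and $q^{w^2}-1\equiv aw^{r+2}\pmod{w^{r+3}}$ --- exactly the elementary alternative you sketch in your parenthetical.
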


 \begin{proof} (1) Since $\frac{q^{w}-1}{q-1}=1+q+q^{2}+\cdots+q^{w-1}=w+(q-1)+(q^{2}-1)+\cdots+(q^{w-1}-1)$, $\gcd(q-1,\frac{q^{w}-1}{q-1})=\gcd(q-1, w)=1$ or $w$.
 Since $\frac{q^{w^2}-1}{q-1}=1+q+q^{2}+\cdots+q^{w^2-1}=w^2+(q-1)+(q^{2}-1)+\cdots+(q^{w^2-1}-1)$, $\gcd(q-1,\frac{q^{w^2}-1}{q-1})=\gcd(q-1, w^2)=1$ or $w$ or $w^2$.

 (2) If $w$ is an odd prime and $v_w(q-1)=r\ge 1$, i.e. $q=1+aw^r$ and $\gcd (a, w)=1$, then $q^{w}-1\equiv aw^{r+1} \pmod {w^{r+2}}$ and $q^{w^2}-1\equiv aw^{r+2} \pmod {w^{r+3}}$. Thus
 $v_w(\frac{q^w-1}{q-1})=1$   and $v_w(\frac{q^{w^2}-1}{q-1})=2$.
 \end{proof}

 \begin{lem} {\rm Let $n=w^{v_w(n)}n'$, where $w$ is odd and $\gcd(n', w)=1$.

 $(1)$ If $v_w(n)=0$ or $v_w(n)\geqslant v_w(q^{w^2}-1)$, then
$\gcd(n, q^{w^2}-1)=\gcd(n,q-1)\gcd(n, \frac{q^{w^2}-1}{q-1})=\gcd(n,q^w-1)\gcd(n, \frac{q^{w^2}-1}{q^w-1})$, $lcm(q-1,\gcd(n, q^{w^2}-1) )=(q-1)\gcd(n, \frac{q^{w^2}-1}{q-1})$, and $lcm(q^w-1,\gcd(n, q^{w^2}-1) )=(q^w-1)\gcd(n, \frac{q^{w^2}-1}{q^w-1})$.

  $(2)$ If  $1=v_w(n)<v_w(q^{w^2}-1)$, then
$\gcd(n, q^{w^2}-1)=\gcd(n/w,q-1)\gcd(n, \frac{q^{w^2}-1}{q-1})=\gcd(n/w,q^w-1)\gcd(n, \frac{q^{w^2}-1}{q^w-1}) $, $lcm(q-1,\gcd(n, q^{w^2}-1) )=\frac{q-1}{w}\gcd(n, \frac{q^{w^2}-1}{q-1})$, and $lcm(q^w-1,\gcd(n, q^{w^2}-1) )=\frac{q^w-1}{w}\gcd(n, \frac{q^{w^2}-1}{q^w-1})$.

   $(3)$ If  $2\leqslant v_w(n)<v_w(q^{w^2}-1)$, then
$\gcd(n, q^{w^2}-1)=\gcd(n/w^2,q-1)\gcd(n, \frac{q^{w^2}-1}{q-1})$ and $lcm(q-1,\gcd(n, q^{w^2}-1) )=\frac{q-1}{w^2}\gcd(n, \frac{q^{w^2}-1}{q-1})$.}
 \end{lem}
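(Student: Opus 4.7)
The plan is to verify every identity in parts (1), (2), and (3) prime-by-prime via $p$-adic valuations $v_p$, with Lemma 3.1 providing all the arithmetic input. The decisive observation is that, by Lemma 3.1(1), both $\gcd(q-1,\frac{q^{w^2}-1}{q-1})$ and $\gcd(q^w-1,\frac{q^{w^2}-1}{q^w-1})$ are pure powers of $w$; in particular, for every prime $p\neq w$ one has $\min(v_p(q-1),v_p(\frac{q^{w^2}-1}{q-1}))=0$, so the factorization $q^{w^2}-1=(q-1)\cdot\frac{q^{w^2}-1}{q-1}$ and its cousin through $q^w-1$ are ``locally coprime'' away from $w$. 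This reduces each claim to two independent checks: at $p\neq w$ and at $p=w$.

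First I would dispose of the $p\neq w$ case, which is uniform across parts (1)--(3). Whenever $A,B$ are positive integers with $\min(v_p(A),v_p(B))=0$, direct inspection gives $v_p(\gcd(n,AB))=v_p(\gcd(n,A))+v_p(\gcd(n,B))$ and $v_p(\mbox{lcm}(A,\gcd(n,AB)))=v_p(A)+v_p(\gcd(n,B))$. Applying this with $(A,B)=(q-1,\frac{q^{w^2}-1}{q-1})$, and then with $(A,B)=(q^w-1,\frac{q^{w^2}-1}{q^w-1})$, settles all ten identities at every $p\neq w$. Since passing from $n$ to $n/w$ or $n/w^2$, or from $q-1$ to $\frac{q-1}{w}$ or $\frac{q-1}{w^2}$, does not alter $v_p$ for $p\neq w$, the $p\neq w$ portion of parts (2) and (3) follows from the same calculation without extra work.

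The substance of the proof is at $p=w$. I would begin by noting that the section's standing hypothesis $\ord_{rad(n)}(q)=w^2$, combined with $w\mid n$, forces $\ord_w(q)\mid\gcd(w^2,w-1)=1$, so $v_w(q-1)\geq 1$ whenever $w\mid n$. Thus Lemma 3.1(2) applies in parts (2) and (3), giving $v_w(q^{w^2}-1)=v_w(q-1)+2$ and $v_w(q^w-1)=v_w(q-1)+1$. Part (1) then splits into $v_w(n)=0$, where everything in sight has $v_w=0$, and $v_w(n)\geq v_w(q^{w^2}-1)$, where both sides saturate at $v_w(q^{w^2}-1)$ (respectively $v_w(q^w-1)$ for the other variant). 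For parts (2) and (3), the hypothesis $v_w(n)=1$ (resp.\ $v_w(n)\geq 2$) together with the bound $v_w(n)<v_w(q-1)+2$ determines how the $w$-powers are distributed among $\gcd(n,q-1)$, $\gcd(n,\frac{q^{w^2}-1}{q-1})$, and the relevant $\mbox{lcm}$; pre-dividing $n$ by $w$ or $w^2$ (and $q-1$ by $w$ or $w^2$ in the lcm identities) removes exactly the $v_w$-overlap between the two factors and restores the multiplicativity. The main obstacle is therefore clerical rather than conceptual: at $p=w$ each of the ten identities reduces to a short $\min$/$\max$-check on three or four nonnegative integers, but with three subcases, two flavours (gcd and lcm), and the extra $q^w-1$ variant appearing in parts (1) and (2), care is needed to keep track of which small inequality controls which subcase.
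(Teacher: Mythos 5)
Your strategy --- reduce every identity to $p$-adic valuations, dispose of all primes $p\neq w$ at once via the ``local coprimality'' of $q-1$ and $\frac{q^{w^2}-1}{q-1}$ supplied by Lemma 3.1(1), and then do the $p=w$ bookkeeping using $v_w(q^w-1)=v_w(q-1)+1$ and $v_w(q^{w^2}-1)=v_w(q-1)+2$ --- is exactly the paper's route, only spelled out more carefully: the paper's own proof consists of little more than the observation that $w\mid n$ forces $w\mid (q-1)$ (which you rightly derive from the standing hypothesis $\ord_{rad(n)}(q)=w^2$) together with the relevant values of $v_w(\gcd(n,\frac{q^{w^2}-1}{q-1}))$ in each case. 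Your auxiliary identities for $v_p(\gcd(n,AB))$ and $v_p(\mathrm{lcm}(A,\gcd(n,AB)))$ when $\min(v_p(A),v_p(B))=0$ are correct, and the deferred checks at $p=w$ do go through for parts (1), (2), and the gcd identity of part (3).

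However, one of the ``clerical'' checks you postpone does not succeed: the lcm identity in part (3) fails in the sub-case $v_w(n)=v_w(q^{w^2}-1)-1=v_w(q-1)+1$, which the hypothesis $2\leqslant v_w(n)<v_w(q^{w^2}-1)$ permits. There $v_w\bigl(\mathrm{lcm}(q-1,\gcd(n,q^{w^2}-1))\bigr)=\max(v_w(q-1),v_w(n))=v_w(q-1)+1$, while the right-hand side has $w$-valuation $v_w(q-1)-2+2=v_w(q-1)$; indeed $\frac{q-1}{w^2}$ need not even be an integer. A concrete instance is $q=4$, $w=3$, $n=171=3^2\cdot 19$ (so $\ord_{57}(4)=9$): the left-hand side is $\mathrm{lcm}(3,171)=171$, but the right-hand side is $\frac{3}{9}\cdot 171=57$. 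The correct factor in this sub-case is $\frac{q-1}{w}$, which is what the paper actually uses downstream (Case 3 of the proof of Theorem 3.3 splits on whether $v_w(q^{w^2}-1)-v_w(n)$ equals $1$ or is at least $2$). So the gap is not in your method but in the final verification you wave off as routine: carried out honestly, it shows that part (3) as stated requires this extra sub-case distinction --- a point the paper's one-line proof glosses over as well.
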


\begin{proof} We only need to prove  (2) and (3). If $w|n$, then $w|(q^{}-1)$ and $w|(q^{w^2}-1)$. By Lemma 3.1, $v_w(\frac{q^{w}-1}{q-1})=1$ and $v_w(\frac{q^{w^2}-1}{q-1})=2$. If  $1=v_w(n)<v_w(q^{w^2}-1)$, then $w\nmid \gcd(n/w,q-1)$ and $w\nmid \gcd(n/w,q^w-1)$. If  $2\leqslant v_w(n)<v_w(q^{w^2}-1)$, then $v_w(\gcd(n, \frac{q^{w^2}-1}{q-1}))=2$ and {$v_w(\gcd(n, \frac{q^{w^2}-1}{q^w-1}))=1$}.
\end{proof}

First, we consider the case:  either $q\not\equiv3\pmod 4$ or $8\nmid n$.
 Let $m_{w^2}=\frac{n}{\gcd(n,q^{w^2}-1)}$, $l_{w^2}=\frac{q^{w^2}-1}{\gcd(n,q^{w^2}-1)}$, and  $\delta$ a  generator of $\Bbb F_{q^{w^2}}^*$. By Lemma 2.2, there is an irreducible factorization of $x^n-1$ over $\Bbb F_{q^{{w^2}}}$:
 \begin{equation}
 x^{n} -1=\prod\limits_{t|m_{w^2}} \prod_{\mbox{\tiny$
\begin{array}{c}
1\leqslant u\leqslant \gcd(n,q^{{w^2}}-1)\\
\gcd(u,t)=1\\
\end{array}$
}}(x^{t}-\delta^{ul_{w^2}}).
\end{equation}

  Next we will explicitly give the irreducible factorization of $x^{n}-1$ over $\Bbb F_{q}$. In fact, we investigate whether $\delta^{ul_{w^2}}\in \Bbb F_q$ and $\delta^{ul_{w^2}}\in \Bbb F_{q^w}\verb|\|  \Bbb F_{q}$.

\begin{thm} {\rm Suppose that $w$ is an odd prime, $\ord_{rad(n)}(q)={w^2}$, and either $q\not\equiv3 \pmod 4$ or $8\nmid n$.
 Let $n=w^{v_w(n)}n_1n_2n_3$, $rad(n_1)|(q-1)$, $rad(n_2)|\frac{q^{w}-1}{q-1}$, and $rad(n_3)|\frac{q^{w^2}-1}{q^w-1}$. Let $m_{w^2}=\frac{n}{\gcd(n,q^{w^2}-1)}$,  $m_{{w^2},1}=\frac{n_1}{\gcd(n_1, q-1)}$, $m_{{w^2},2}=\frac{n_1n_2}{\gcd(n_1n_2, q^w-1)}$, $l_{w^2}=\frac{q^{w^2}-1}{\gcd(n,q^{w^2}-1)}$, $l_{w}=\frac{q^{w}-1}{\gcd(n,q^{w}-1)}$, $l_{1}=\frac{q-1}{\gcd(n, q-1)}$,  $\delta$ a  generator of $\Bbb F^{\ast}_{q^{w^2}}$ satisfying  $\theta=\delta^{\frac{q^{w^2}-1}{q-1}}$, and $\pi$ a  generator of $\Bbb F^{\ast}_{q^{w}}$ satisfying  $\pi=\delta^{\frac{q^{w^2}-1}{q^w-1}}$. Then

$(1)$ The irreducible factorization of  $x^{n}-1$  over $\Bbb F_{q}$  is
\begin{eqnarray*}&&\prod_{\mbox{\tiny$
\begin{array}{c}
 t|m_{w^2,1}\\
\end{array}$
}}
 \prod_{\mbox{\tiny$
\begin{array}{c}
1\le u\le \gcd(n, q-1)\\
\gcd(u, t)=1
\end{array}$
}}(x^{t}-\theta^{ul_{1}})\\
&\cdot&\prod_{\mbox{\tiny$
\begin{array}{c}
t|m_{w^2,2}\\
\end{array}$
}}\prod_{\mbox{\tiny$
\begin{array}{c}
u_1\in \mathcal{S}^{(1)}_{t}
\end{array}$
}}
\prod\limits_{k=0}^{w-1} (x^{t}-\pi^{q^ku_1l_{w}})\cdot\prod_{\mbox{\tiny$
\begin{array}{c}
t|m_{w^2}\\
\end{array}$
}}\prod_{\mbox{\tiny$
\begin{array}{c}
u_2\in \mathcal{S}^{(2)}_{t}
\end{array}$
}}
\prod\limits_{k=0}^{w^2-1} (x^{t}-\delta^{q^ku_2l_{w^2}}),
\end{eqnarray*}
where  $$\mathcal{S}^{(1)}_{t}=\bigg\{u_1\in \Bbb N: \begin{array}{l}
1\le u_1\le \gcd(n,q^{w}-1),\gcd(u_1,t)=1,
 \\ \frac{q^{w}-1}{q-1}\nmid u_1l_{w}, u_1=\min\{u_1, qu_1, \cdots, q^{w-1}u_1\}_{\gcd(n, q^w-1)}\end{array}\bigg\}$$
and  $$\mathcal{S}^{(2)}_{t}=\bigg\{u_2\in \Bbb N: \begin{array}{l}
1\le u_2\le \gcd(n,q^{w^2}-1),\gcd(u_2,t)=1,
 \\ \frac{q^{w^2}-1}{q^w-1}\nmid u_2l_{w^2}, u_2=\min\{u_2, qu_2, \cdots, q^{w^2-1}u_2\}_{\gcd(n, q^{w^2}-1)}\end{array}\bigg\}.$$

$(2)$  The number of irreducible factors is
\begin{eqnarray*}&&\prod_{\mbox{\tiny$
\begin{array}{c}
p| m_{w^2,1}\\
p~prime\\
\end{array}$
}}(1+v_p(m_{w^2,1})\frac{p-1}p)\cdot \frac{w-1}{w}\cdot\gcd(n, q-1) \\&+&\prod_{\mbox{\tiny$
\begin{array}{c}
p| m_{w^2,2}\\
p~prime\\
\end{array}$
}}(1+v_p(m_{w^2,2})\frac{p-1}p)\cdot \frac{w-1}{w^2}\cdot\gcd(n, q^w-1) \\ &+&\prod_{\mbox{\tiny$
\begin{array}{c}
p| m_{w^2}\\
p~prime\\
\end{array}$
}}(1+v_p(m_{w^2})\frac {p-1}p) \cdot\frac{1}{w^2}\cdot\gcd(n, q^{w^2}-1).
\end{eqnarray*} }
\end{thm}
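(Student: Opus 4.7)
The plan is to obtain the $\Bbb F_q$-factorization by descent from Lemma 2.2 applied over $\Bbb F_{q^{w^2}}$. Since $w$ is odd, $q^{w^2}\equiv q\pmod 4$, so the hypothesis ``either $q\not\equiv 3\pmod 4$ or $8\nmid n$'' transfers to $\Bbb F_{q^{w^2}}$, yielding the factorization (3.2). The decisive step is to classify each irreducible binomial $x^t-\delta^{ul_{w^2}}$ by the smallest subfield of $\Bbb F_{q^{w^2}}$ containing the constant: $\delta^{ul_{w^2}}\in\Bbb F_q^*$ is equivalent to $\frac{\gcd(n,q^{w^2}-1)}{\gcd(n,q-1)}\mid u$, and $\delta^{ul_{w^2}}\in\Bbb F_{q^w}^*$ to $\frac{\gcd(n,q^{w^2}-1)}{\gcd(n,q^w-1)}\mid u$. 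The three resulting classes will correspond to the three products in the statement.

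For Class 1 the binomial already has coefficients in $\Bbb F_q$, so it remains irreducible over $\Bbb F_q$; writing $\delta^{ul_{w^2}}=\theta^{vl_1}$ via $\theta=\delta^{(q^{w^2}-1)/(q-1)}$ produces $x^t-\theta^{vl_1}$ with $v\in\{1,\ldots,\gcd(n,q-1)\}$ coprime to $t$. For Classes 2 and 3, Lemma 2.4 merges the Frobenius orbit, whose length is forced to be exactly $w$ (resp.\ $w^2$) because $w$ is prime and the constant is genuinely outside the smaller subfield, into one irreducible factor over $\Bbb F_q$ of degree $wt$ (resp.\ $w^2t$); choosing the orbit representative of smallest index gives the sets $\mathcal{S}^{(1)}_t$ and $\mathcal{S}^{(2)}_t$, while the conditions $\frac{q^w-1}{q-1}\nmid u_1l_w$ and $\frac{q^{w^2}-1}{q^w-1}\nmid u_2l_{w^2}$ translate the Class conditions into the parameters.

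The $t$-ranges must match: Class 1 factors appear only for $t\mid m_{w^2,1}$ and Class 2 only for $t\mid m_{w^2,2}$. Indeed $\gcd(u,t)=1$ combined with $\frac{\gcd(n,q^{w^2}-1)}{\gcd(n,q-1)}\mid u$ forces $\gcd(t,\frac{\gcd(n,q^{w^2}-1)}{\gcd(n,q-1)})=1$; by Lemmas 3.1 and 3.2 the prime content of this quotient is $rad(n_2n_3)$ together with a possible $w$-contribution, so intersecting with $t\mid m_{w^2}$ collapses to $t\mid m_{w^2,1}$. The Class 2 case is analogous.

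The counting formula is then reassembled by the telescoping decomposition
$$N_1+\frac{N_2}{w}+\frac{N_3}{w^2}=N_1\cdot\frac{w-1}{w}+(N_1+N_2)\cdot\frac{w-1}{w^2}+\frac{N_1+N_2+N_3}{w^2},$$
where $N_i$ denotes the number of Class $i$ binomials in (3.2). The quantity $N_1+N_2+N_3$ is exactly the total count of Lemma 2.2 over $\Bbb F_{q^{w^2}}$; parametrizing $u$ via $\theta$ and $\pi$ shows that $N_1$ (respectively $N_1+N_2$) equals $\gcd(n,q-1)\sum_{t\mid m_{w^2,1}}\varphi(t)/t$ (respectively $\gcd(n,q^w-1)\sum_{t\mid m_{w^2,2}}\varphi(t)/t$), and the identity $\sum_{t\mid m}\varphi(t)/t=\prod_{p\mid m}(1+v_p(m)(p-1)/p)$ yields the three product terms. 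The main obstacle is the careful bookkeeping of the prime $w$ in the $\gcd$-identities of Lemma 3.2---$w$ being the only prime that may appear in more than one of $q-1$, $(q^w-1)/(q-1)$, $(q^{w^2}-1)/(q^w-1)$---so that the three exponent-data $m_{w^2,1},m_{w^2,2},m_{w^2}$ correctly capture the supports of the three classes when reassembled.
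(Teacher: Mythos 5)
Your proposal is correct and follows essentially the same route as the paper: descend from the $\Bbb F_{q^{w^2}}$-factorization of Lemma 2.2, sort each binomial $x^t-\delta^{ul_{w^2}}$ by the least subfield ($\Bbb F_q$, $\Bbb F_{q^w}$, or neither) containing its constant, merge Frobenius orbits by Lemma 2.4, and count via the regrouping identity together with $\sum_{t\mid m}\varphi(t)/t=\prod_{p\mid m}(1+v_p(m)\frac{p-1}{p})$. Your uniform criterion $\frac{\gcd(n,q^{w^2}-1)}{\gcd(n,q-1)}\mid u$ is just a compact packaging of the paper's three-case analysis on $v_w(n)$ (it follows from Lemma 2.5 via $\gcd(\frac{q^{w^2}-1}{q-1},l_{w^2})=\frac{q^{w^2}-1}{\mathrm{lcm}(q-1,\gcd(n,q^{w^2}-1))}$), so the two arguments coincide in substance.
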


\begin{proof} Let $n=w^{v_w(n)}n_1n_2n_3$, $rad(n_1)|(q-1)$, $rad(n_2)|\frac{q^{w}-1}{q-1}$, and $rad(n_3)|\frac{q^{w^2}-1}{q^w-1}$. Let $m_{{w^2},1}=\frac{n_1}{\gcd(n_1, q-1)}$ and
$$m_{{w^2},2}=\frac{n_{1}n_2}{\gcd(n_1n_2,q^w-1)}=m_{{w^2},1}\cdot \frac{n_{2}}{\gcd(n_2,\frac{q^w-1}{q-1})}.$$
Then $$m_{w^2}=\frac{n}{\gcd(n,q^{w^2}-1)}=w^{v_w(n)-v_w(\gcd(n, q^{w^2}-1))}\cdot m_{w^2,2}\cdot  \frac{n_{3}}{\gcd(n_3,\frac{q^{w^2}-1}{q^w-1})} .$$

In (3.2), by  Lemma 2.2,  for each divisor $t$ of $m_{w^2}$ the number of irreducible polynomials of degree $t$ in $\Bbb F_{q^{w^2}}[x]$ is
$\frac{\varphi(t)}t\cdot\gcd(n, q^{w^2}-1)$, and   $x^t-\delta^{ul_{w^2}}\in \Bbb F_q[x]$ if and only if $\delta^{ul_{w^2}}\in \Bbb F_q$, namely, $\frac{q^{w^2}-1}{q-1}| ul_{w^2}$.  Moreover, if  $x^t-\delta^{ul_{w^2}}\notin \Bbb F_q[x]$, then    $x^t-\delta^{ul_{w^2}}\in\Bbb F_{q^w}[x]  $ if and only if $\delta^{ul_{w^2}}\in \Bbb F_{q^w} \verb|\|  \Bbb F_{q}$, which is equivalent to $\frac{q^{w^2}-1}{q^w-1}| ul_{w^2}$ and $\frac{q^{w^2}-1}{q-1}\nmid ul_{w^2}$.

In the following, we find these $u$, $1\le u\le \gcd(n, q^{w^2}-1)$ and  $\gcd(u,t)=1$,  such that either $\delta^{ul_{w^2}}\in \Bbb F_q$ or   $\delta^{ul_{w^2}}\in \Bbb F_{q^w} \verb|\|  \Bbb F_{q}$.
We proceed the proof in three cases of  $w$-exponential valuation of $n$.

{\bf Case 1.} $v_w(n)=0$ or $v_w(n)\geqslant v_w(q^{w^2}-1)$.  By Lemmas 2.5, 3.1, and 3.2,
$$\gcd(\frac{q^{w^2}-1}{q-1}, l_{w^2})=\frac{{q^{w^2}-1}}{lcm(q-1, \gcd(n, q^{w^2}-1))}=\frac{\frac{q^{w^2}-1}{q-1}}{\gcd(n, \frac{q^{w^2}-1}{q-1})}.$$
Then $x^t-\delta^{ul_{w^2}}\in \Bbb F_q[x]$   if and only if $\gcd(n, \frac{q^{w^2}-1}{q-1})| u$.

 Suppose that $t| m_{{w^2},1}$. Then   $x^t-\delta^{ul_{w^2}}\in\Bbb F_q[x]$ if and only if $u=\gcd(n, \frac{q^{w^2}-1}{q-1})u'$, $1\le u'\le \gcd(n, q-1)$ and $\gcd(u',t)=1$.  Let $p_{1},p_{2},\cdots,p_{k}$ be all distinct prime divisors of $t$. Then there exist $(1 - \frac{1}{p_{1}} ) \cdot \gcd(n,q -1)$ numbers of  $u'$, $1\le u'\le \gcd(n, q-1)$, such that $\gcd(u',p_{1})=1$.
Inductively we conclude that $(1-\frac{1}{p_{1}})(1-\frac{1}{p_{2}})\cdots (1-\frac{1}{p_{k}})\cdot \gcd(n,q-1)=\frac{\varphi(t)}{t}\cdot\gcd(n,q-1)$ numbers of $u'$ such that $\gcd(u',t)=1$.
Hence
there exist $\frac{\varphi(t)}t\cdot
\gcd(n, q-1)$   irreducible polynomials  of degree $t$ in $\Bbb F_{q^{w^2}}[x]$ that are in $\Bbb F_q[x]$. Moreover, there is no   polynomial  $x^t-\delta^{ul_{w^2}}\in \Bbb F_q[x]$ if  $t\nmid m_{w^2,1}$.

By Lemmas 2.5, 3.1, and 3.2,
$$  \gcd(\frac{q^{w^2}-1}{q^w-1}, l_{w^2})=\frac{{q^{w^2}-1}}{lcm(q^w-1, \gcd(n, q^{w^2}-1))}=\frac{\frac{q^{w^2}-1}{q^w-1}}{\gcd(n, \frac{q^{w^2}-1}{q^w-1})},$$
   $x^t-\delta^{ul_{w^2}}\notin \Bbb F_q[x]$
and $x^t-\delta^{ul_{w^2}}\in \Bbb F_{q^w}[x]$    if and only if $\gcd(n, \frac{q^{w^2}-1}{q^w-1})| u$ and $\gcd(n, \frac{q^{w^2}-1}{q-1})\nmid u$ if and only if $u=\gcd(n_{}, \frac{q^{w^2}-1}{q^w-1})u_1$, $1\le u_1\le \gcd(n, q^w-1)$, $\gcd(u_1,t)=1$ and $\frac{q^{w}-1}{q-1}\nmid u_1l_{w}$.
Hence by Lemma 2.4,
there exist $\frac{\varphi(t)}{wt}\cdot
(\gcd(n,q^w-1)-\gcd(n, q-1)) $  irreducible polynomials  of degree $wt$ in $\Bbb F_q[x]$.

  If  $x^t-\delta^{ul_{w^2}}\notin \Bbb F_{q^w}[x]$,  then by Lemma 2.4, the product of its $w^2$ conjugate polynomials is irreducible over $\Bbb F_q$.
 Hence there are  $\frac{\varphi(t)}{w^2t}\cdot (\gcd(n, q^{w^2}-1)-\gcd(n, q^w-1)$
   polynomials of degree $w^2t$ in $\Bbb F_q[x]$.

Suppose that  $t| m_{{w^2},2}$ and $t\nmid m_{{w^2},1}$. Similarly,  $x^t-\delta^{ul_{w^2}}\notin \Bbb F_{q}[x]$ and $x^t-\delta^{ul_{w^2}}\in \Bbb F_{q^w}[x]$
if and only if $u=\gcd(n_{}, \frac{q^{w^2}-1}{q^w-1})u_1$, $1\le u_1\le \gcd(n, q^w-1)$, $\gcd(u_1,t)=1$ and $\frac{q^{w}-1}{q-1}\nmid u_1l_{w}$. Hence
 there exist $\frac{\varphi(t)}{wt}\cdot
\gcd(n, q^w-1), \frac{\varphi(t)}{w^2t}\cdot (\gcd(n, q^{w^2}-1)-\gcd(n, q^w-1
))$   irreducible polynomials  of degrees $wt$  and   $w^2t$ in $\Bbb F_{q}[x]$, respectively.

    Suppose that  $t\nmid m_{w^2,2}$. Then
there is no  polynomial of degree $t$ and  $wt$ in $\Bbb F_q[x]$. So the number of irreducible polynomials of degree $w^2t$ in $\Bbb F_q[x]$ is
$\frac {\varphi(t)}{w^2t}\cdot \gcd(n,q^{w^2}-1).$

Hence we have the irreducible factorization of $x^n-1$ over $\Bbb F_q$.

Moreover, observe now that the function $\varphi(t)$ is a multiplicative function, then $\sum_{t\mid m_{w^2}}\frac {\varphi(t)}{t}$  is also
multiplicative and thus it is enough to calculate this sum for powers of primes. In this case
    we have
$\sum_{d|p^{k}}\frac{\varphi(d)}{d}=1+k(1-\frac 1 p).$ 
Finally, the number of irreducible factors of $x^n-1$ in $\Bbb F_q[x]$ is  
\begin{eqnarray*}&& \sum_{\mbox{\tiny$
\begin{array}{c}
t|m_{w^2,1}\\
\end{array}$
}}\frac{\varphi(t)}{w^2t}\cdot (\gcd(n, q^{w^2}-1)+(w-1)\gcd(n, q^{w}-1)+w(w-1)\gcd(n,q-1))\\
&+&\sum_{\mbox{\tiny$
\begin{array}{c}
t|m_{w^2,2}\\
t\nmid m_{w^2,1}\\
\end{array}$
}}\frac{\varphi(t)}{w^2t}\cdot (\gcd(n, q^{w^2}-1)+(w-1)\gcd(n, q^{w}-1))\\
&+&\sum_{\mbox{\tiny$
\begin{array}{c}
t \nmid m_{w^2,2}
\end{array}$
}}
\frac{\varphi(t)}{w^2t}\cdot\gcd(n, q^{w^2}-1)\\&=&\sum_{t|m_{w^2,1}}\frac{\varphi(t)}{t}\cdot \frac{w-1}{w}\cdot\gcd(n, q-1)+\sum_{t| m_{w^2,2}}\frac{\varphi(t)}{t}\cdot\frac{w-1}{w^2}\cdot\gcd(n, q^{w}-1)\\
&+& \sum_{t| m_{w^2}}\frac{\varphi(t)}{t}\cdot\frac{1}{w^2}\cdot\gcd(n, q^{w^2}-1)\\
&=&\prod_{\mbox{\tiny$
\begin{array}{c}
p| m_{w^2,1}\\
p~prime\\
\end{array}$
}}(1+v_p(m_{w^2,1})\frac{p-1}p)\cdot \frac{w-1}{w}\cdot\gcd(n, q-1) \end{eqnarray*}\begin{eqnarray*}
&+&\prod_{\mbox{\tiny$
\begin{array}{c}
p| m_{w^2,2}\\
p~prime\\
\end{array}$
}}(1+v_p(m_{w^2,2})\frac{p-1}p)\cdot \frac{w-1}{w^2}\cdot\gcd(n, q^w-1)\\
&+&\prod_{\mbox{\tiny$
\begin{array}{c}
p| m_{w^2}\\
p~prime\\
\end{array}$
}}(1+v_p(m_{w^2})\frac {p-1}p) \cdot\frac{1}{w^2}\cdot\gcd(n, q^{w^2}-1) .
\end{eqnarray*}

 {\bf Case 2.} $1=v_w(n)<v_w(q^{w^2}-1)$.  By Lemmas 2.5, 3.1, and 3.2,
$$\gcd(\frac{q^{w^2}-1}{q-1}, l_{w^2})=\frac{w\cdot\frac{q^{w^2}-1}{q-1}}{\gcd(n,\frac{q^{w^2}-1}{q-1})}\mbox{ and }\gcd(\frac{q^{w^2}-1}{q^w-1}, l_{w^2})=\frac{w\cdot\frac{q^{w^2}-1}{q^w-1}}{\gcd(n, \frac{q^{w^2}-1}{q^w-1})}.$$
Hence $x^t-\delta^{ul_{w^2}}\in \Bbb F_q[x]$ is irreducible   if and only if $\frac{\gcd(n, \frac{q^{w^2}-1}{q-1})}{w}\mid u$;
 $x^t-\delta^{ul_{w^2}}\notin \Bbb F_{q}[x]  $  and   $x^t-\delta^{ul_{w^2}}\in \Bbb F_{q^w}[x] $   if and only if $\frac {\gcd(n, \frac{q^{w^2}-1}{q^w-1})} {w}\mid u$ and $\frac{\gcd(n, \frac{q^{w^2}-1}{q-1})}{w}\nmid u$ .



{\bf Case 3.}   $2 \leqslant v_w(n)<v_w(q^{w^2}-1)$.
If  $v_w(q^{w^2}-1)-v_w(n)=1$,  then
$$\gcd(\frac{q^{w^2}-1}{q-1}, l_{w^2})=\frac{w\cdot\frac{q^{w^2}-1}{q-1}}{\gcd(n,\frac{q^{w^2}-1}{q-1})}\mbox{  and }\gcd(\frac{q^{w^2}-1}{q^w-1}, l_{w^2})=\frac{w\cdot\frac{q^{w^2}-1}{q^w-1}}{\gcd(n, \frac{q^{w^2}-1}{q^w-1})}.$$


If  $v_w(q^{w^2}-1)-v_w(n)\geqslant 2$,  then
$$\gcd(\frac{q^{w^2}-1}{q-1}, l_{w^2})=\frac{w^2\cdot\frac{q^{w^2}-1}{q-1}}{\gcd(n,\frac{q^{w^2}-1}{q-1})}\mbox{ and }\gcd(\frac{q^{w^2}-1}{q^w-1}, l_{w^2})=\frac{w\cdot\frac{q^{w^2}-1}{q^w-1}}{\gcd(n, \frac{q^{w^2}-1}{q^w-1})}.$$
Hence $x^t-\delta^{ul_{w^2}}\in \Bbb F_q[x]$ is irreducible   if and only if $\frac{\gcd(n, \frac{q^{w^2}-1}{q-1})}{w^2}\mid u$;  $x^t-\delta^{ul_{w^2}}\notin \Bbb F_{q}[x]  $ and  $x^t-\delta^{ul_{w^2}}\in \Bbb F_{q^w}[x]$   if and only if $\frac{\gcd(n, \frac{q^{w^2}-1}{q^w-1})}{w^2}\mid u$ and $\frac{\gcd(n, \frac{q^{w^2}-1}{q-1})}{w^2}\nmid u$.

Similarly, we can prove them.
\end{proof}

 \begin{exa}{\rm Suppose that $w=3, q=2$ and $n=73$. Then these positive integers satisfy the condition in Theorem $3.2$, i.e. $ord_{rad(n)}(q)=9$ and either $q\not\equiv3 \pmod 4$ or $8\nmid n$. By  calculate directly, $m_{w^2}=m_{w^2,1}=m_{w^2,2}=1$. Then the  irreducible factor number of $x^{73}-1$ over $\Bbb F_2$ is $\frac{73}9+\frac 2 9+\frac 2 3=9$ by Theorem 3.3, which is confirmed by Magma.  }
 \end{exa}

Second, we consider the case: $q\equiv 3\pmod 4$ and $8|n$.  Since $w$ is an odd prime,  $q^{w^2} \equiv3 \pmod 4$  if and only if $q \equiv3 \pmod 4$.

Let $r=\min\{v_2(n/2),v_2(q^{w^2}+1)\}=\min\{v_2(n/2),v_2(q+1)\}$, $m_{2{w^2}}=\frac{n}{\gcd(n, q^{2{w^2}}-1)}$, $l_{2{w^2}}=\frac{q^{2{w^2}}-1}{\gcd(n, q^{2{w^2}}-1)}$, $l_{w^2}=\frac{q^{w^2}-1}{\gcd(n, q^{w^2}-1)}$, $\lambda$ a  generator of $\Bbb F_{q^{2{w^2}}}^*$, and   $\delta=\lambda^{q^{w^2}+1}$.
Then the factorization of $x^{n} -1$ into irreducible factors in $\Bbb F_{q^{w^2}} [x ]$ is

\begin{equation}
\prod_{\mbox{\tiny$
\begin{array}{c}
t|m_{2{w^2}}\\
t~odd\\
\end{array}$
}} \prod_{\mbox{\tiny$
\begin{array}{c}
1\leqslant v\leqslant \gcd(n,q^{w^2}-1)\\
\gcd(v,t)=1\\
\end{array}$
}}(x^{t}-\delta^{vl_{{w^2}}})
\cdot\prod\limits_{t\mid m_{2{w^2}}} \prod\limits_{u\in \mathcal{R}_{t}} (x^{2t}-(\lambda^{ul_{2{w^2}}}+\lambda^{q^{w^2}ul_{2{w^2}}})x^{t}+\delta^{ul_{2{w^2}}})
\end{equation}
where
$$\mathcal{R}_t=\bigg\{u\in \Bbb N: \begin{array}{l}
1\le u\le \gcd(n,q^{2{w^2}}-1),\gcd(u,t)=1,
 \\ 2^{r}\nmid u,  u=min\{u, q^{w^2}u\}_{\gcd(n,q^{2{w^2}}-1)}\end{array}\bigg\}.$$

  Next we will explicitly give the irreducible factorization of $x^{n}-1$ over $\Bbb F_{q}$.

\begin{thm} {\rm Suppose that $w$ is an odd prime, $ord_{rad(n)}(q)=w^2$,  $q^{} \equiv 3\pmod 4$,  and $8| n$.
 Let $n=w^{v_w(n)}n_1n_2n_3$, $rad(n_1)|(q-1)$, $rad(n_2)|\frac{q^{w}-1}{q-1}$, and $rad(n_3)|\frac{q^{w^2}-1}{q^w-1}$. Let $m_{2w^2}=\frac{n}{\gcd(n,q^{2w^2}-1)}$, $m_{w^2}=\frac{n}{\gcd(n,q^{w^2}-1)}$,  $m_{{w^2},1}=\frac{n_1}{\gcd(n_1, q-1)}$, $m_{{w^2},2}=\frac{n_1n_2}{\gcd(n_1n_2, q^w-1)}$,
$l_{2w^2}=\frac{q^{2w^2}-1}{\gcd(n, q^{2w^2}-1)}$, $l_{w^2}=\frac{q^{w^2}-1}{\gcd(n,q^{w^2}-1)}$, $l_{w}=\frac{q^{w}-1}{\gcd(n,q^{w}-1)}$, $l_2=\frac{q^2-1}{\gcd(n,q^2-1)}$,  $l_{1}=\frac{q^{}-1}{\gcd(n,q^{}-1)}$, $r=min\{v_{2}(\frac n 2),v_{2}(q^{{w^2}}+1)\}$, $\lambda$ a  generator of $\Bbb F_{q^{2{w^2}}}^\ast$ satisfying $\delta=\lambda^{q^{{w^2}}+1}$, $\pi=\lambda^{\frac{q^{2{w^2}}-1}{q^w-1}}$, $\alpha=\lambda^{\frac{q^{2{w^2}}-1}{q^2-1}}$, and $\theta=\lambda^{\frac{q^{2w^2}-1}{q-1}}$. Then

$(1)$ The irreducible factorization of $x^{n}-1$ over $\Bbb F_{q}$ is

\begin{equation*}
\begin{split}
&\prod_{\mbox{\tiny$
\begin{array}{c}
 t|m_{w^2,1}\\
 t~ {odd}
\end{array}$
}}
 \prod_{\mbox{\tiny$
\begin{array}{c}
1\le v\le \gcd(n, q-1)\\
\gcd(v, t)=1
\end{array}$
}}(x^{t}-\theta^{vl_{1}})\\
&\cdot\prod_{\mbox{\tiny$
\begin{array}{c}
t|m_{w^2,2}\\  t~ {odd}
\end{array}$
}}\prod_{\mbox{\tiny$
\begin{array}{c}
v_1\in \mathcal{S}^{(1)}_{t}
\end{array}$
}}
\prod\limits_{k=0}^{w-1} (x^{t}-\pi^{q^kv_1l_{w}})
\cdot\prod_{\mbox{\tiny$
\begin{array}{c}
t|m_{2w^2}\\  t~ {odd}
\end{array}$
}}\prod_{\mbox{\tiny$
\begin{array}{c}
v_2\in \mathcal{S}^{(2)}_{t}
\end{array}$
}}
\prod\limits_{k=0}^{w^2-1} (x^{t}-\delta^{q^kv_2l_{w^2}})\\
& \cdot\prod_{\mbox{\tiny$
\begin{array}{c}
t| m_{w^2,1}\\u_1\in \mathcal{R}^{(1)}_t
\end{array}$
}}
(x^{2t}-(\alpha^{u_1l_{2}}+\alpha^{qu_1l_{2}})x^{t}+\theta^{u_1l_{2}}) \cdot\prod_{\mbox{\tiny$
\begin{array}{c}
t| m_{2w^2}\\
u_2\in\mathcal{ R}^{(2)}_{t}
\end{array}$
}}\prod\limits_{k=0}^{2w^2-1}(x^t-\lambda^{q^ku_2l_{2w^2}}),
\end{split}
\end{equation*}
where~ $$\mathcal{S}^{(1)}_{t}=\bigg\{v_1\in \Bbb N: \begin{array}{l}
1\le v_1\le \gcd(n,q^{w}-1),\gcd(v_1,t)=1,
 \\ \frac{q^{w}-1}{q-1}\nmid v_1l_{w}, v_1=\min\{v_1, qv_1, \cdots, q^{w-1}v_1\}_{\gcd(n, q^w-1)}\end{array}\bigg\},$$
 $$\mathcal{S}^{(2)}_{t}=\bigg\{v_2\in \Bbb N: \begin{array}{l}
1\le v_2\le \gcd(n,q^{w^2}-1),\gcd(v_2,t)=1,
 \\ \frac{q^{w^2}-1}{q^w-1}\nmid v_2l_{w^2}, v_2=\min\{v_2, qv_2, \cdots, q^{w^2-1}v_2\}_{\gcd(n, q^{w^2}-1)}\end{array}\bigg\},$$
$$\mathcal{R}^{(1)}_t=\bigg\{u_1\in \Bbb N: \begin{array}{l}
1\le u_1\le 2^{r}\gcd(n,q^{}-1),\gcd(u_1,t)=1,
 \\ 2^r\nmid u_1, u_1=\min\{u_1,q^{w^2}u_1\}_{\gcd(n, q^2-1)}\end{array}\bigg\},$$
and~$$ \mathcal{R}_t^{(2)}=\bigg\{u_2\in \Bbb N: \begin{array}{l}
1\le u_2\le \gcd(n,q^{2w^2}-1),\gcd(u_2,t)=1,   \frac{q^{2{w^2}}-1}{q^2-1}\nmid u_2l_{2{w^2}}
 \\  2^{r}\nmid u_2,  u_2=\min\{u_2, qu_2,\cdots,  q^{2w^2-1}u_2\}_{\gcd(n,q^{2w^2}-1)}\end{array}\bigg\}.$$


$(2)$
The number of irreducible factors of $x^n-1$ in $\Bbb F_q[x]$ is
\begin{eqnarray*}
&&\prod_{\mbox{\tiny$
\begin{array}{c}
p|m_{w^2,1}\\
p~odd~prime
\end{array}$
}}(1+v_p(m_{2w^2})\frac {p-1}p)\cdot \frac{A}{2w^2}\cdot\gcd(n, q-1)\\
&+&\prod_{\mbox{\tiny$
\begin{array}{c}
p|m_{w^2,2}\\
\end{array}$
}}
(1+v_p(m_{2w^2})\frac {p-1}p)\cdot \frac{w-1}{w^2}\cdot\gcd(n,q^{w}-1)\\&+&\prod_{\mbox{\tiny$
\begin{array}{c}
p|m_{2w^2}\\
p~odd~prime
\end{array}$
}}
(1+v_p(m_{2w^2})\frac {p-1}p)\cdot \frac{2^r+1+v_2(m_{2w^2})2^{r-1}}{2w^2}\cdot\gcd(n,q^{w^2}-1),
\end{eqnarray*}
where $A=2^{r-1}(2+v_2(m_{2w^2}))(w^2-1)+w^2-2w+1$.}
\end{thm}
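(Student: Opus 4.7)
The strategy is to mirror the proof of Theorem~3.3, but start from the trinomial-enhanced $\Bbb F_{q^{w^2}}$-factorization (3.3) supplied by Lemma~2.3, which applies because $w$ is odd and hence $q^{w^2}\equiv 3\pmod 4$ iff $q\equiv 3\pmod 4$. Each irreducible factor appearing in (3.3) lies in exactly one of $\Bbb F_q[x]$, $\Bbb F_{q^w}[x]\setminus \Bbb F_q[x]$, or $\Bbb F_{q^{w^2}}[x]\setminus \Bbb F_{q^w}[x]$; once this is decided, Lemma~2.4 assembles the $\Bbb F_q$-factorization by grouping each non-$\Bbb F_q$ factor with its Frobenius conjugates into orbits of length $w$, $w^2$, or (in the trinomial case) $2w^2$.

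For the binomial block $\prod_{t\text{ odd}}\prod_v (x^t-\delta^{vl_{w^2}})$ of (3.3), the analysis is identical to Theorem~3.3 restricted to odd $t$: $\delta^{vl_{w^2}}\in \Bbb F_q$ iff $\tfrac{q^{w^2}-1}{q-1}\mid vl_{w^2}$, and $\delta^{vl_{w^2}}\in\Bbb F_{q^w}\setminus\Bbb F_q$ iff $\tfrac{q^{w^2}-1}{q^w-1}\mid vl_{w^2}$ and $\tfrac{q^{w^2}-1}{q-1}\nmid vl_{w^2}$. Lemmas~2.5, 3.1, and~3.2 convert these into explicit conditions on $v$ in each of the three regimes $v_w(n)=0$ or $\geq v_w(q^{w^2}-1)$, $v_w(n)=1<v_w(q^{w^2}-1)$, and $2\leq v_w(n)<v_w(q^{w^2}-1)$. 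Reindexing via $\theta=\delta^{(q^{w^2}-1)/(q-1)}$ and $\pi=\delta^{(q^{w^2}-1)/(q^w-1)}$ produces the first three products of the theorem, with $\mathcal{S}^{(1)}_t$ and $\mathcal{S}^{(2)}_t$ picking one representative per Frobenius orbit.

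For the trinomial block, write $F_u(x)=x^{2t}-(\lambda^{ul_{2w^2}}+\lambda^{q^{w^2}ul_{2w^2}})x^t+\delta^{ul_{2w^2}}$. By construction $F_u$ is fixed by $\alpha\mapsto\alpha^{q^{w^2}}$, so $F_u\in\Bbb F_q[x]$ iff $\lambda^{ul_{2w^2}}\in\Bbb F_{q^2}$, that is $\tfrac{q^{2w^2}-1}{q^2-1}\mid ul_{2w^2}$. Combined with $2^r\nmid u$ already in $\mathcal{R}_t$, this singles out $\mathcal{R}^{(1)}_t$ (and forces $t\mid m_{w^2,1}$ so that the constant term $\delta^{ul_{2w^2}}$ also lies in $\Bbb F_q$); reindexing by $\alpha=\lambda^{(q^{2w^2}-1)/(q^2-1)}$ delivers the penultimate product. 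The main obstacle is to show that every remaining trinomial has a Frobenius orbit of length exactly $2w^2$, so that no intermediate descent to $\Bbb F_{q^{2}}$-, $\Bbb F_{q^{w}}$-, $\Bbb F_{q^{2w}}$-, or $\Bbb F_{q^{w^2}}$-coefficients can occur: the stabilizer of $F_u$ inside $\Gal(\Bbb F_{q^{2w^2}}/\Bbb F_q)$ is forced to contain $\sigma^{w^2}$ and must also fix the constant term $\delta^{ul_{2w^2}}$, and a case split on $v_w(n)$ via Lemmas~3.1 and~3.2 rules out the intermediate orbit lengths $w$, $2w$, and $w^2$. Lemma~2.4 then produces the final product with representatives $\mathcal{R}^{(2)}_t$.

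The counting formula is obtained by summing $\varphi(t)/(kt)$, with $k$ the orbit length, over each subfamily. Multiplicativity of $\varphi(t)/t$ and the identity $\sum_{d\mid p^k}\varphi(d)/d=1+k(1-1/p)$ convert each sum into the advertised product over primes $p\mid m$. Restricting to odd $t$ and the even-$t$ contribution in the trinomial block, weighted by $2^{r-1}(2+v_2(m_{2w^2}))$ exactly as in Lemma~2.3, combines with the descent counts from the binomial block over $m_{w^2,1}$ to assemble the coefficient $A=2^{r-1}(2+v_2(m_{2w^2}))(w^2-1)+w^2-2w+1$ multiplying $\gcd(n,q-1)$, while the other two summands $\gcd(n,q^w-1)(w-1)/w^2$ and $\gcd(n,q^{w^2}-1)(2^r+1+v_2(m_{2w^2})2^{r-1})/(2w^2)$ arise from the $w$- and $w^2$-orbit contributions respectively.
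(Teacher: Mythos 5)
Your overall architecture follows the paper's proof of this theorem step for step: start from the $\Bbb F_{q^{w^2}}$-factorization (3.3), decide for each binomial (resp.\ trinomial) factor over which intermediate field of $\Bbb F_{q^{2w^2}}/\Bbb F_q$ it is defined, and assemble Frobenius orbits via Lemma 2.4; your treatment of the binomial block and your identification of the $\Bbb F_q$-rational trinomials through the condition $\frac{q^{2w^2}-1}{q^2-1}\mid ul_{2w^2}$ are exactly the paper's. The problem is the step you yourself single out as ``the main obstacle'': the claim that every trinomial $F_u$ with $\frac{q^{2w^2}-1}{q^2-1}\nmid ul_{2w^2}$ has Frobenius orbit of length exactly $w^2$ (equivalently, that $x^t-\lambda^{ul_{2w^2}}$ has orbit length $2w^2$). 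You assert that a case split on $v_w(n)$ via Lemmas 3.1 and 3.2 rules out the intermediate length, but you give no argument, and those lemmas cannot supply one: they control only the $w$-part of the relevant gcd's and say nothing about the subfield $\Bbb F_{q^{2w}}$. If $\beta=\lambda^{ul_{2w^2}}$ has degree $e$ over $\Bbb F_q$, then $e\in\{2,2w,2w^2\}$; when $e=2w$, i.e.\ $\beta\in\Bbb F_{q^{2w}}\setminus(\Bbb F_{q^2}\cup\Bbb F_{q^{w^2}})$, the coefficients $\beta+\beta^{q^{w^2}}$ and $\beta^{q^{w^2}+1}$ of $F_u$ lie in $\Bbb F_{q^w}\setminus\Bbb F_q$ (because $w^2\equiv w\pmod{2w}$, so the stabilizer of $F_u$ is $\langle\sigma^w\rangle$), the orbit of $F_u$ has length $w$, and the resulting irreducible factor of $x^n-1$ over $\Bbb F_q$ has degree $2wt$ --- a degree that occurs nowhere in the asserted factorization.

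This case genuinely occurs whenever $\gcd\bigl(n,\frac{q^w-1}{q-1}\bigr)>1$. Take $q=3$, $w=3$, $n=8\cdot 13\cdot 757$: all hypotheses hold ($\ord_{rad(n)}(3)=\mathrm{lcm}(1,3,9)=9$, $3\equiv 3\pmod 4$, $8\mid n$), $52=4\cdot 13$ divides $n$ with $\ord_{52}(3)=\mathrm{lcm}(2,3)=6=2w$, so $\Phi_{52}(x)$ splits over $\Bbb F_3$ into four irreducible sextics. The corresponding $u$ (multiples of $2\cdot 757$ prime to $52$) satisfy $2^r\nmid u$ and $\frac{q^{2w^2}-1}{q^2-1}\nmid ul_{2w^2}$, hence fall into $\mathcal{R}^{(2)}_t$ with $t=1$, where $\prod_{k=0}^{17}(x-\lambda^{q^kul_{2w^2}})$ is the cube of an irreducible sextic rather than an irreducible polynomial of degree $18$. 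To be fair, the paper's own proof makes the same unjustified assertion at the same point (harmless in Example 3.6 only because $n_2=1$ there), so you have reproduced its argument faithfully; but as a proof your proposal is incomplete precisely where completeness is needed, and the missing block of degree-$2wt$ factors (with a corresponding $\gcd(n,q^{2w}-1)$ contribution to the count) cannot be dismissed by the case analysis you invoke.
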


\begin{proof}
Since $w$ is an odd integer and  $rad(n)|(q^{{w^2}}-1)$,  $$\gcd(n,q^{2w^2}-1)=\gcd(n/2, q^{{w^2}}+1)\gcd(n,q^{w^2}-1)=2^{r}\gcd(n,q^{w^2}-1),$$ where 
$r=\min\{v_2(n/2), v_2(q^{w^2}+1)\}=\min\{v_2(n/2),v_2(q+1)\}$.

Let $n=w^{v_w(n)}n_1n_2n_3$, $rad(n_1)|(q-1)$, $rad(n_2)|\frac{q^{w}-1}{q-1}$, and $rad(n_3)|\frac{q^{w^2}-1}{q^w-1}$. Let $m_{{w^2},1}=\frac{n_1}{\gcd(n_1, q-1)}$ and
$$m_{{w^2},2}=\frac{n_{1}n_2}{\gcd(n_1n_2,q^w-1)}=m_{{w^2},1}\cdot \frac{n_{2}}{\gcd(n_2,\frac{q^w-1}{q-1})}.$$
 Then $$\mbox{  }l_{2{w^2}}=\frac{q^{2{w^2}}-1}{\gcd(n,q^{2{w^2}}-1)}=\frac{q^{{w^2}}-1}{\gcd(n,q^{{w^2}}-1)}\cdot \frac{q^{{w^2}}+1}{\gcd(n/2,q^{{w^2}}+1)}=\frac{q^{{w^2}}+1}{2^{r}}l_{{w^2}} ,$$
  $$m_{2{w^2}}=\frac{n}{\gcd(n,q^{2{w^2}}-1)}=\frac{n}{\gcd(n/2,q^{{w^2}}+1)\gcd(n,q^{{w^2}}-1)}=\frac{m_{{w^2}}}{2^{r}},$$ $$ \mbox{ and }m_{w^2}=\frac{n}{\gcd(n,q^{w^2}-1)}=w^{v_w(n)-v_w(\gcd(n, q^{w^2}-1))}\cdot m_{w^2,2}\cdot  \frac{n_{3}}{\gcd(n_3,\frac{q^{w^2}-1}{q^w-1})}.$$


Firstly, we investigate the product in (3.3): 

$$\prod_{\mbox{\tiny$
\begin{array}{c}
t| m_{2{w^2}}\\
t~odd\\
\end{array}$
}} \prod_{\mbox{\tiny$
\begin{array}{c}
1\leqslant v\leqslant \gcd(n,q^{w^2}-1)\\
\gcd(v,t)=1\\
\end{array}$
}}(x^{t}-\delta^{vl_{{w^2}}}).$$
Note that $t$ is always odd.
Similarly,   we can obtain all irreducible polynomials of degrees $t$, $wt$ and $w^2t$ as follows.

Suppose that $t| m_{{w^2},1}$. Then the numbers of irreducible polynomials of degree $t$, $wt$ and $w^2t$ in $\Bbb F_q[x]$ are
 $$\frac{\varphi(t)}t\cdot
\gcd(n, q-1),\ \frac{\varphi(t)}{wt}\cdot
(\gcd(n,q^w-1)-\gcd(n, q-1)), $$ and   $$\frac{\varphi(t)}{w^2t}\cdot (\gcd(n, q^{w^2}-1)-\gcd(n, q^w-1)).$$

Suppose that  $t| m_{{w^2},2}$ and $t\nmid m_{{w^2},1}$. Then
there is no  polynomial of degree $t$ in $\Bbb F_q[x]$. Hence
 there exist $$\frac{\varphi(t)}{wt}\cdot
\gcd(n, q^w-1),\ \frac{\varphi(t)}{w^2t}\cdot (\gcd(n, q^{w^2}-1)-\gcd(n, q^w-1
))$$   irreducible polynomials  of degree $wt$  and      of degree $w^2t$ in $\Bbb F_{q}[x]$, respectively.

Suppose that  $t\nmid m_{w^2,2}$. Then
there is no  polynomial of degree $t$ and  $wt$ in $\Bbb F_q[x]$. So the number of irreducible polynomials of degree $w^2t$ in $\Bbb F_q[x]$ is
$\frac {\varphi(t)}{w^2t}\cdot \gcd(n,q^{w^2}-1).$

Secondly, we investigate the product in (3.3):

\begin{equation*}\prod\limits_{t\mid m_{2{w^2}}} \prod\limits_{u\in \mathcal{R}_{t}} (x^{2t}-(\lambda^{ul_{2{w^2}}}+\lambda^{q^{w^2}ul_{2{w^2}}})x^{t}+\delta^{ul_{2{w^2}}}). \end{equation*}



Next we proceed the proof by three cases.


 {\bf Case 1.}  $v_w(n)=0$ or $v_w(n)\geqslant v_w(q^{w^2}-1)$.
By Lemmas 2.5, 3.1 and 3.2,
\begin{eqnarray*}
\gcd(\frac{q^{2w^2}-1}{q-1}, l_{2w^2})=\frac{\frac{q^{2w^2}-1}{q-1}}{2^r\gcd(n,\frac{q^{w^2}-1}{q-1})}\mbox{ and } \gcd(\frac{q^{2{w^2}}-1}{q^2-1}, l_{2{w^2}})=\frac{\frac{q^{2{w^2}}-1}{q^2-1}}{\gcd(n, \frac{q^{w^2}-1}{q-1})}.
\end{eqnarray*}
Note that the irreducible polynomial $x^{2t}-(\lambda^{ul_{2{w^2}}}+\lambda^{q^{{w^2}}ul_{2{w^2}}})x^{t}+\delta^{ul_{2{w^2}}}=(x^t-\lambda^{ul_{2{w^2}}})(x^t-\lambda^{q^{w^2}ul_{2{w^2}}})\in \Bbb F_{q^{w^2}}[x]$ and $x^t-\lambda^{ul_{2{w^2}}}\notin  \Bbb F_{q} [x]$. Hence  $x^{2t}-(\lambda^{ul_{2{w^2}}}+\lambda^{q^{{w^2}}ul_{2{w^2}}})x^{t}+\delta^{ul_{2{w^2}}}\in \Bbb F_{q}[x]$ if and only if $x^t-\lambda^{ul_{2{w^2}}}\in \Bbb F_{q^2}[x]$ if and only if $\frac{q^{2{w^2}}-1}{q^2-1}|ul_{2{w^2}}$ and $\frac{q^{2{w^2}}-1}{q-1}\nmid ul_{2{w^2}}$  if and only if $u=\gcd(n_{}, \frac{q^{w^2}-1}{q-1})u_1$ for $1\le u_1\le 2^r\gcd(n, q-1)$, $\gcd(u_1,t)=1$ and $2^r\nmid u_1$, i.e.  $x^{2t}-(\lambda^{ul_{2{w^2}}}+\lambda^{q^{{w^2}}ul_{2{w^2}}})x^{t}+\delta^{ul_{2{w^2}}}=x^{2t}-(\alpha^{u_1l_2}+\alpha^{qu_1l_2})x^t+\theta^{u_1l_2}\in \Bbb F_q[x]$ is irreducible.

In Lemma 2.3, the number of irreducible polynomials of degree $2t$  depends on the parity of $t$. In the following, we divide the proof into two cases.

Suppose that $t|m_{w^2, 1}$.  If $t$ is odd,  there exist $\frac{\varphi(t)}{2t}\cdot(2^{r_{}}-1)\cdot\gcd(n,q-1)$ irreducible polynomials of degree $2t$ in $\Bbb F_{q^{w^2}}[x]$ that are in $\Bbb F_q[x]$.
  On the other hand, each irreducible polynomial  $x^{2t}-(\lambda^{ul_{2{w^2}}}+\lambda^{q^{{w^2}}ul_{2{w^2}}})x^{t}+\delta^{ul_{2{w^2}}}\in \Bbb F_{q^{w^2}}[x]$ with $\frac{q^{2{w^2}}-1}{q^2-1}\nmid ul_{2{w^2}}$,  a product of its $w^2$  conjugate irreducible polynomials  is irreducible in $\Bbb F_q[x]$, i.e. $\prod_{k=0}^{2w^2-1}(x^t-\lambda^{q^kul_{2{w^2}}})\in \Bbb F_q[x]$ is irreducible. From Lemma 2.3, the number of irreducible polynomials of degree $2t$ over $\Bbb F_{q^{w^2}}$ in (3.3) is $\frac{\varphi(t)}{2t}\cdot(2^{r_{}}-1)\cdot\gcd(n,q^{{w^2}}-1)$. Hence
   the number of irreducible polynomials of degree $2{w^2}t$ over  $\Bbb F_{q}$ is $$\frac{\varphi(t)}{2{w^2}t}\cdot(2^{r_{}}-1)\cdot(\gcd(n,q^{{w^2}}-1)-\gcd(n,q-1)).$$
If $t$ is even,  then $u_1$ must be an odd integer and the condition $2^r\nmid u_1$ automatically holds. From Lemma 2.3, the number of irreducible polynomials of degree $2t$ over $\Bbb F_{q^{w^2}}$ in (3.3) is $\frac{\varphi(t)}{2t}\cdot2^{r}\cdot\gcd(n,q^{{w^2}}-1).$ Hence, the numbers of irreducible polynomials of degree $2t$ and $2{w^2}t$ over  $\Bbb F_{q}$ are $$\frac{\varphi(t)}{2t}\cdot2^{r_{}}\cdot\gcd(n,q-1)\mbox { and }\frac{\varphi(t)}{2{w^2}t}\cdot2^{r_{}}\cdot(\gcd(n,q^{{w^2}}-1)-\gcd(n,q-1)).$$

Suppose that  $t\nmid m_{{w^2},1}$. Then there is no polynomial of degree $2t$ in $\Bbb F_q[x]$.   If $t$ is odd, then the number of irreducible polynomials of degree $2{w^2}t$ in $\Bbb F_q[x]$ is
$\frac{\varphi(t)}{2{w^2}t}\cdot(2^{r_{}}-1)\cdot\gcd(n,q^{{w^2}}-1).$
If $t$ is even, then the number of irreducible polynomials of degree $2w^2t$ in $\Bbb F_q[x]$ is
$\frac{\varphi(t)}{2{w^2}t}\cdot2^{r}\cdot\gcd(n,q^{{w^2}}-1).$

 Hence we have the irreducible factorization of $x^n-1$ over $\Bbb F_q$. Finally,  the number of irreducible factors of $x^n-1$ in $\Bbb F_q[x]$ is
\begin{eqnarray*}&&\sum_{\mbox{\tiny$
\begin{array}{c}
t|m_{{w^2},1}\\
t~odd
\end{array}$
}}
\frac{\varphi(t)}{2{w^2}t}\cdot((2^r+1)\gcd(n, q^{w^2}-1)+2(w-1)\gcd(n,q^w-1)\\
&+&((2^r-1)(w^2-1)+2w(w-1))\gcd(n, q-1))\\
&+&\sum_{\mbox{\tiny$
\begin{array}{c}
t|m_{{w^2},1}\\
t~even
\end{array}$
}}
\frac{\varphi(t)}{2{w^2}t}\cdot2^r\cdot(\gcd(n, q^{w^2}-1)+(w^2-1)\gcd(n, q-1))\\
&+&\sum_{\mbox{\tiny$
\begin{array}{c}
t|m_{{w^2},2}\\
t\nmid m_{{w^2},1}\\
t~odd
\end{array}$
}}
\frac{\varphi(t)}{2{w^2}t}\cdot((2^r+1)\gcd(n, q^{w^2}-1)+2(w-1)\gcd(n,q^w-1))\\&+&\sum_{\mbox{\tiny$
\begin{array}{c}
t|m_{{w^2},2}\\t\nmid m_{{w^2},1}\\
t~even
\end{array}$
}}
\frac{\varphi(t)}{2{w^2}t}\cdot2^r\cdot\gcd(n, q^{w^2}-1)
+\sum_{\mbox{\tiny$
\begin{array}{c}
t\nmid m_{w^2,2}\\
t~odd\\
\end{array}$
}}
\frac{\varphi(t)}{2w^2t}\cdot(2^r+1)\gcd(n, q^{w^2}-1)\\
&+&\sum_{\mbox{\tiny$
\begin{array}{c}
t\nmid m_{w^2,2}\\
t~even\\
\end{array}$
}}
\frac{\varphi(t)}{2w^2t}\cdot2^r\cdot\gcd(n, q^{w^2}-1)\\
&=&\prod_{\mbox{\tiny$
\begin{array}{c}
p|m_{w^2,1}\\
p~odd~prime
\end{array}$
}}(1+v_p(m_{2w^2})\frac {p-1}p)\cdot \frac{A}{2w^2}\cdot\gcd(n, q-1)\\
&+&\prod_{\mbox{\tiny$
\begin{array}{c}
p|m_{w^2,2}\\
\end{array}$
}}
(1+v_p(m_{2w^2})\frac {p-1}p)\cdot \frac{w-1}{w^2}\cdot\gcd(n,q^{w}-1)\\
&+&\prod_{\mbox{\tiny$
\begin{array}{c}
p|m_{2w^2}\\
p~odd~prime
\end{array}$
}}
(1+v_p(m_{2w^2})\frac {p-1}p)\cdot \frac{2^r+1+v_2(m_{2w^2})2^{r-1}}{2w^2}\cdot\gcd(n,q^{w^2}-1),
\end{eqnarray*}
where $A=2^{r-1}(2+v_2(m_{2w^2}))(w^2-1)+w^2-2w+1$.

 {\bf Case 2.}  $1=v_w(n)<v_w(q^{w^2}-1)$.
By Lemmas 2.5, 3.1, and 3.2,
$$\gcd(\frac{q^{2w^2}-1}{q-1}, l_{2w^2})=\frac{w\cdot\frac{q^{2w^2}-1}{q-1}}{2^r\gcd(n,\frac{q^{w^2}-1}{q-1})}\mbox{ and } \gcd(\frac{q^{2{w^2}}-1}{q^2-1}, l_{2{w^2}})=\frac{w\cdot\frac{q^{2{w^2}}-1}{q^2-1}}{\gcd(n, \frac{q^{w^2}-1}{q-1})}. $$



{\bf Case 3.}  $2 \leqslant v_w(n)<v_w(q^{w^2}-1)$.
If  $v_w(q^{w^2}-1)-v_w(n)=1$,  then
$$\gcd(\frac{q^{2w^2}-1}{q-1}, l_{2w^2})=\frac{w\cdot\frac{q^{2w^2}-1}{q-1}}{2^r\gcd(n,\frac{q^{w^2}-1}{q-1})}\mbox{ and }
\gcd(\frac{q^{2{w^2}}-1}{q^2-1}, l_{2{w^2}})=\frac{w\cdot\frac{q^{2{w^2}}-1}{q^2-1}}{\gcd(n, \frac{q^{w^2}-1}{q-1})}.$$

If  $v_w(q^{w^2}-1)-v_w(n)\geqslant 2$,  then
$$\gcd(\frac{q^{2w^2}-1}{q-1}, l_{2w^2})=\frac{w^2\cdot\frac{q^{2w^2}-1}{q-1}}{2^r\gcd(n,\frac{q^{w^2}-1}{q-1})}\mbox{ and } \gcd(\frac{q^{2{w^2}}-1}{q^2-1}, l_{2{w^2}})=\frac{w^2\cdot\frac{q^{2{w^2}}-1}{q^2-1}}{\gcd(n, \frac{q^{w^2}-1}{q-1})}.$$

Similarly, we can prove them.
 \end{proof}

\begin{exa} {\rm Suppose that $q=w=3$ and $n=6056$.  It is easy to check that these positive integers satisfying the condition in Theorem $3.4$, i.e. $ord_{rad(n)}(q)=9$ and  $q\equiv3 \pmod 4$ and $8| n$.
By  calculate directly,  $r=2$ and $m_{2w^2}=1$, $m_{w^2,1}=m_{w^2,2}=4$.  Hence the number of  irreducible factors  of $x^{n}-1$ over $\Bbb F_{3}$ is
$\frac{10\times 757}{2\times9}+\frac{4}{9}+\frac{32+12}{9}-\frac{8}{9}=425$ by Theorem 3.5. The result  is confirmed by Magma.  }


\end{exa}

\subsection{$w=2$.}$~$

 In the subsection, we always assume  that  $\ord_{rad(n)}(q)=4$.

 Similar to Lemmas 3.1 and 3.2,  the following lemma  is easy to verify.
\begin{lem}




{\rm
$(1)$ For $b\in \{0,1,2\}$, if $\gcd(n,q-1, \frac{q^{4}-1}{q-1})=2^{b} $, then
$$\gcd(n, q^{4}-1)=\left\{\begin{array}{ll}
\gcd(n, q^{}-1)\gcd(n, \frac{q^{4}-1}{q-1}), &\mbox{ if $v_2(n)\geqslant v_2(q^4-1)$ or $b=0$,}\\
\gcd(n, q-1)\gcd(n/2, \frac{q^{4}-1}{q-1}), &\mbox{ if $v_2(n)< v_2(q^4-1)$ and $b=1$,}\\
\gcd(n/4, q-1)\gcd(n, \frac{q^{4}-1}{q-1}), &\mbox{ if $v_2(n)< v_2(q^4-1)$ and $b=2$.}\\
\end{array}\right.$$

$(2)$   For $b\in \{0,1,2\}$, if $\gcd(n,q-1, \frac{q^{4}-1}{q-1})=2^{b} $, then
$$\gcd(n, q^{4}-1)=\left\{\begin{array}{ll}
\gcd(n, q^{2}-1)\gcd(n, \frac{q^{4}-1}{q^2-1}), &\mbox{ if $v_2(n)\geqslant v_2(q^4-1)$ or $b=0$,}\\
\gcd(n/2, q^2-1)\gcd(n, \frac{q^{4}-1}{q^2-1}), &\mbox{ if $v_2(n)< v_2(q^4-1)$ and $b\neq0$.}\\
\end{array}\right.$$}

\end{lem}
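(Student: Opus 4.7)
The plan is to mimic the strategy of Lemmas~3.1 and~3.2, reducing everything to a $2$-adic case analysis. First I would expand $\frac{q^{4}-1}{q-1}=1+q+q^{2}+q^{3}=4+(q-1)+(q^{2}-1)+(q^{3}-1)$, so that $\gcd(q-1,\frac{q^{4}-1}{q-1})=\gcd(q-1,4)\in\{1,2,4\}$. Similarly, $\gcd(q^{2}-1,\frac{q^{4}-1}{q^{2}-1})=\gcd(q^{2}-1,q^{2}+1)$ divides $2$. In either decomposition, the only prime that can appear simultaneously in both factors of $q^{4}-1$ is $2$. Consequently, for every odd prime $p$,
\[
v_{p}\!\bigl(\gcd(n,q^{4}-1)\bigr)=v_{p}\!\bigl(\gcd(n,q-1)\bigr)+v_{p}\!\bigl(\gcd(n,\tfrac{q^{4}-1}{q-1})\bigr),
\]
and the analogous identity holds with $q-1$ replaced by $q^{2}-1$. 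Both stated formulas therefore hold away from the prime~$2$, so the problem reduces to comparing $2$-adic valuations of the two sides.

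Next I would analyze the exponent of~$2$. Write $A=v_{2}(n)$, $B=v_{2}(q-1)$, $C=v_{2}(\tfrac{q^{4}-1}{q-1})$, so $v_{2}(q^{4}-1)=B+C$ and $b=v_{2}(\gcd(n,q-1,\tfrac{q^{4}-1}{q-1}))=\min(A,B,C)$. The quantity I need for the left-hand side of~(1) is $\min(A,B+C)$, while the naive right-hand side is $\min(A,B)+\min(A,C)$; the two agree precisely when $A\geq B+C$ (the ``$v_{2}(n)\geq v_{2}(q^{4}-1)$'' branch) or when $b=0$ (one of $A,B,C$ vanishes, and then the shared $2$-power is absent). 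Otherwise, the overshoot equals $\min(A,B)+\min(A,C)-\min(A,B+C)=b$, which must be compensated by dividing the appropriate factor by $2^{b}$. For $b=1$ one chooses to divide the factor involving $\frac{q^{4}-1}{q-1}$ (hence the $\gcd(n/2,\tfrac{q^{4}-1}{q-1})$ expression), while for $b=2$ the whole $4$ must come from $q-1$ (since $v_{2}(\tfrac{q^{4}-1}{q-1})=v_{2}((q+1)(q^{2}+1))=v_{2}(q+1)+1\geq 2$ in that case, with equality when $q\equiv 1\pmod{4}$), forcing the correction $\gcd(n/4,q-1)$. Part~(2) is treated identically after noting that $v_{2}(\gcd(q^{2}-1,q^{2}+1))\leq 1$, so there is at most one ``overlap bit'' to correct, hence only the two branches in the statement.

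The only delicate point—and where I expect to spend the most care—is verifying that the case splits in the statement exactly match the arithmetic dichotomy $\min(A,B+C)$ versus $\min(A,B)+\min(A,C)$, together with the corresponding split for~(2). Once the identity $\gcd(q-1,\tfrac{q^{4}-1}{q-1})=\gcd(q-1,4)$ (and its $q^{2}-1$ analogue) is in hand, the rest is a mechanical but careful bookkeeping of which factor absorbs the factor of $2^{b}$, entirely parallel to the odd-prime case of Lemma~3.1. No new ideas beyond those already used for Lemmas~3.1 and~3.2 are needed.
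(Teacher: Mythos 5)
Your overall strategy is the right one, and it matches what the paper intends: the paper offers no proof of this lemma at all, remarking only that it is ``easy to verify'' in the manner of Lemmas 3.1 and 3.2, so your reduction to odd primes (where the two factors of $q^{4}-1$ are coprime and the naive product formula holds) followed by a $2$-adic case analysis is exactly the intended argument, and the formulas you land on agree with the statement.

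There is, however, one intermediate identity in your plan that is false and would need to be repaired before the bookkeeping goes through. You assert that when $A<B+C$ and $b\geq 1$ the overshoot satisfies $\min(A,B)+\min(A,C)-\min(A,B+C)=b$. In general this identity fails precisely when $\max(B,C)<A<B+C$; in the only configuration where that window is nonempty here, namely $b=2$ (so $q\equiv 1\pmod 4$, $B\geq 2$, $C=2$) with $A=B+1$, the overshoot is $1$, not $2$. Concretely, take $q=5$ and $8\,\|\,n$: then $A=3$, $B=C=2$, $b=2$, and $\min(3,2)+\min(3,2)-\min(3,4)=1\neq 2$. The lemma's formula is still correct in that subcase, but only because the correction $\min(A,B)\mapsto\min(A-2,B)$ also lowers the first term by exactly $1$ rather than $2$ there; the two discrepancies coincide, just not for the reason you give. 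So the ``divide by $2^{b}$ to cancel an overshoot of $b$'' heuristic should be replaced by the direct check that, whenever $A<B+C$ and $b=2$, one has $A-2<B$ and $\min(A,C)=2$, whence $\min(A-2,B)+\min(A,C)=(A-2)+2=A=\min(A,B+C)$ (and the analogous one-line checks for $b=1$ and for part (2), where $C'=1$ makes the problematic window empty and your argument goes through verbatim). With that single repair the proof is complete and coincides with the computation the paper leaves to the reader.
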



By Lemmas 2.5, 3.8 and the proof of Theorem 3.3,  the following result holds.




\begin{thm}{\rm Suppose that $\ord_{rad(n)}(q)=4$.
 Let $n=2^{v_2(n)}n_1n_2n_3$, $rad(n_1)|(q-1)$, $rad(n_2)|(q+1)$, and $rad(n_3)|\frac{q^{4}-1}{q^2-1}$. Let $m_{4}=\frac{n}{\gcd(n,q^{4}-1)}$, $m_{{4},1}=\frac{n_1}{\gcd(n_1, q-1)}$, $m_{{4},2}=\frac{n_1n_2}{\gcd(n_1n_2, q^2-1)}$, $l_{4}=\frac{q^{4}-1}{\gcd(n,q^{4}-1)}$, $l_{2}=\frac{q^{2}-1}{\gcd(n,q^{2}-1)}$, $l_{1}=\frac{q-1}{\gcd(n, q-1)}$,  $\mu$ a generator of $\Bbb F^{\ast}_{q^{4}}$ satisfying $\alpha=\mu^{q^2+1}$ and $\theta= \mu^{\frac{q^4-1}{q-1}}$.
Then

$(1)$ The irreducible factorization of  $x^{n}-1$  over $\Bbb F_{q}$  is
\begin{eqnarray*}&&\prod_{\mbox{\tiny$
\begin{array}{c}
 t|m_{4,1}\\
\end{array}$
}}
 \prod_{\mbox{\tiny$
\begin{array}{c}
1\le u\le \gcd(n, q-1)\\
\gcd(u, t)=1
\end{array}$
}}(x^{t}-\theta^{ul_{1}})\\
&\cdot&\prod_{\mbox{\tiny$
\begin{array}{c}
t|m_{4,2}\\
\end{array}$
}}\prod_{\mbox{\tiny$
\begin{array}{c}
u_1\in \mathcal{S}^{(1)}_{t}
\end{array}$
}}
 (x^{2t}-(\alpha^{u_1l_{2}}+\alpha^{qu_1l_{2}})x^t+\theta^{u_1l_2})
\cdot\prod_{\mbox{\tiny$
\begin{array}{c}
t|m_{4}\\
\end{array}$
}}\prod_{\mbox{\tiny$
\begin{array}{c}
u_2\in \mathcal{S}^{(2)}_{t}
\end{array}$
}}
\prod\limits_{k=0}^{3} (x^{t}-\mu^{q^ku_2l_{4}})
\end{eqnarray*}
 where~$$\mathcal{S}^{(1)}_{t}=\bigg\{u_1\in \Bbb N: \begin{array}{l}
1\le u_1\le \gcd(n,q^{2}-1),\gcd(u_1,t)=1,
 \\ (q+1)\nmid u_1l_{2}, u_1=\min\{u_1, qu_1\}_{\gcd(n, q^2-1)}\end{array}\bigg\},$$
 $$\mathcal{S}^{(2)}_{t}=\bigg\{u_2\in \Bbb N: \begin{array}{l}
1\le u_2\le \gcd(n,q^{4}-1),\gcd(u_2,t)=1,
 \\ \frac{q^{4}-1}{q^2-1}\nmid u_2l_{4}, u_2=\min\{u_2, qu_2, q^2u_2, q^{3}u_2\}_{\gcd(n, q^{4}-1)}\end{array}\bigg\}.$$

$(2)$
The number of irreducible factors of $x^n-1$ in $\Bbb F_q[x]$ is
\begin{eqnarray*}&&\prod_{\mbox{\tiny$
\begin{array}{c}
p| m_{4,1}\\
p~prime\\
\end{array}$
}}(1+v_p(m_{4,1})\frac{p-1}p)\cdot \frac{\gcd(n, q-1) }{2} +\prod_{\mbox{\tiny$
\begin{array}{c}
p| m_{4,2}\\
p~prime\\
\end{array}$
}}(1+v_p(m_{4,2})\frac{p-1}p)\cdot \frac{\gcd(n, q^2-1) }{4}\\ &+&\prod_{\mbox{\tiny$
\begin{array}{c}
p| m_{4}\\
p~prime\\
\end{array}$
}}(1+v_p(m_{4})\frac {p-1}p) \cdot\frac{\gcd(n, q^{4}-1)}{4}.
\end{eqnarray*}   }

\end{thm}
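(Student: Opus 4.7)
The plan is to parallel the proof of Theorem 3.3 at the prime $w=2$, but with one crucial simplification: since $q$ odd forces $q^2\equiv 1\pmod 8$ (and if $q$ is even then $n$ is odd, hence $8\nmid n$), one has either $q^4\not\equiv 3\pmod 4$ or $8\nmid n$ unconditionally. Hence Lemma 2.2 applies directly over $\mathbb F_{q^4}$ with the generator $\mu$ of $\mathbb F_{q^4}^*$, yielding
\[x^n-1=\prod_{t\mid m_4}\prod_{\substack{1\le u\le \gcd(n,q^4-1)\\ \gcd(u,t)=1}}(x^t-\mu^{ul_4}),\]
and no trinomial contribution from Lemma 2.3 arises at this first stage.

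Next I will sort each factor $x^t-\mu^{ul_4}$ according to the smallest subfield $\mathbb F_{q^e}$ with $e\in\{1,2,4\}$ that contains $\mu^{ul_4}$. Since $\langle\theta\rangle=\mathbb F_q^*$ with $\theta=\mu^{(q^4-1)/(q-1)}$ and $\langle\alpha\rangle=\mathbb F_{q^2}^*$ with $\alpha=\mu^{q^2+1}$, the three cases correspond respectively to $\frac{q^4-1}{q-1}\mid ul_4$; $\frac{q^4-1}{q^2-1}\mid ul_4$ but $\frac{q^4-1}{q-1}\nmid ul_4$; and neither divisibility. By Lemma 2.4, the first group gives degree-$t$ irreducibles over $\mathbb F_q$, which can occur only when $t\mid m_{4,1}$; the second gives degree-$2t$ trinomials
\[(x^t-\mu^{ul_4})(x^t-\mu^{qul_4})=x^{2t}-(\alpha^{u_1l_2}+\alpha^{qu_1l_2})x^t+\theta^{u_1l_2},\]
which can occur only when $t\mid m_{4,2}$; and the third gives degree-$4t$ products of four Frobenius conjugates, as in the statement. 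Reparameterizing $u$ in each regime produces the index sets $\mathcal S_t^{(1)}$ and $\mathcal S_t^{(2)}$ exactly as written in the theorem.

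The main obstacle, just as in Theorem 3.3, is translating the divisibility conditions on $ul_4$ into explicit parameter ranges for $u$. Using Lemma 2.5 together with Lemma 3.8, the values $\gcd(\frac{q^4-1}{q-1},l_4)$ and $\gcd(\frac{q^4-1}{q^2-1},l_4)$ must be computed separately in the cases $v_2(n)=0$ or $v_2(n)\ge v_2(q^4-1)$, and in the subcases $v_2(n)<v_2(q^4-1)$ with $\gcd(n,q-1,\frac{q^4-1}{q-1})\in\{1,2\}$; this determines which multiples of the resulting gcd sweep out $\{1,\dots,\gcd(n,q-1)\}$ or $\{1,\dots,\gcd(n,q^2-1)\}$. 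For part (2), I will sum $\varphi(t)/t$ weighted by the orbit sizes over the disjoint $t$-ranges $\{t\mid m_{4,1}\}$, $\{t\mid m_{4,2},\,t\nmid m_{4,1}\}$ and $\{t\mid m_4,\,t\nmid m_{4,2}\}$, and then invoke the multiplicativity of $\sum_{t\mid N}\varphi(t)/t$ together with the identity $\sum_{d\mid p^k}\varphi(d)/d=1+k(p-1)/p$; this telescopes the three partial sums into the three displayed products of the counting formula.
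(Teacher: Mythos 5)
Your proposal is correct and follows essentially the same route as the paper: the paper's own justification of this theorem is literally ``By Lemmas 2.5, 3.8 and the proof of Theorem 3.3, the following result holds,'' i.e.\ apply Lemma 2.2 over $\Bbb F_{q^4}$, sort the binomials $x^t-\mu^{ul_4}$ by the smallest subfield $\Bbb F_q\subseteq\Bbb F_{q^2}\subseteq\Bbb F_{q^4}$ containing $\mu^{ul_4}$, glue Frobenius conjugates via Lemma 2.4, translate the divisibility conditions with Lemmas 2.5 and 3.7, and count by multiplicativity of $\sum_{t\mid N}\varphi(t)/t$. Your observation that $q^4\not\equiv 3\pmod 4$ always holds, so that Lemma 2.2 applies unconditionally and the only trinomials arise from pairing conjugate binomials, is exactly the simplification the paper relies on implicitly.
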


\begin{exa}{\rm  Let $w=2,q=3$, and $n=40$. It is easy to check that $ord_{rad(n)}(q)=4$. By Theorem 3.8, the irreducible factor number of $x^{40}-1$ over $\Bbb F_3$ is $10+2+1=13$, which is confirmed by Magma.  }

\end{exa}
\section{Case: $w_1\neq w_2$}

In this section, we always assume that $\ord_{rad(n)}(q)=w_1w_2$, where $w_1,w_2$ are two distinct  primes.


Suppose that   $x^n-1$ has an irreducible factorization over $\Bbb F_{q^{w_1w_2}}$:
\begin{equation}x^{n}-1=\prod\limits_{d|n}\Phi_{d}(x)=\prod\limits_{d|n}f_{d,1}(x)f_{d,2}(x)\cdots f_{d,s_{d}}(x).\end{equation} For each positive  divisor $d$ of $n$,  the order of $q$ modulo $rad(d)$ is $1$,  $w_1$, $w_2$ or $w_1w_2$ by $rad(d)| rad(n)$ and $w_1,w_2$ are two distinct  primes. We define three sets  $$D_1=\{d\mid n: 1\le  d\le n, rad(d)|(q-1)\},D_{w_1}=\{d\mid n: 1\le  d\le n, d\notin D_1, rad(d)|(q^{w_1}-1)\},$$  $$D_{w_2}=\{d\mid n: 1\le  d\le n, d\notin D_1, rad(d)|(q^{w_2}-1)\}.$$ Hence those irreducible polynomials in (4.1) can be divided into four parts:
$$X_{(0)}=\{f_{d,j}(x):d\in D_1, 1\leqslant j\leqslant s_{d}\},X_{(1)}=\{f_{d,j}(x):d\in D_{w_1}, 1\leqslant j\leqslant s_{d}\},$$
$$X_{(2)}=\{f_{d,j}(x):d\in D_{w_2}, 1\leqslant j\leqslant s_{d}\},\mbox{  }X_{(3)}=\{f_{d,j}(x):d\notin D_1\cup D_{w_1}\cup D_{w_2}, 1\leqslant j\leqslant s_{d}\}.$$
There are equivalence relations $\sim_1$,  $\sim_2$ and   $\sim_3$ on $X_{(1)}$, $ X_{(2)}$ and $ X_{(3)}$ as follows:   $$f_{d,i}(x)\sim_1 f_{d,j}(x)\mbox{ {if}~ {and} ~only~ if~} f_{d,j}(x)= f_{d,i}^{\sigma^{k}}(x) \mbox{ for ~some~} 0 \leqslant k\leqslant w_1 - 1,$$ $$f_{d,i}(x)\sim_2 f_{d,j}(x)\mbox{ {if}~ {and} ~only~ if~} f_{d,j}(x)= f_{d,i}^{\sigma^{k}}(x)  \mbox{ for ~some~} 0 \leqslant k\leqslant w_2 - 1,$$  $$f_{d,i}(x)\sim_3 f_{d,j}(x)\mbox{ {if}~ {and} ~only~ if~} f_{d,j}(x)= f_{d,i}^{\sigma^{k}}(x)  \mbox{ for ~some~}0 \leqslant k\leqslant w_1w_2 - 1. $$
By  Lemma 2.4,  the irreducible factorization of $x^{n}-1$ over $\Bbb F_{q}$ is
$$\prod\limits_{f(x)\in X_{(0)} }f(x) \prod\limits_{g(x)\in X^{\ast}_{(1)} } \prod\limits_{k=0 }^{w_1-1}g^{\sigma^k}(x)\prod\limits_{h(x)\in X^{\ast}_{(2)} } \prod\limits_{k=0 }^{w_2-1}h^{\sigma^k}(x)\prod\limits_{i(x)\in X^{\ast}_{(3)} } \prod\limits_{k=0 }^{w_1w_2-1}i^{\sigma^k}(x),$$
 where $X^{\ast}_{(1)}$, $X^{\ast}_{(2)}$  and $X^{\ast}_{(3)}$ are complete systems  of equivalence class representatives of $X_{(1)}$, $X_{(2)}$ and $X_{(3)}$ relative to $\sim_1$,  $\sim_2$ and  $\sim_3$, respectively.

  In the following, we consider the irreducible factorization of $x^n-1$ over $\Bbb F_q$ by two cases: (1) $w_1w_2$ is  odd, (2) $w_1w_2$ is  even.
    \subsection{$w_1w_2$ is an odd number} $~$

    In the subsection, we always assume  that $w_1,w_2$ are two odd distinct primes, and $\ord_{rad(n)}(q)=w_1w_2$.

\begin{lem}{\rm Let $w_1,w_2$ be two odd distinct primes. Then



$(1)$ If either $v_{w_1}(n)=0 $ or $v_{w_1}(n)\geqslant v_{w_1}( q^{w_1w_2}-1)$ and either $v_{w_2}(n)=0 $ or $v_{w_2}(n)\geqslant v_{w_i}( q^{w_1w_2}-1)$, then
\begin{eqnarray*}
\gcd(n, q^{w_1w_2}-1)&=&\gcd(n, q^{}-1)\gcd(n, \frac{q^{w_1w_2}-1}{q-1})
=\gcd(n, q^{w_1}-1)\gcd(n, \frac{q^{w_1w_2}-1}{q^{w_1}-1})\\
&=&\gcd(n, q^{w_2}-1)\gcd(n, \frac{q^{w_1w_2}-1}{q^{w_2}-1}).
\end{eqnarray*}

$(2)$ If either $v_{w_1}(n)=0 $ or $v_{w_1}(n)\geqslant v_{w_1}( q^{w_1w_2}-1)$ and  $1\leqslant v_{w_2}(n)<v_{w_2}( q^{w_1w_2}-1)$, then
\begin{eqnarray*}
\gcd(n, q^{w_1w_2}-1)&=&\gcd(n/w_2, q^{}-1)\gcd(n, \frac{q^{w_1w_2}-1}{q-1})=\gcd(n/w_2, q^{w_1}-1)\gcd(n, \frac{q^{w_1w_2}-1}{q^{w_1}-1})\\
&=&\gcd(n, q^{w_2}-1)\gcd(n, \frac{q^{w_1w_2}-1}{q^{w_2}-1}).
\end{eqnarray*}

$(3)$ If either $v_{w_2}(n)=0 $ or $v_{w_2}(n)\geqslant v_{w_2}( q^{w_1w_2}-1)$ and $1\leqslant v_{w_1}(n)<v_{w_1}( q^{w_1w_2}-1)$, then
\begin{eqnarray*}
\gcd(n, q^{w_1w_2}-1)&=&\gcd(n/w_1, q^{}-1)\gcd(n, \frac{q^{w_1w_2}-1}{q-1})=\gcd(n, q^{w_1}-1)\gcd(n, \frac{q^{w_1w_2}-1}{q^{w_1}-1})\\
&=&\gcd(n/w_1, q^{w_2}-1)\gcd(n, \frac{q^{w_1w_2}-1}{q^{w_2}-1}).
\end{eqnarray*}

$(4)$ If $1\leqslant v_{w_1}(n)<v_{w_1}( q^{w_1w_2}-1)$  and $1\leqslant v_{w_2}(n)<v_{w_2}( q^{w_1w_2}-1)$, then
\begin{eqnarray*}
\gcd(n, q^{w_1w_2}-1)&=&\gcd(n/w_1w_2, q^{}-1)\gcd(n, \frac{q^{w_1w_2}-1}{q-1})=\gcd(n/w_2, q^{w_1}-1)\gcd(n, \frac{q^{w_1w_2}-1}{q^{w_1}-1})\\
&=&\gcd(n/w_1, q^{w_2}-1)\gcd(n, \frac{q^{w_1w_2}-1}{q^{w_2}-1}).
\end{eqnarray*}


}

\end{lem}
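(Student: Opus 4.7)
The plan is to verify each of the three equalities in each of the four cases by checking $p$-adic valuations at every prime $p$. The overall strategy reduces to a prime-by-prime analysis via the elementary fact that for positive integers $n,A,B$, the splitting $\gcd(n, AB) = \gcd(n,A)\gcd(n,B)$ holds if and only if $\gcd(n, \gcd(A,B)) = 1$.

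First, for $d \in \{1, w_1, w_2\}$, I would compute
$$\gcd\Bigl(q^d - 1,\; \frac{q^{w_1 w_2}-1}{q^d - 1}\Bigr) = \gcd\bigl(q^d - 1,\; w_1 w_2 / d\bigr),$$
using that $\frac{q^{w_1 w_2}-1}{q^d - 1} = 1 + q^d + q^{2d} + \cdots + q^{(w_1 w_2/d - 1)d} \equiv w_1 w_2 / d \pmod{q^d - 1}$. In particular this gcd divides $w_1 w_2$, so for any prime $p \notin \{w_1, w_2\}$ the splitting is automatic. It therefore suffices to verify the $w_1$- and $w_2$-parts of both sides of each equality.

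Next, using the lifting-the-exponent principle for odd primes, I would tabulate $v_{w_i}(q^k - 1)$ for $k \in \{1, w_1, w_2, w_1 w_2\}$ whenever $w_i \mid q - 1$. This gives $v_{w_1}(q^{w_1 w_2}-1) = v_{w_1}(q-1) + 1 = v_{w_1}(q^{w_1}-1)$ and $v_{w_1}(q^{w_2}-1) = v_{w_1}(q-1)$, and symmetrically for $w_2$. In the alternative regime where $w_i \mid n$ but $w_i \nmid q-1$, the order $\ord_{w_i}(q)$ must divide both $w_1 w_2$ and $w_i - 1$, which forces $\ord_{w_i}(q) \in \{1, w_{3-i}\}$, and a parallel computation applies.

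With these valuations in hand, I would carry out the case analysis. In case (1), both $v_{w_1}(n)$ and $v_{w_2}(n)$ are extremal (either $0$ or $\geq v_{w_i}(q^{w_1 w_2}-1)$), so the naive splitting already gives the correct $w_i$-valuation for each $i$. In cases (2)--(4), at least one index $i$ satisfies $1 \leq v_{w_i}(n) < v_{w_i}(q^{w_1 w_2}-1)$; here, whenever the splitting exponent $d$ satisfies $w_i \nmid d$, the naive product $\gcd(n, q^d - 1)\gcd(n, \frac{q^{w_1 w_2}-1}{q^d-1})$ overcounts the $w_i$-contribution by exactly $1$, since both factors pick up a single power of $w_i$. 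This overshoot is corrected precisely by replacing $n$ with $n / w_i$ in the first factor, yielding the formulas as stated. The main obstacle is bookkeeping: three equalities times four cases times two distinguished primes produce many sub-verifications, though the $w_1 \leftrightarrow w_2$ symmetry cuts the work roughly in half.
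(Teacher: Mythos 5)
Your proposal is correct and follows essentially the same route as the paper: the paper states this lemma without proof, but its analogues (Lemmas 3.1, 3.2, 3.8 and 4.6) are established by exactly this computation, namely reducing $\gcd\bigl(q^d-1,\tfrac{q^{w_1w_2}-1}{q^d-1}\bigr)$ to $\gcd(q^d-1,\,w_1w_2/d)$ and then tracking the $w_1$- and $w_2$-adic valuations case by case. One minor caveat: the ``if and only if'' in your opening reduction is false in the ``only if'' direction (take $n=p^2$, $A=B=p$), and your phrase ``overcounts by exactly $1$'' silently assumes $w_i\mid q^d-1$, but neither issue affects the argument since only the sufficiency direction is used and the correction $n\mapsto n/w_i$ is a no-op on the $w_i$-part when $w_i\nmid q^d-1$.
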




First, we consider the case:  either $q\not\equiv3\pmod 4$ or $8\nmid n$.
 Let $m_{w_1w_2}=\frac{n}{\gcd(n,q^{w_1w_2}-1)}$, $l_{w_1w_2}=\frac{q^{w_1w_2}-1}{\gcd(n,q^{w_1w_2}-1)}$, and  $\delta$ a  generator of $\Bbb F_{q^{w_1w_2}}^*$. By Lemma 2.2, there is an irreducible factorization of $x^n-1$ over $\Bbb F_{q^{{w_1w_2}}}$:
 \begin{equation}
 x^{n} -1=\prod\limits_{t|m_{w_1w_2}} \prod_{\mbox{\tiny$
\begin{array}{c}
1\leqslant u\leqslant \gcd(n,q^{{w_1w_2}}-1)\\
\gcd(u,t)=1\\
\end{array}$
}}(x^{t}-\delta^{ul_{w_1w_2}}).
\end{equation}

  Next we will explicitly give the irreducible factorization of $x^{n}-1$ over $\Bbb F_{q}$.

\begin{thm}{\rm  Suppose that $w_1,w_2$ are two odd distinct primes, $\ord_{rad(n)}(q)=w_1w_2$, and either $q\not\equiv3 \pmod 4$ or $8\nmid n$.
Let $n=w_1^{v_{w_1}(n)}w_2^{v_{w_2}(n)}n_0n_1n_2n_3$, where
$rad(n_0)|(q-1)$,  $rad(n_1)|\frac{q^{w_1}-1}{q-1}$, $rad(n_2)|\frac{q^{w_2}-1}{q-1}$, and $\gcd(n_3,(q^{w_1}-1)(q^{w_2}-1))=1$. Let
   $m_{{w_1w_2},0}=\frac{n_0}{\gcd(n_0, q-1)}$, $m_{{w_1w_2},1}=\frac{w_1^{v_{w_1}(n)}n_0n_1}{\gcd(w_1^{v_{w_1}(n)}n_0n_1, q^{w_1}-1)}$,  $m_{{w_1w_2},2}=\frac{w_2^{v_{w_2}(n)}n_0n_2}{\gcd(w_2^{v_{w_2}(n)}n_0n_2, q^{w_2}-1)}$, $m_{w_1w_2}=\frac{n}{\gcd(n,q^{w_1w_2}-1)}$, $l_{w_1w_2}=\frac{q^{w_1w_2}-1}{\gcd(n,q^{w_1w_2}-1)}$, $l_{w_1}=\frac{q^{w_1}-1}{\gcd(n,q^{w_1}-1)}$, $l_{w_2}=\frac{q^{w_2}-1}{\gcd(n,q^{w_2}-1)}$, $l_{1}=\frac{q-1}{\gcd(n, q-1)}$,  $\delta$ a  generator of $\Bbb F^{\ast}_{q^{w_1w_2}}$, $\pi_1$ a  generator of $\Bbb F^{\ast}_{q^{w_1}}$ satisfying  $\pi_1=\delta^{\frac{q^{w_1w_2}-1}{q^{w_1}-1}}$, and $\pi_2$ a  generator of $\Bbb F^{\ast}_{q^{w_2}}$ satisfying  $\pi_2=\delta^{\frac{q^{w_1w_2}-1}{q^{w_2}-1}}$. Then

$(1)$ The irreducible factorization of  $x^{n}-1$  over $\Bbb F_{q}$  is
\begin{eqnarray*}&&\prod_{\mbox{\tiny$
\begin{array}{c}
 t|m_{w_1w_2,0}\\
\end{array}$
}}
 \prod_{\mbox{\tiny$
\begin{array}{c}
1\le u\le \gcd(n, q-1)\\
\gcd(u, t)=1
\end{array}$
}}(x^{t}-\theta^{ul_{1}})\cdot\prod_{\mbox{\tiny$
\begin{array}{c}
t|m_{w_1w_2,1}\\
\end{array}$
}}\prod_{\mbox{\tiny$
\begin{array}{c}
u_1\in \mathcal{S}^{(1)}_{t}
\end{array}$
}}
\prod\limits_{k=0}^{w_1-1}(x^{t}-\pi_1^{q^ku_1l_{w_1}})\\
&\cdot&\prod_{\mbox{\tiny$
\begin{array}{c}
t|m_{w_1w_2,2}\\
\end{array}$
}}\prod_{\mbox{\tiny$
\begin{array}{c}
u_2\in \mathcal{S}^{(2)}_{t}
\end{array}$
}}
\prod\limits_{k=0}^{w_2-1}(x^{t}-\pi_2^{q^ku_2l_{w_2}})\cdot\prod_{\mbox{\tiny$
\begin{array}{c}
t|m_{w_1w_2}\\
\end{array}$
}}\prod_{\mbox{\tiny$
\begin{array}{c}
u_3\in \mathcal{S}^{(3)}_{t}
\end{array}$
}}
\prod\limits_{k=0}^{w_1w_2-1} (x^{t}-\delta^{q^ku_3l_{w_1w_2}}),
\end{eqnarray*}
 where~$$\mathcal{S}^{(1)}_{t}=\bigg\{u_1\in \Bbb N: \begin{array}{l}
1\le u_1\le \gcd(n,q^{w_1}-1),\gcd(u_1,t)=1,
 \\ \frac{q^{w_1}-1}{q-1}\nmid u_1l_{w_1}, u_1=\min\{u_1, qu_1, \cdots, q^{w_1-1}u_1\}_{\gcd(n, q^{w_1}-1)}\end{array}\bigg\},$$
 $$\mathcal{S}^{(2)}_{t}=\bigg\{u_2\in \Bbb N: \begin{array}{l}
1\le u_2\le \gcd(n,q^{w_2}-1),\gcd(u_2,t)=1,
 \\ \frac{q^{w_2}-1}{q-1}\nmid u_2l_{w_2}, u_2=\min\{u_2, qu_2, \cdots, q^{w_2-1}u_2\}_{\gcd(n, q^{w_2}-1)}\end{array}\bigg\},$$
  $$\mathcal{S}^{(3)}_{t}=\bigg\{u_3\in \Bbb N: \begin{array}{l}
1\le u_3\le \gcd(n,q^{w_1w_2}-1),\gcd(u_3,t)=1,
 \\ \frac{q^{w_1w_2}-1}{q^{w_1}-1}\nmid u_3l_{w_1w_2}, \frac{q^{w_1w_2}-1}{q^{w_2}-1}\nmid u_3l_{w_1w_2},\\u_3=\min\{u_3, qu_3, \cdots, q^{w_1w_2-1}u_3\}_{\gcd(n, q^{w_1w_2}-1)}\end{array}\bigg\}.$$
$(2)$ The number of irreducible factors is
\begin{eqnarray*}&&\prod_{\mbox{\tiny$
\begin{array}{c}
p| m_{w_1w_2,0}\\
p~prime\\
\end{array}$
}}(1+v_p(m_{w_1w_2,0})\frac{p-1}p)\cdot \frac{(w_1-1)(w_2-1)}{w_1w_2}\cdot\gcd(n, q-1)\\
&+&\prod_{\mbox{\tiny$
\begin{array}{c}
p| m_{w_1w_2,1}\\
p~prime\\
\end{array}$
}}(1+v_p(m_{w_1w_2,1})\frac{p-1}p)\cdot \frac{w_2-1}{w_1w_2}\cdot\gcd(n, q^{w_1}-1) \\
&+&\prod_{\mbox{\tiny$
\begin{array}{c}
p| m_{w_1w_2,2}\\
p~prime\\
\end{array}$
}}(1+v_p(m_{w_1w_2,2})\frac{p-1}p)\cdot \frac{w_1-1}{w_1w_2}\cdot\gcd(n, q^{w_2}-1) \\&+& \prod_{\mbox{\tiny$
\begin{array}{c}
p| m_{w_1w_2}\\
p~prime\\
\end{array}$
}}(1+v_p(m_{w_1w_2})\frac {p-1}p) \cdot\frac{1}{w_1w_2}\cdot\gcd(n, q^{w_1w_2}-1) .\end{eqnarray*}   }
\end{thm}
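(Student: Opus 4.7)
The plan is to mirror the argument of Theorem 3.3, extended to handle the three intermediate subfields $\mathbb{F}_q \subset \mathbb{F}_{q^{w_1}}, \mathbb{F}_{q^{w_2}} \subset \mathbb{F}_{q^{w_1w_2}}$. The starting point is the irreducible factorization (4.2) of $x^n-1$ over $\mathbb{F}_{q^{w_1w_2}}$ supplied by Lemma 2.2. For each binomial $x^t - \delta^{ul_{w_1w_2}}$ appearing there, the coefficient $\delta^{ul_{w_1w_2}}$ lies in the subfield $\mathbb{F}_{q^d}$ (with $d \in \{1, w_1, w_2\}$) if and only if $\frac{q^{w_1w_2}-1}{q^d-1}$ divides $ul_{w_1w_2}$. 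I would translate this into a divisibility condition purely on $u$ by applying Lemma 2.5 to obtain $\gcd\bigl(\tfrac{q^{w_1w_2}-1}{q^d-1},\, l_{w_1w_2}\bigr) = \tfrac{(q^{w_1w_2}-1)/(q^d-1)}{\gcd(n,\,(q^{w_1w_2}-1)/(q^d-1))}$, which reduces the condition to $\gcd(n, \tfrac{q^{w_1w_2}-1}{q^d-1}) \mid u$.

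Next, Lemma 4.1 factors $\gcd(n, q^{w_1w_2}-1)$ as $\gcd(n, q^d-1) \cdot \gcd(n, \tfrac{q^{w_1w_2}-1}{q^d-1})$, up to explicit $w_i$-adjustments in the mixed-valuation cases. Partition the admissible $u$'s into four classes depending on whether $\delta^{ul_{w_1w_2}}$ lies in $\mathbb{F}_q$, $\mathbb{F}_{q^{w_1}}\setminus\mathbb{F}_q$, $\mathbb{F}_{q^{w_2}}\setminus\mathbb{F}_q$, or $\mathbb{F}_{q^{w_1w_2}}$ with none of the smaller-subfield conditions holding. Lemma 2.4 then collapses each $\sigma$-equivalence class $\{f^{\sigma^k}\}$ into a single $\mathbb{F}_q$-irreducible factor of degree $t$, $w_1 t$, $w_2 t$, or $w_1 w_2 t$ respectively. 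A compatibility check $t \mid m_{w_1w_2,i}$ is needed to guarantee nonemptiness of the corresponding class, and choosing minimum-orbit representatives produces the sets $\mathcal{S}_t^{(i)}$ listed in the statement.

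For the counting formula, I would sum over $t$ in each of the four ranges $t \mid m_{w_1w_2,0}$, $t \mid m_{w_1w_2,1}$, $t \mid m_{w_1w_2,2}$, $t \mid m_{w_1w_2}$ the contribution $\varphi(t)/(w_1w_2\, t)$ times the cardinality of the relevant $u$-class. A telescoping inclusion-exclusion reassembles the four pieces, and the multiplicative identity $\sum_{t\mid m}\varphi(t)/t = \prod_{p\mid m}\bigl(1 + v_p(m)\tfrac{p-1}{p}\bigr)$ yields the product form claimed.

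The main obstacle will be two-fold. First, the subfield intersection $\mathbb{F}_{q^{w_1}} \cap \mathbb{F}_{q^{w_2}} = \mathbb{F}_q$ must translate correctly at the divisibility level: both $\tfrac{q^{w_1w_2}-1}{q^{w_1}-1} \mid ul_{w_1w_2}$ and $\tfrac{q^{w_1w_2}-1}{q^{w_2}-1} \mid ul_{w_1w_2}$ must jointly force $\tfrac{q^{w_1w_2}-1}{q-1} \mid ul_{w_1w_2}$; this reduces to $\operatorname{lcm}\bigl(\tfrac{q^{w_1w_2}-1}{q^{w_1}-1}, \tfrac{q^{w_1w_2}-1}{q^{w_2}-1}\bigr) = \tfrac{q^{w_1w_2}-1}{q-1}$, which is verified using Lemma 2.5 together with $\gcd(q^{w_1}-1, q^{w_2}-1) = q-1$ (since $\gcd(w_1,w_2)=1$). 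Second, I would need to verify that the four valuation cases of Lemma 4.1 all yield the same final class cardinalities so that the clean product formula is case-independent; the $w_i$-adjustments must cancel uniformly between $\gcd(n, q^d-1)$ and the $w_i$ appearing in $\gcd\bigl(\tfrac{q^{w_1w_2}-1}{q^d-1}, l_{w_1w_2}\bigr)$.
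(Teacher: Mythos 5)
Your proposal follows essentially the same route as the paper's proof: starting from the Lemma 2.2 factorization over $\Bbb F_{q^{w_1w_2}}$, translating subfield membership of $\delta^{ul_{w_1w_2}}$ into divisibility conditions on $u$ via Lemmas 2.5 and 4.1, partitioning into the four subfield classes, collapsing Frobenius orbits with Lemma 2.4, and counting via the multiplicativity of $\sum_{t\mid m}\varphi(t)/t$. The two obstacles you flag (the $\mathrm{lcm}$ identity encoding $\Bbb F_{q^{w_1}}\cap\Bbb F_{q^{w_2}}=\Bbb F_q$, and the uniform cancellation of the $w_i$-adjustments across the four valuation cases of Lemma 4.1) are exactly the points the paper handles, the latter only by listing the adjusted gcd's case by case and asserting the rest is similar.
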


\begin{proof}

Let $n=w_1^{v_{w_1}(n)}w_2^{v_{w_2}(n)}n_0n_1n_2n_3$, where
$rad(n_0)|(q-1)$,  $rad(n_1)|\frac{q^{w_1}-1}{q-1}$, $rad(n_2)\mid\frac{q^{w_2}-1}{q-1}$, and $\gcd(n_3,(q^{w_1}-1)(q^{w_2}-1))=1$. Let
   $m_{{w_1w_2},0}=\frac{n_0}{\gcd(n_0, q-1)}$, $$m_{{w_1w_2},1}=\frac{w_1^{v_{w_1}(n)}n_0n_1}{\gcd(w_1^{v_{w_1}(n)}n_0n_1, q^{w_1}-1)}=w_1^{v_{w_1}(m_{w_1w_2,1})}m_{{w_1w_2},0}\cdot\frac{n_1}{\gcd(n_1, \frac{q^{w_1}-1}{q-1})},$$ and  $$m_{{w_1w_2},2}=\frac{w_2^{v_{w_2}(n)}n_0n_2}{\gcd(w_2^{v_{w_2}(n)}n_0n_2, q^{w_2}-1)}=w_2^{v_{w_2}(m_{w_1w_2,1})}m_{{w_1w_2},0}\cdot\frac{n_2}{\gcd(n_2, \frac{q^{w_2}-1}{q-1})}.$$  It is easy to verify that $\gcd(n_1,n_2)=1$ and
\begin{eqnarray*}
&&m_{w_1w_2}=\frac{n}{\gcd(n, q^{w_1w_2}-1)}\\
&=&w_2^{v_{w_2}(m_{w_1w_2})}m_{{w_1w_2},1}\cdot\frac{n_2n_3}{\gcd(n_2n_3, \frac{q^{w_1w_2}-1}{q^{w_1}-1})}=w_1^{v_{w_1}(m_{w_1w_2})}m_{{w_1w_2},2}\cdot\frac{n_1n_3}{\gcd(n_1n_3, \frac{q^{w_1w_2}-1}{q^{w_2}-1})}.
\end{eqnarray*}

In (4.2), by  Lemma 2.2, for each divisor $t$ of $m_{w_1w_2}$, the number of irreducible binomials of degree $t$ in $\Bbb F_{q^{w_1w_2}}[x]$ is
$\frac{\varphi(t)}t\cdot\gcd(n, q^{w_1w_2}-1)$ and
 $x^t-\delta^{ul_{w_1w_2}}\in\Bbb F_q[x]$ if and only if $\delta^{ul_{w_1w_2}}\in \Bbb F_q$ if and only if $\frac{q^{w_1w_2}-1}{q-1}| ul_{w_1w_2}$. Moreover, if $x^t-\delta^{ul_{w_1w_2}}\notin\Bbb F_q[x] $, then
 $x^t-\delta^{ul_{w_1w_2}}\in \Bbb F_{q^{w_1}}[x]$  if and only if $\delta^{ul_{w_1w_2}}\in \Bbb F_{q^{w_1}}\verb|\| \Bbb F_{q^{}}$ if and only if $\frac{q^{w_1w_2}-1}{q^{w_1}-1}| ul_{w_1w_2}$ and $\frac{q^{w_1w_2}-1}{q-1}\nmid ul_{w_1w_2}$; $x^t-\delta^{ul_{w_1w_2}}\in\Bbb F_{q^{w_2}}[x]$  if and only if $\delta^{ul_{w_1w_2}}\in \Bbb F_{q^{w_2}}\verb|\| \Bbb F_{q^{}}$ if and only if $\frac{q^{w_1w_2}-1}{q^{w_2}-1}| ul_{w_1w_2}$ and $\frac{q^{w_1w_2}-1}{q-1}\nmid ul_{w_1w_2}$.


In the following, we find these $u$, $1\le u\le \gcd(n, q^{w_1w_2}-1)$ and  $\gcd(u,t)=1$,  such that either $\delta^{ul_{w_1w_2}}\in \Bbb F_q$ or   $\delta^{ul_{w_1w_2}}\in \Bbb F_{q^{w_1}} \verb|\|  \Bbb F_{q}$ or $\delta^{ul_{w_1w_2}}\in \Bbb F_{q^{w_2}} \verb|\|  \Bbb F_{q}$.
We proceed the proof in  four cases.

{\bf Case 1.} either $v_{w_1}(n)=0 $ or $v_{w_1}(n)\geqslant v_{w_1}( q^{w_1w_2}-1)$ and either $v_{w_2}(n)=0 $ or $v_{w_2}(n)\geqslant v_{w_2}( q^{w_1w_2}-1)$.
By Lemmas 2.5 and 4.1,
$$\gcd(\frac{q^{w_1w_2}-1}{q-1}, l_{w_1w_2})=\frac{\frac{q^{w_1w_2}-1}{q^{w_1}-1}}{\gcd(n, \frac{q^{w_1w_2}-1}{q^{w_1}-1})}.$$
Then $x^t-\delta^{ul_{w_1w_2}}\in \Bbb F_q[x]$ is irreducible   if and only if $\gcd(n, \frac{q^{w_1w_2}-1}{q-1})| u$.

Suppose that  $t| m_{{w_1w_2},0}$. Then  $x^t-\delta^{ul_{w_1w_2}}\in\Bbb F_q[x]$ is irreducible if and only if $u=\gcd(n, \frac{q^{w_1w_2}-1}{q-1})u'$ for $1\le u'\le \gcd(n, q-1)$ and $\gcd(t,u')=1$. Hence
there exist $\frac{\varphi(t)}t\cdot
\gcd(n, q-1)$   irreducible polynomials  of degree $t$ in $\Bbb F_{q^{w_1w_2}}[x]$ that are in $\Bbb F_q[x]$. By the way, there is no polynomial of degree $t$ in  $\Bbb F_q[x]$ if $t\nmid m_{w_1w_2,0}$.

By Lemmas 2.5 and 4.1,  $$\gcd(\frac{q^{w_1w_2}-1}{q^{w_1}-1}, l_{w_1w_2})=\frac{\frac{q^{w_1w_2}-1}{q^{w_1}-1}}{\gcd(n, \frac{q^{w_1w_2}-1}{q^{w_1}-1})},$$
  $x^t-\delta^{ul_{w_1w_2}}\notin \Bbb F_q[x]$ and $x^t-\delta^{ul_{w_1w_2}}\in \Bbb F_{q^{w_1}}[x]$ is irreducible    if and only if $\gcd(n, \frac{q^{w_1w_2}-1}{q^{w_1}-1})| u$ and $\gcd(n, \frac{q^{w_1w_2}-1}{q^{}-1})\nmid u$ if and only if $u=\gcd(n, \frac{q^{w_1w_2}-1}{q^{w_1}-1})u_1$ for $1\le u_1\le \gcd(n, q^{w_1}-1)$, $\gcd(u_1,t)=1$ and $\frac{q^{w_1}-1}{q-1}\nmid u_1l_{w_1}$.
Hence
there exist $\frac{\varphi(t)}{w_1t}\cdot
(\gcd(n, q^{w_1}-1)-\gcd(n,q-1))$   irreducible polynomials  of degree $w_1t$ in $\Bbb F_{q^{w_1w_2}}[x]$ that are in $\Bbb F_{q^{w_1}}[x]$.

By Lemmas 2.5 and 4.1,  $$\gcd(\frac{q^{w_1w_2}-1}{q^{w_2}-1}, l_{w_1w_2})=\frac{\frac{q^{w_1w_2}-1}{q^{w_2}-1}}{\gcd(n, \frac{q^{w_1w_2}-1}{q^{w_2}-1})},$$
  $x^t-\delta^{ul_{w_1w_2}}\notin \Bbb F_q[x]$ and $x^t-\delta^{ul_{w_1w_2}}\in \Bbb F_{q^{w_2}}[x]$  if and only if $\gcd(n, \frac{q^{w_1w_2}-1}{q^{w_2}-1})| u$ and $\gcd(n, \frac{q^{w_1w_2}-1}{q^{}-1})\nmid u$ if and only if $u=\gcd(n, \frac{q^{w_1w_2}-1}{q^{w_2}-1})u_2$ for $1\le u_2\le \gcd(n, q^{w_2}-1)$, $\gcd(u_2,t)=1$ and $\frac{q^{w_2}-1}{q-1}\nmid u_2l_{w_2}$.
Hence
there exist $\frac{\varphi(t)}{w_2t}\cdot
(\gcd(n, q^{w_2}-1)-\gcd(n,q-1))$   irreducible polynomials  of degree $w_2t$ in $\Bbb F_{q^{w_1w_2}}[x]$ that are in $\Bbb F_{q^{w_2}}[x]$. Thus  the number of irreducible polynomials of degree $w_1w_2t$ in $\Bbb F_q[x]$ is
$$\frac {\varphi(t)}{w_1w_2t}\cdot( \gcd(n,q^{w_1w_2}-1)-\gcd(n, q^{w_1}-1)- \gcd(n, q^{w_2}-1)+\gcd(n, q-1)).$$

Suppose that $t| m_{{w_1w_2},1}$ and $t\nmid m_{{w_1w_2},0}$. Then there is no irreducible polynomials of degree $t$ and $w_2t$ over $\Bbb F_q$. Hence there are $$\frac{\varphi(t)}{w_1t}\cdot
\gcd(n, q^{w_1}-1)\mbox{ and }\frac{\varphi(t)}{w_1w_2t}\cdot
(\gcd(n, q^{w_1w_2}-1)-\gcd(n, q^{w_1}-1))$$
irreducible polynomials of degree $w_1t$ and $w_1w_2t$ over $\Bbb F_q$, respectively.

Suppose that $t| m_{{w_1w_2},2}$ and $t\nmid m_{{w_1w_2},0}$. Then there is no irreducible polynomials of degree $t$ and $w_1t$ over $\Bbb F_q$. Hence there are $$\frac{\varphi(t)}{w_2t}\cdot
\gcd(n, q^{w_2}-1)\mbox{ and }\frac{\varphi(t)}{w_1w_2t}\cdot
(\gcd(n, q^{w_1w_2}-1)-\gcd(n, q^{w_2}-1))$$
irreducible polynomials of degree $w_2t$ and $w_1w_2t$ over $\Bbb F_q$, respectively.

Suppose that $t\nmid m_{{w_1w_2},1}$ and $t\nmid m_{{w_1w_2},2}$. Then there is no irreducible polynomials of degree $t$, $w_1t$ and $w_2t$ over $\Bbb F_q$.
Hence there exist  $$\frac {\varphi(t)}{w_1w_2t}\cdot \gcd(n,q^{w_1w_2}-1)$$  irreducible polynomials  of degree $w_1w_2t$  in $\Bbb F_{q^{w_1}}[x]$.




%


Hence  we have the irreducible factorization of $x^n-1$ over $\Bbb F_q$ and the number of divisors  is
\begin{eqnarray*}&& \sum_{\mbox{\tiny$
\begin{array}{c}
t|m_{w_1w_2,0}\\
\end{array}$
}}\frac{\varphi(t)}{w_1w_2t}\cdot (\gcd(n, q^{w_1w_2}-1)+(w_2-1)\gcd(n, q^{w_1}-1+(w_1-1)\gcd(n, q^{w_2}-1))\\
&+&(w_1-1)(w_2-1)\gcd(n,q-1))\\
&+&\sum_{\mbox{\tiny$
\begin{array}{c}
t|m_{w_1w_2,1}\\
t\nmid m_{w_1w_2,0}\\
\end{array}$
}}\frac{\varphi(t)}{w_1w_2t}\cdot (\gcd(n, q^{w_1w_2}-1)+(w_2-1)\gcd(n, q^{w_1}-1))\\&+&\sum_{\mbox{\tiny$
\begin{array}{c}
t|m_{w_1w_2,2}\\
t\nmid m_{w_1w_2,0}\\
\end{array}$
}}\frac{\varphi(t)}{w_1w_2t}\cdot (\gcd(n, q^{w_1w_2}-1)+(w_1-1)\gcd(n, q^{w_2}-1))\\&+&
\sum_{\mbox{\tiny$
\begin{array}{c}
t \nmid m_{w_1w_2,1}\\
t \nmid m_{w_1w_2,2}\\
\end{array}$
}}
\frac{\varphi(t)}{w_1w_2t}\cdot\gcd(n, q^{w_1w_2}-1)\\
&=&\prod_{\mbox{\tiny$
\begin{array}{c}
p| m_{w_1w_2,0}\\
p~prime\\
\end{array}$
}}(1+v_p(m_{w_1w_2,0})\frac{p-1}p)\cdot \frac{(w_1-1)(w_2-1)}{w_1w_2}\cdot\gcd(n, q-1)\\
&+&\prod_{\mbox{\tiny$
\begin{array}{c}
p| m_{w_1w_2,1}\\
p~prime\\
\end{array}$
}}(1+v_p(m_{w_1w_2,1})\frac{p-1}p)\cdot \frac{w_2-1}{w_1w_2}\cdot\gcd(n, q^{w_1}-1) \\&+&\prod_{\mbox{\tiny$
\begin{array}{c}
p| m_{w_1w_2,2}\\
p~prime\\
\end{array}$
}}(1+v_p(m_{w_1w_2,2})\frac{p-1}p)\cdot \frac{w_1-1}{w_1w_2}\cdot\gcd(n, q^{w_2}-1) \\&+&
\prod_{\mbox{\tiny$
\begin{array}{c}
p| m_{w_1w_2}\\
p~prime\\
\end{array}$
}}(1+v_p(m_{w_1w_2})\frac {p-1}p) \cdot\frac{1}{w_1w_2}\cdot\gcd(n, q^{w_1w_2}-1).
\end{eqnarray*}

{\bf Case 2.} either $v_{w_1}(n)=0 $ or $v_{w_1}(n)\geqslant v_{w_1}( q^{w_1w_2}-1)$ and $1\leqslant v_{w_2}(n)<v_{w_2}( q^{w_1w_2}-1)$.
Then
$$
\gcd(\frac{q^{w_1w_2}-1}{q-1}, l_{w_1w_2})=\frac{w_2\cdot\frac{q^{w_1w_2}-1}{q-1}}{\gcd(n, \frac{q^{w_1w_2}-1}{q-1})},\gcd(\frac{q^{w_1w_2}-1}{q^{w_1}-1}, l_{w_1w_2})=\frac{w_2\cdot\frac{q^{w_1w_2}-1}{q^{w_1}-1}}{\gcd(n, \frac{q^{w_1w_2}-1}{q^{w_1}-1})},$$
and $$\mbox{  ~}\gcd(\frac{q^{w_1w_2}-1}{q^{w_2}-1}, l_{w_1w_2})=\frac{\frac{q^{w_1w_2}-1}{q^{w_2}-1}}{\gcd(n, \frac{q^{w_1w_2}-1}{q^{w_2}-1})}.$$


{\bf Case 3.} either $v_{w_2}(n)=0 $ or $v_{w_2}(n)\geqslant v_{w_1}( q^{w_1w_2}-1)$ and $1\leqslant v_{w_1}(n)<v_{w_1}( q^{w_1w_2}-1)$.
Then
$$\gcd(\frac{q^{w_1w_2}-1}{q-1}, l_{w_1w_2})=\frac{w_1\cdot\frac{q^{w_1w_2}-1}{q-1}}{\gcd(n, \frac{q^{w_1w_2}-1}{q-1})},\gcd(\frac{q^{w_1w_2}-1}{q^{w_1}-1}, l_{w_1w_2})=\frac{\frac{q^{w_1w_2}-1}{q^{w_1}-1}}{\gcd(n, \frac{q^{w_1w_2}-1}{q^{w_1}-1})},$$
 and $$\mbox{ ~}\gcd(\frac{q^{w_1w_2}-1}{q^{w_2}-1}, l_{w_1w_2})=\frac{w_1\cdot\frac{q^{w_1w_2}-1}{q^{w_2}-1}}{\gcd(n, \frac{q^{w_1w_2}-1}{q^{w_2}-1})}.$$


{\bf Case 4.}  $1\leqslant v_{w_1}(n)<v_{w_1}( q^{w_1w_2}-1)$ and $1\leqslant v_{w_2}(n)<v_{w_2}( q^{w_1w_2}-1)$.
Then
$$\gcd(\frac{q^{w_1w_2}-1}{q-1}, l_{w_1w_2})=\frac{w_1w_2\cdot\frac{q^{w_1w_2}-1}{q-1}}{\gcd(n, \frac{q^{w_1w_2}-1}{q-1})},\gcd(\frac{q^{w_1w_2}-1}{q^{w_1}-1}, l_{w_1w_2})=\frac{w_2\cdot\frac{q^{w_1w_2}-1}{q^{w_1}-1}}{\gcd(n, \frac{q^{w_1w_2}-1}{q^{w_1}-1})},$$
and $$\mbox{  ~}\gcd(\frac{q^{w_1w_2}-1}{q^{w_2}-1}, l_{w_1w_2})=\frac{w_1\cdot\frac{q^{w_1w_2}-1}{q^{w_2}-1}}{\gcd(n, \frac{q^{w_1w_2}-1}{q^{w_2}-1})}.$$

Similarly, we can prove them.
\end{proof}

 \begin{exa}{\rm Suppose that $w_1=3, w_2=5, q=2$ and $n=151$. Then these positive integers satisfy the condition in Theorem $4.2$, i.e. $ord_{rad(n)}(q)=15$ and either $q\not\equiv3 \pmod 4$ or $8\nmid n$. By calculate directly,   $m_{15}=m_{15,1}=m_{15,2}=m_{15,3}=1$.By Theorem 4.2, the number of irreducible factor  of $x^{151}-1$ over $\Bbb F_2$ is $\frac{151}{15}+\frac 2 {15}+\frac 4{15}+\frac8{15}=11$, which is confirmed by Magma.  }

 \end{exa}

Second, we consider the case: $q\equiv 3\pmod 4$ and $8|n$.  By $w_1w_2$ an odd integer,  $q^{w_1w_2} \equiv3 \pmod 4$  if and only if $q \equiv3 \pmod 4$.

Let $r=\min\{v_2(n/2),v_2(q^{w_1w_2}+1)\}=\min\{v_2(n/2),v_2(q+1)\}$, $m_{2{{w_1w_2}}}=\frac{n}{\gcd(n, q^{2{{w_1w_2}}}-1)}$, $l_{2{{w_1w_2}}}=\frac{q^{2{{w_1w_2}}}-1}{\gcd(n, q^{2{{w_1w_2}}}-1)}$, $l_{{w_1w_2}}=\frac{q^{{w_1w_2}}-1}{\gcd(n, q^{{w_1w_2}}-1)}$, $\mu$ a  generator of $\Bbb F_{q^{2{{w_1w_2}}}}^*$, and   $\delta=\mu^{q^{{w_1w_2}}+1}$.
Then the factorization of $x^{n} -1$ into irreducible factors in $\Bbb F_{q^{{w_1w_2}}} [x ]$ is

\begin{equation}
\prod_{\mbox{\tiny$
\begin{array}{c}
t|m_{2{{w_1w_2}}}\\
t~odd\\
\end{array}$
}} \prod_{\mbox{\tiny$
\begin{array}{c}
1\leqslant v\leqslant \gcd(n,q^{{w_1w_2}}-1)\\
\gcd(v,t)=1\\
\end{array}$
}}(x^{t}-\delta^{vl_{{{w_1w_2}}}})\cdot\prod\limits_{t\mid m_{2{{w_1w_2}}}} \prod\limits_{u\in \mathcal{R}_{t}} (x^{2t}-(\mu^{ul_{2{{w_1w_2}}}}+\mu^{q^{{w_1w_2}}ul_{2{{w_1w_2}}}})x^{t}+\delta^{ul_{2{{w_1w_2}}}})
\end{equation}
where $$\mathcal{R}_t=\bigg\{u\in \Bbb N: \begin{array}{l}
1\le u\le \gcd(n,q^{2{{w_1w_2}}}-1),\gcd(u,t)=1,
 \\ 2^{r}\nmid u,  u=min\{u, q^{{w_1w_2}}u\}_{\gcd(n,q^{2{{w_1w_2}}}-1)}\end{array}\bigg\}.$$

  Next we will explicitly give the irreducible factorization of $x^{n}-1$ over $\Bbb F_{q}$.

\begin{thm} {\rm Suppose that ${w_1, w_2}$ are two odd distinct primes, $ord_{rad(n)}(q)={w_1w_2}$, $q^{} \equiv 3\pmod 4$,  and $8| n$. Let $n=w_1^{v_{w_1}(n)}w_2^{v_{w_2}(n)}n_0n_1n_2n_3$, where
$rad(n_0)|(q-1)$,  $rad(n_1)|\frac{q^{w_1}-1}{q-1}$, $rad(n_2)|\frac{q^{w_2}-1}{q-1}$, and $\gcd(n_3,(q^{w_1}-1)(q^{w_2}-1))=1$. Let
   $m_{{w_1w_2},0}=\frac{n_0}{\gcd(n_0, q-1)}$, $m_{{w_1w_2},1}=\frac{w_1^{v_{w_1}(n)}n_0n_1}{\gcd(w_1^{v_{w_1}(n)}n_0n_1, q^{w_1}-1)}$,  $m_{{w_1w_2},2}=\frac{w_2^{v_{w_2}(n)}n_0n_2}{\gcd(w_2^{v_{w_2}(n)}n_0n_2, q^{w_2}-1)}$,  $m_{2w_1w_2}=\frac{n}{\gcd(n,q^{2w_1w_2}-1)}$, $m_{w_1w_2}=\frac{n}{\gcd(n,q^{w_1w_2}-1)}$,
  $l_{2w_1w_2}=\frac{q^{2w_1w_2}-1}{\gcd(n,q^{2w_1w_2}-1)}$,  $l_{w_1w_2}=\frac{q^{w_1w_2}-1}{\gcd(n,q^{w_1w_2}-1)}$, $l_{w_1}=\frac{q^{w_1}-1}{\gcd(n,q^{w_1}-1)}$, $l_{w_2}=\frac{q^{w_2}-1}{\gcd(n,q^{w_2}-1)}$, and $l_{1}=\frac{q-1}{\gcd(n, q-1)}$. Let $r=min\{v_{2}(\frac n 2),v_{2}(q^{w_1w_2}+1)\}$ and $\mu$ be a  generator of $\Bbb F_{q^{2{{w_1w_2}}}}^*$ satisfying  $\delta=\mu^{q^{{w_1w_2}}+1}$,
  $\pi_1=\mu^{\frac{q^{2w_1w_2}-1}{q^{w_1}-1}}$, $\pi_2=\mu^{\frac{q^{2w_1w_2}-1}{q^{w_2}-1}}$, $\alpha=\mu^{\frac{q^{2{w_1w_2}}-1}{q^2-1}}$, and $\theta=\mu^{\frac{q^{2w_1w_2}-1}{q-1}}$. Then

$(1)$ The irreducible factorization of $x^{n}-1$ over $\Bbb F_{q}$ is
\begin{eqnarray*}&&\prod_{\mbox{\tiny$
\begin{array}{c}
 t|m_{w_1w_2,0}\\t~odd
\end{array}$
}}
 \prod_{\mbox{\tiny$
\begin{array}{c}
1\le v\le \gcd(n, q-1)\\
\gcd(v, t)=1
\end{array}$
}}(x^{t}-\theta^{vl_{1}})\cdot\prod_{\mbox{\tiny$
\begin{array}{c}
t|m_{w_1w_2,1}\\ t~odd
\end{array}$
}}\prod_{\mbox{\tiny$
\begin{array}{c}
v_1\in \mathcal{S}^{(1)}_{t}
\end{array}$
}}
\prod\limits_{k=0}^{w_1-1}(x^{t}-\pi_1^{q^kv_1l_{w_1}})\\
&\cdot&\prod_{\mbox{\tiny$
\begin{array}{c}
t|m_{w_1w_2,2}\\ t~odd
\end{array}$
}}\prod_{\mbox{\tiny$
\begin{array}{c}
v_2\in \mathcal{S}^{(2)}_{t}
\end{array}$
}}
\prod\limits_{k=0}^{w_2-1}(x^{t}-\pi_2^{q^kv_2l_{w_2}})\cdot\prod_{\mbox{\tiny$
\begin{array}{c}
t|m_{2w_1w_2}\\ t~odd
\end{array}$
}}\prod_{\mbox{\tiny$
\begin{array}{c}
v_3\in \mathcal{S}^{(3)}_{t}
\end{array}$
}}
\prod\limits_{k=0}^{w_1w_2-1} (x^{t}-\delta^{q^kv_3l_{w_1w_2}})\\
&\cdot&\prod_{\mbox{\tiny$
\begin{array}{c}
t| m_{w_1w_2,0}\\u_1\in \mathcal{R}^{(1)}_t
\end{array}$
}}
(x^{2t}-(\alpha^{u_1l_{2}}+\alpha^{qu_1l_{2}})x^{t}+\theta^{u_1l_{2}}) \cdot\prod_{\mbox{\tiny$
\begin{array}{c}
t| m_{2w_1w_2}\\
u_2\in\mathcal{ R}^{(2)}_{t}
\end{array}$
}}\prod\limits_{k=0}^{2w_1w_2-1}(x^t-\mu^{q^ku_2l_{2w_1w_2}}),
\end{eqnarray*}
where~ $$\mathcal{S}^{(1)}_{t}=\bigg\{v_1\in \Bbb N: \begin{array}{l}
1\le v_1\le \gcd(n,q^{w_1}-1),\gcd(v_1,t)=1,
 \\ \frac{q^{w_1}-1}{q-1}\nmid v_1l_{w_1}, v_1=\min\{v_1, qv_1, \cdots, q^{w_1-1}v_1\}_{\gcd(n, q^{w_1}-1)}\end{array}\bigg\},$$
 $$\mathcal{S}^{(2)}_{t}=\bigg\{v_2\in \Bbb N: \begin{array}{l}
1\le v_2\le \gcd(n,q^{w_2}-1),\gcd(v_2,t)=1,
 \\ \frac{q^{w_2}-1}{q-1}\nmid v_2l_{w_2}, v_2=\min\{v_2, qv_2, \cdots, q^{w_2-1}v_2\}_{\gcd(n, q^{w_2}-1)}\end{array}\bigg\},$$
 $$\mathcal{S}^{(3)}_{t}=\bigg\{v_3\in \Bbb N: \begin{array}{l}
1\le v_3\le \gcd(n,q^{w_1w_2}-1),\gcd(v_3,t)=1,
 \\ \frac{q^{w_1w_2}-1}{q^{w_1}-1}\nmid v_3l_{w_1w_2}, \frac{q^{w_1w_2}-1}{q^{w_2}-1}\nmid v_3l_{w_1w_2},\\v_3=\min\{v_3, qv_3, \cdots, q^{w_1w_2-1}v_3\}_{\gcd(n, q^{w_1w_2}-1)}\end{array}\bigg\},$$
$$\mathcal{R}^{(1)}_t=\bigg\{u_1\in \Bbb N: \begin{array}{l}
1\le u_1\le 2^{r}\gcd(n,q^{}-1),\gcd(u_1,t)=1,
 \\ 2^r\nmid u_1, u_1=\min\{u_1,q^{w_1w_2}u_1\}_{\gcd(n, q^2-1)}\end{array}\bigg\},$$
 and~$$\mathcal{R}_t^{(2)}=\bigg\{u_2\in \Bbb N: \begin{array}{l}
1\le u_2\le \gcd(n,q^{2w_1w_2}-1),\gcd(u_2,t)=1, \frac{q^{2{{w_1w_2}}}-1}{q^2-1}\nmid u_2l_{2{{w_1w_2}}},
 \\  2^{r}\nmid u_2,  u_2=\min\{u_2, qu_2,\cdots,  q^{2{w_1w_2}-1}u_2\}_{\gcd(n,q^{2{w_1w_2}}-1)}\end{array}\bigg\}.$$


$(2)$
The number of irreducible factors of $x^n-1$ in $\Bbb F_q[x]$ is
\begin{eqnarray*}
&&\prod_{\mbox{\tiny$
\begin{array}{c}
p|m_{w_1w_2,0}\\
p~odd~prime
\end{array}$
}}(1+v_p(m_{2w_1w_2})\frac {p-1}p)\cdot\frac{B}{2w_1w_2}\cdot \gcd(n, q-1) \\
&+&\prod_{\mbox{\tiny$
\begin{array}{c}
p|m_{w_1w_2,1}\\
p ~odd ~prime
\end{array}$
}}
(1+v_p(m_{2w_1w_2})\frac {p-1}p)\cdot \frac{w_2-1}{w_1w_2}\cdot\gcd(n,q^{w_1}-1)\\&+&\prod_{\mbox{\tiny$
\begin{array}{c}
p|m_{w_1w_2,2}\\
p ~odd~prime
\end{array}$
}}
(1+v_p(m_{2w_1w_2})\frac {p-1}p)\cdot \frac{w_1-1}{w_1w_2}\cdot\gcd(n,q^{w_2}-1)\\
&+&\prod_{\mbox{\tiny$
\begin{array}{c}
p|m_{2w_1w_2}\\
p~odd ~prime
\end{array}$
}}
(1+v_p(m_{2w_1w_2})\frac {p-1}p)\cdot\frac{2^r+1+v_2(m_{2w_1w_2})2^{r-1}}{2w_1w_2} \cdot\gcd(n,q^{w_1w_2}-1),
\end{eqnarray*}
where $B=2^{r-1}(2+v_2(m_{2w_1w_2}))+w_1w_2-2w_1-2w_2+3$. }
\end{thm}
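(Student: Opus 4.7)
The plan is to mirror the strategy of Theorems 3.5 and 4.2: start from the known irreducible factorization over $\mathbb{F}_{q^{w_1w_2}}$ given by Lemma 2.3, then identify which factors (or unions of Frobenius orbits of factors) descend to $\mathbb{F}_q[x]$ by tracking membership in the intermediate subfields $\mathbb{F}_q, \mathbb{F}_{q^{w_1}}, \mathbb{F}_{q^{w_2}}, \mathbb{F}_{q^2}$. Write $\mu$ as a generator of $\mathbb{F}_{q^{2w_1w_2}}^\ast$ with $\delta = \mu^{q^{w_1w_2}+1}$, set $r=\min\{v_2(n/2),v_2(q+1)\}$, and invoke Lemma 2.3 to obtain the factorization over $\mathbb{F}_{q^{w_1w_2}}$ as a product of binomials $x^t - \delta^{vl_{w_1w_2}}$ (for odd $t\mid m_{2w_1w_2}$) and irreducible trinomials $x^{2t} - (\mu^{ul_{2w_1w_2}} + \mu^{q^{w_1w_2} ul_{2w_1w_2}})x^t + \delta^{ul_{2w_1w_2}}$ (for $t\mid m_{2w_1w_2}$ and $u\in\mathcal R_t$).

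The binomial half is handled exactly as in Theorem 4.2. For odd $t$ one determines the parameters $v$ for which $\delta^{vl_{w_1w_2}} \in \mathbb{F}_q$ (via $\gcd(n,\tfrac{q^{w_1w_2}-1}{q-1})\mid v$), in $\mathbb{F}_{q^{w_1}}\setminus\mathbb{F}_q$, in $\mathbb{F}_{q^{w_2}}\setminus\mathbb{F}_q$, or none of these, using Lemmas 2.5 and 4.1 to compute the relevant gcds. This yields the four binomial products in the factorization with the sets $\mathcal{S}_t^{(1)}, \mathcal{S}_t^{(2)}, \mathcal{S}_t^{(3)}$ as representatives of $q$-Frobenius orbits. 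For the trinomial half, observe that because $\gcd(w_1w_2,2)=1$, the subfield lattice of $\mathbb{F}_{q^{2w_1w_2}}/\mathbb{F}_q$ splits as a product of that of $\mathbb{F}_{q^{w_1w_2}}/\mathbb{F}_q$ with $\mathbb{F}_{q^2}/\mathbb{F}_q$; hence a trinomial factor either already lies in $\mathbb{F}_q[x]$ (precisely when $\mu^{ul_{2w_1w_2}}\in\mathbb{F}_{q^2}$, equivalently $\tfrac{q^{2w_1w_2}-1}{q^2-1}\mid ul_{2w_1w_2}$ and $2^r\nmid u$, rewritten via $u_1$ in terms of $\mathcal{R}_t^{(1)}$ as $x^{2t}-(\alpha^{u_1l_2}+\alpha^{qu_1l_2})x^t + \theta^{u_1l_2}$), or else requires combining all $w_1w_2$ Frobenius conjugates to form a degree-$2w_1w_2 t$ irreducible $\prod_{k=0}^{2w_1w_2-1}(x^t-\mu^{q^k u_2 l_{2w_1w_2}})$ parametrized by $\mathcal R_t^{(2)}$. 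Crucially there are no intermediate trinomials of degree $2w_it$ for $i=1,2$, because the condition $\mu^{ul_{2w_1w_2}}\in\mathbb{F}_{q^{2w_i}}\setminus\mathbb{F}_{q^{w_1w_2}}$ combined with irreducibility of the trinomial over $\mathbb{F}_{q^{w_1w_2}}$ (which forces $u\ne q^{w_1w_2}u$) cannot both hold, since $\mathbb{F}_{q^{2w_i}}\cap\mathbb{F}_{q^{w_1w_2}}=\mathbb{F}_{q^{w_i}}$ would force the trinomial to split.

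Next, Lemma 4.1 is applied to split the verification into the same four cases (1)--(4) used in Theorem 4.2, depending on whether $v_{w_1}(n)$ and $v_{w_2}(n)$ attain the valuations of $q^{w_1w_2}-1$. In each case the identities for $\gcd(n, q^{w_i}-1)$ and $\gcd(n, q^{w_1w_2}-1)$, together with $\gcd(n, q^{2w_1w_2}-1) = 2^r\gcd(n, q^{w_1w_2}-1)$, let us rewrite the divisibility conditions $\gcd(\tfrac{q^{w_1w_2}-1}{q^{w_i}-1},l_{w_1w_2})$ and $\gcd(\tfrac{q^{2w_1w_2}-1}{q^2-1},l_{2w_1w_2})$ in the uniform shape used to define $\mathcal{S}_t^{(i)}$ and $\mathcal{R}_t^{(j)}$. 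The routine step is essentially bookkeeping identical to that of Case 1 in Theorem 4.2 combined with Case 1 in Theorem 3.5.

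Finally, the counting formula in part (2) is obtained by summing $\tfrac{\varphi(t)}{ct}$ over the appropriate divisors $t$, separating by parity and by which of $m_{w_1w_2,0}$, $m_{w_1w_2,1}$, $m_{w_1w_2,2}$, $m_{2w_1w_2}$ contains $t$, and collecting terms. Multiplicativity of $\varphi(t)/t$ yields $\sum_{d\mid p^k}\varphi(d)/d = 1 + k\tfrac{p-1}{p}$, so the sums factor into Euler products over primes $p\mid m_{w_1w_2,i}$; after organizing the coefficient of each $\gcd(n,q^{w_i}-1)$ one obtains the stated constants, in particular the coefficient $B = 2^{r-1}(2+v_2(m_{2w_1w_2}))+w_1w_2-2w_1-2w_2+3$ for the $\gcd(n,q-1)$ summand, where the three summands $2^{r-1}(2+v_2(m_{2w_1w_2}))$, $w_1w_2-2w_1-2w_2+1=(w_1-1)(w_2-1)$ for odd $t$, and $2$ for even $t$ combine as in Theorem 3.5. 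The main obstacle is the careful case analysis in Cases (2)--(4) of Lemma 4.1, where the factor of $w_i$ shifts appearing in the gcds must be absorbed correctly into the set definitions $\mathcal{S}_t^{(i)}$ and $\mathcal{R}_t^{(j)}$; this is tedious but parallels the corresponding steps of Theorems 3.3, 3.5, and 4.2 closely enough that no essentially new idea is required.
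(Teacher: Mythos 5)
Your plan does follow the paper's proof of Theorem 4.4 step for step: descend the factorization of Lemma 2.3 over $\Bbb F_{q^{w_1w_2}}$, sort the binomials according to which of $\Bbb F_q,\Bbb F_{q^{w_1}},\Bbb F_{q^{w_2}}$ they land in, run the four cases of Lemma 4.1, and sum $\varphi(t)/t$ to get the count. The genuine gap is the sentence asserting that ``there are no intermediate trinomials of degree $2w_it$,'' and the argument you give for it is backwards. If $\gamma=\mu^{ul_{2w_1w_2}}\in\Bbb F_{q^{2w_i}}\setminus\Bbb F_{q^{w_i}}$, then the identity $\Bbb F_{q^{2w_i}}\cap\Bbb F_{q^{w_1w_2}}=\Bbb F_{q^{w_i}}$ gives $\gamma\notin\Bbb F_{q^{w_1w_2}}$, which is precisely the condition ($2^r\nmid u$) under which Lemma 2.3 certifies the trinomial as irreducible over $\Bbb F_{q^{w_1w_2}}$ --- nothing splits. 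For such $\gamma$ one has $\gamma^{q^{w_1w_2}}=\gamma^{q^{w_i}}$ (since $w_1w_2\equiv w_i\pmod{2w_i}$), so the trinomial equals $(x^t-\gamma)(x^t-\gamma^{q^{w_i}})$, its coefficients lie in $\Bbb F_{q^{w_i}}$, and when moreover $\gamma\notin\Bbb F_{q^2}$ its $\sigma$-orbit has length exactly $w_i$; the product over that orbit is an irreducible factor of degree $2w_it$ over $\Bbb F_q$, a degree that occurs nowhere in the claimed factorization. Such $\gamma$ exist exactly when $n_i>1$: take $\gamma=\gamma_2\gamma_{\mathrm{odd}}$ with $\gamma_2\in\Bbb F_{q^2}$ of order $4$ (available because $8\mid n$ and $q\equiv 3\pmod 4$) and $\gamma_{\mathrm{odd}}$ of odd order dividing $n$ with $\gamma_{\mathrm{odd}}\in\Bbb F_{q^{w_i}}\setminus\Bbb F_q$.

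This is not a repairable presentation issue; the statement itself fails in such cases. Take $q=3$, $w_1=5$, $w_2=3$, $n=2^3\cdot 11\cdot 13=1144$, so that $\ord_{rad(n)}(3)=15$, $3\equiv 3\pmod 4$, $8\mid n$, $n_1=11$, $n_2=13$. Since $\ord_{44}(3)=\ord_{88}(3)=10$ and $\ord_{52}(3)=\ord_{104}(3)=6$, the cyclotomic polynomials $\Phi_{44},\Phi_{88},\Phi_{52},\Phi_{104}$ contribute $2+4+4+8=18$ irreducible factors of degrees $10=2w_1$ and $6=2w_2$ to $x^n-1$, while part (1) only allows degrees $t,w_1t,w_2t,w_1w_2t,2t,2w_1w_2t$; correspondingly the formula of part (2) evaluates to $\frac{2}{5}+\frac{44}{15}+\frac{104}{15}+\frac{143}{3}=\frac{869}{15}$, which is not even an integer, whereas the true number of irreducible factors is $75$. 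To be fair, the paper's own proof makes the identical silent assumption (it asserts that every trinomial with $\frac{q^{2w_1w_2}-1}{q^2-1}\nmid ul_{2w_1w_2}$ has all $w_1w_2$ conjugates distinct), so you have faithfully reproduced the paper's route including its weak point --- but the one place where you supply an argument of your own is exactly where an argument cannot exist. A correct version needs two further products collecting the orbits of length $w_1$ and $w_2$, which are nonempty whenever $n_1>1$ or $n_2>1$ (the paper's Example 4.5 has $n_1=n_2=1$ and therefore does not detect this). The same caveat applies to your appeal to Theorem 3.5, whose trinomial analysis has the analogous defect when its $n_2>1$.
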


\begin{proof}


 Since $rad(n)|(q^{{w_1w_2}}-1)$ and $w_1w_2$ is an odd integer, $\gcd(n/2, q^{{{w_1w_2}}}+1)=\gcd(n/2, q+1)=2^{r}$, where $r=\min\{v_2(n/2), v_2(q^{{w_1w_2}}+1)\}=\min\{v_2(n/2),v_2(q+1)\}$. Then $$l_{2{{w_1w_2}}}=\frac{q^{2{{w_1w_2}}}-1}{\gcd(n,q^{2{{w_1w_2}}}-1)}=\frac{q^{{{w_1w_2}}}-1}{\gcd(n,q^{{{w_1w_2}}}-1)}\cdot \frac{q^{{{w_1w_2}}}+1}{\gcd(n/2,q^{{{w_1w_2}}}+1)}=\frac{q^{{{w_1w_2}}}+1}{2^{r}}l_{{{w_1w_2}}}$$ and $$ ~m_{2{{w_1w_2}}}=\frac{n}{\gcd(n,q^{2{w^2}}-1)}=\frac{n}{\gcd(n/2,q^{{w^2}}+1)\gcd(n,q^{{{w_1w_2}}}-1)}=\frac{m_{{{w_1w_2}}}}{2^{r}}.$$


Firstly, we investigate the product in (4.3):

$$\prod_{\mbox{\tiny$
\begin{array}{c}
t| m_{2{{w_1w_2}}}\\
t~odd\\
\end{array}$
}} \prod_{\mbox{\tiny$
\begin{array}{c}
1\leqslant v\leqslant \gcd(n,q^{{w_1w_2}}-1)\\
\gcd(v,t)=1\\
\end{array}$
}}(x^{t}-\delta^{vl_{{{w_1w_2}}}}).$$
Similar to Theorem 4.2, we obtain all irreducible polynomials of degrees $t$, $w_1t$, $w_2t$ and $w_1w_2t$  as follows.

Suppose that  $t| m_{{w_1w_2},0}$. The numbers of irreducible polynomials of degrees $t$, $w_1t$, $w_2t$ and $w_1w_2t$  are $$\frac{\varphi(t)}t\cdot
\gcd(n, q-1),\frac{\varphi(t)}{w_1t}\cdot
(\gcd(n, q^{w_1}-1)-\gcd(n,q-1)),\frac{\varphi(t)}{w_2t}\cdot
(\gcd(n, q^{w_2}-1)-\gcd(n,q-1)),$$
and $$\mbox{ } \frac {\varphi(t)}{w_1w_2t}\cdot( \gcd(n,q^{w_1w_2}-1)-\gcd(n, q^{w_1}-1)- \gcd(n, q^{w_2}-1)+\gcd(n, q-1)).$$

Suppose that $t| m_{{w_1w_2},1}$ and $t\nmid m_{{w_1w_2},0}$. Then there is no irreducible polynomials of degrees $t$ and $w_2t$ over $\Bbb F_q$. Hence there are $$\frac{\varphi(t)}{w_1t}\cdot
\gcd(n, q^{w_1}-1)\mbox{ and }\frac{\varphi(t)}{w_1w_2t}\cdot
(\gcd(n, q^{w_1w_2}-1)-\gcd(n, q^{w_1}-1))$$
irreducible polynomials of degrees $w_1t$ and $w_1w_2t$ over $\Bbb F_q$, respectively.

Suppose that $t| m_{{w_1w_2},2}$ and $t\nmid m_{{w_1w_2},0}$. Then there is no irreducible polynomials of degrees $t$ and $w_1t$ over $\Bbb F_q$. Hence there are $$\frac{\varphi(t)}{w_2t}\cdot
\gcd(n, q^{w_2}-1)\mbox{ and }\frac{\varphi(t)}{w_1w_2t}\cdot
(\gcd(n, q^{w_1w_2}-1)-\gcd(n, q^{w_2}-1))$$
irreducible polynomials of degree $w_2t$ and $w_1w_2t$ over $\Bbb F_q$, respectively.

Suppose that $t\nmid m_{{w_1w_2},1}$ and $t\nmid m_{{w_1w_2},2}$. Then there is no irreducible polynomials of degrees $t$, $w_1t$ and $w_2t$ over $\Bbb F_q$.
Hence there exist  $\frac {\varphi(t)}{w_1w_2t}\cdot \gcd(n,q^{w_1w_2}-1)$  irreducible polynomials  of degree $w_1w_2t$  in $\Bbb F_{q^{w_1}}[x]$.

Secondly, we investigate the product in (4.3):
\begin{equation*}\prod\limits_{t\mid m_{2{{w_1w_2}}}} \prod\limits_{u\in \mathcal{R}_{t}} (x^{2t}-(\mu^{ul_{2{{w_1w_2}}}}+\mu^{q^{{w_1w_2}}ul_{2{{w_1w_2}}}})x^{t}+\delta^{ul_{2{{w_1w_2}}}}). \end{equation*}
By Lemma 4.1, we prove this part by four cases:

{\bf Case 1.} either $v_{w_1}(n)=0 $ or $v_{w_1}(n)\geqslant v_{w_1}( q^{w_1w_2}-1)$ and either $v_{w_2}(n)=0 $ or $v_{w_2}(n)\geqslant v_{w_2}( q^{w_1w_2}-1)$.
 By Lemmas 2.5 and 4.1,
$$\gcd(\frac{q^{2w_1w_2}-1}{q^2-1}, l_{2w_1w_2})=\frac{\frac{q^{2w_1w_2}-1}{q^2-1}}{\gcd(n, \frac{q^{w_1w_2}-1}{q-1})}.$$
Note that the irreducible polynomial  $x^{2t}-(\mu^{ul_{2{w_1w_2}}}+\mu^{q^{{{w_1w_2}}}ul_{2{{w_1w_2}}}})x^{t}+\delta^{ul_{2{{w_1w_2}}}}=(x^t-\mu^{ul_{2{{w_1w_2}}}})(x^t-\mu^{q^{{w_1w_2}}ul_{2{{w_1w_2}}}})\in \Bbb F_{q^{w_1w_2}}[x]$ and $x^t-\mu^{ul_{2{{w_1w_2}}}}\notin \Bbb F_{q}[x]$. Then $x^{2t}-(\mu^{ul_{2{w_1w_2}}}+\mu^{q^{{{w_1w_2}}}ul_{2{{w_1w_2}}}})x^{t}+\delta^{ul_{2{{w_1w_2}}}} \in \Bbb F_{q}[x]$
 if and only if $x^t-\mu^{ul_{2{{w_1w_2}}}}\in \Bbb F_{q^2}[x]$ if and only if $\frac{q^{2{{w_1w_2}}}-1}{q^2-1}|ul_{2{{w_1w_2}}}$ and $\frac{q^{2{{w_1w_2}}}-1}{q-1}\nmid ul_{2{{w_1w_2}}}$  if and only if $u=\gcd(n_{}, \frac{q^{{w_1w_2}}-1}{q-1})u_1$ for $1\le u_1\le 2^{r}\gcd(n, q-1)$, $\gcd(t,u_1)=1$ and $2^r\nmid u_1$, i.e.  $x^{2t}-(\mu^{ul_{2{{w_1w_2}}}}+\mu^{q^{{{w_1w_2}}}ul_{2{{w_1w_2}}}})x^{t}+\delta^{ul_{2{{w_1w_2}}}}=x^{2t}-(\alpha^{u_1l_2}+\alpha^{qu_1l_2})x^t+\theta^{u_1l_2}\in \Bbb F_q[x]$ is irreducible.

In Lemma 2.3, the number of irreducible polynomials of degree $2t$  depends on the parity of $t$. In the following, we divide the proof into two cases.

Suppose that  $t|m_{w_1w_2, 0}$.  If $t$ is odd,  there exist $\frac{\varphi(t)}{2t}\cdot(2^{r_{}}-1)\cdot\gcd(n,q-1)$ irreducible polynomials of degree $2t$ in $\Bbb F_q[x]$.
  On the other hand, each irreducible polynomial  $x^{2t}-(\mu^{ul_{2{w_1w_2}}}+\mu^{q^{{w_1w_2}}ul_{2{w_1w_2}}})x^{t}+\delta^{ul_{2{w_1w_2}}}\in \Bbb F_{q^{w_1w_2}}[x]$ with $\frac{q^{2{{w_1w_2}}}-1}{q^2-1}\nmid ul_{2{{w_1w_2}}}$, there are ${w_1w_2}$  conjugate irreducible polynomials  in $\Bbb F_{q^{w_1w_2}}[x]$ such that their product generates an irreducible polynomial in $\Bbb F_q[x]$, i.e. $\prod_{k=0}^{2{w_1w_2}-1}(x^t-\pi^{q^kul_{2{w_1w_2}}})\in \Bbb F_q[x]$ is irreducible. From Lemma 2.3, the number of irreducible polynomials of degree $2t$ over $\Bbb F_{q^{w_1w_2}}$ in (4.3) is $\frac{\varphi(t)}{2t}\cdot(2^{r_{}}-1)\cdot\gcd(n,q^{{w_1w_2}}-1)$. Hence
   the number of irreducible polynomials of degree $2{w_1w_2}t$ over  $\Bbb F_{q}$ is $$\frac{\varphi(t)}{2{w_1w_2}t}\cdot(2^{r_{}}-1)\cdot(\gcd(n,q^{{w_1w_2}}-1)-\gcd(n,q-1)).$$
 If  $t$ is even,  then $u_1$ must be an odd integer and the condition $2^r\nmid u_1$ automatically holds. Hence, the numbers of irreducible polynomials of degree $2t$ and $2{w_1w_2}t$ over  $\Bbb F_{q}$ are $$\frac{\varphi(t)}{2t}\cdot2^{r_{}}\cdot\gcd(n,q-1)\mbox{ and }\frac{\varphi(t)}{2{w_1w_2}t}\cdot2^{r_{}}\cdot(\gcd(n,q^{{w_1w_2}}-1)-\gcd(n,q-1)).$$

Suppose that  $t\nmid m_{{w_1w_2},0}$. Then there is no polynomial of degree $2t$ in  $\Bbb F_q[x]$. If $t$ is even,
 then the number of irreducible polynomials of degree $2{w_1w_2}t$ in $\Bbb F_q[x]$ is
$$\frac{\varphi(t)}{2{w_1w_2}t}\cdot2^{r}\cdot\gcd(n,q^{{w_1w_2}}-1).$$ If $t$ is an odd integer, the number of irreducible polynomials of degree $2{w_1w_2}t$ in $\Bbb F_q[x]$ is
$$\frac{\varphi(t)}{2{w_1w_2}t}\cdot(2^{r_{}}-1)\cdot\gcd(n,q^{{w_1w_2}}-1).$$

 Hence we have the irreducible factorization of $x^n-1$ over $\Bbb F_q$ in  (1). The number of irreducible factors of $x^n-1$ in $\Bbb F_q[x]$ is
\begin{eqnarray*}&&\sum_{\mbox{\tiny$
\begin{array}{c}
t|m_{w_1w_2,0}\\
t~odd
\end{array}$
}}
\frac{\varphi(t)}{2w_1w_2t}\cdot((2^r+1)\gcd(n, q^{w_1w_2}-1)+2(w_2-1)\gcd(n, q^{w_1}-1)\\
&+&2(w_1-1)\gcd(n, q^{w_2}-1)+(2(w_1-1)(w_2-1)+(2^r-1)(w_1w_2-1))\gcd(n,q-1))\\
&+&\sum_{\mbox{\tiny$
\begin{array}{c}
t|m_{w_1w_2,0}\\
t~even
\end{array}$
}}
\frac{\varphi(t)}{2w_1w_2t}\cdot2^r\cdot(\gcd(n, q^{w_1w_2}-1)+(w_1w_2-1)\gcd(n, q-1))\\
&+&\sum_{\mbox{\tiny$
\begin{array}{c}
t|m_{w_1w_2,1}\\t\nmid m_{w_1w_2,0}\\t~odd
\end{array}$
}}
\frac{\varphi(t)}{2w_1w_2t}\cdot((2^r+1)\gcd(n, q^{w_1w_2}-1)+2(w_2-1)\gcd(n,q^{w_1}-1))\\&+&\sum_{\mbox{\tiny$
\begin{array}{c}
t|m_{w_1w_2,1}\\t\nmid m_{w_1w_2,0}\\t~even
\end{array}$
}}
\frac{\varphi(t)}{2w_1w_2t}\cdot2^r\cdot\gcd(n, q^{w_1w_2}-1)\\&+&\sum_{\mbox{\tiny$
\begin{array}{c}
t|m_{w_1w_2,2}\\t\nmid m_{w_1w_2,0}\\t~odd
\end{array}$
}}
\frac{\varphi(t)}{2w_1w_2t}\cdot((2^r+1)\gcd(n, q^{w_1w_2}-1)+2(w_1-1)\gcd(n,q^{w_2}-1))\\
&+&\sum_{\mbox{\tiny$
\begin{array}{c}
t|m_{w_1w_2,2}\\t\nmid m_{w_1w_2,0}\\t~even
\end{array}$
}}
\frac{\varphi(t)}{2w_1w_2t}\cdot2^r\cdot\gcd(n, q^{w_1w_2}-1)\\
&+&\sum_{\mbox{\tiny$
\begin{array}{c}
t\nmid m_{w_1w_2,1}\\
t\nmid m_{w_1w_2,2}\\t~odd
\end{array}$
}}
\frac{\varphi(t)}{2w_1w_2t}\cdot(2^r+1)\gcd(n, q^{w_1w_2}-1)+\sum_{\mbox{\tiny$
\begin{array}{c}
t\nmid m_{w_1w_2,1}\\
t\nmid m_{w_1w_2,2}\\t~even
\end{array}$
}}
\frac{\varphi(t)}{2w_1w_2t}\cdot2^r\cdot\gcd(n, q^{w_1w_2}-1)\\
&=&\prod_{\mbox{\tiny$
\begin{array}{c}
p|m_{w_1w_2,0}\\
p~odd~prime
\end{array}$
}}(1+v_p(m_{2w_1w_2})\frac {p-1}p)\cdot\frac{B}{2w_1w_2}\cdot \gcd(n, q-1) \\
&+&\prod_{\mbox{\tiny$
\begin{array}{c}
p|m_{w_1w_2,1}\\
p ~odd ~prime
\end{array}$
}}
(1+v_p(m_{2w_1w_2})\frac {p-1}p)\cdot \frac{w_2-1}{w_1w_2}\cdot\gcd(n,q^{w_1}-1)\\&+&\prod_{\mbox{\tiny$
\begin{array}{c}
p|m_{w_1w_2,2}\\
p ~odd~prime
\end{array}$
}}
(1+v_p(m_{2w_1w_2})\frac {p-1}p)\cdot \frac{w_1-1}{w_1w_2}\cdot\gcd(n,q^{w_2}-1)\\
&+&\prod_{\mbox{\tiny$
\begin{array}{c}
p|m_{2w_1w_2}\\
p~odd ~prime
\end{array}$
}}
(1+v_p(m_{2w_1w_2})\frac {p-1}p)\cdot\frac{2^r+1+v_2(m_{2w_1w_2})2^{r-1}}{2w_1w_2} \cdot\gcd(n,q^{w_1w_2}-1),
\end{eqnarray*}
where $B=2^{r-1}(2+v_2(m_{2w_1w_2}))+w_1w_2-2w_1-2w_2+3$.

{\bf Case 2.} either $v_{w_1}(n)=0 $ or $v_{w_1}(n)\geqslant v_{w_1}( q^{w_1w_2}-1)$ and $1\leqslant v_{w_2}(n)<v_{w_2}( q^{w_1w_2}-1)$.
By Lemmas 2.5 and 4.1,

$$\gcd(\frac{q^{2w_1w_2}-1}{q^2-1}, l_{2w_1w_2})=\frac{w_2\cdot\frac{q^{2w_1w_2}-1}{q^2-1}}{\gcd(n, \frac{q^{w_1w_2}-1}{q-1})}.$$


{\bf Case 3.} either $v_{w_2}(n)=0 $ or $v_{w_2}(n)\geqslant v_{w_1}( q^{w_1w_2}-1)$ and $1\leqslant v_{w_1}(n)<v_{w_1}( q^{w_1w_2}-1)$.
By Lemmas 2.5 and 4.1,

$$\gcd(\frac{q^{2w_1w_2}-1}{q^2-1}, l_{2w_1w_2})=\frac{w_1\cdot\frac{q^{2w_1w_2}-1}{q^2-1}}{\gcd(n, \frac{q^{w_1w_2}-1}{q-1})}.$$


{\bf Case 4.}  $1\leqslant v_{w_1}(n)<v_{w_1}( q^{w_1w_2}-1)$ and $1\leqslant v_{w_2}(n)<v_{w_2}( q^{w_1w_2}-1)$.
By Lemmas 2.5 and 4.1,
$$\gcd(\frac{q^{2w_1w_2}-1}{q^2-1}, l_{2w_1w_2})=\frac{w_1w_2\cdot\frac{q^{2w_1w_2}-1}{q^2-1}}{\gcd(n, \frac{q^{w_1w_2}-1}{q-1})}.$$

Similarly, we can prove them.
 \end{proof}

\begin{exa}{\rm Suppose that $q=3, w_1=3,w_2=5$ and $n=36488$.  It is easy to check that these integers satisfying the condition in Theorem $4.4$, i.e. $ord_{rad(n)}(q)=15$ and  $q\equiv3 \pmod 4$ and $8| n$.
By  calculate directly,  $r=2$, $m_{2w_1w_2}=1 $, and $m_{w_1w_2,1}=m_{w_1w_2,2}=m_{w_1w_2,3}=4$.  Hence the number of irreducible factors of $x^{36488}-1$ over $\Bbb F_{3}$ is
$\frac{5\times 4561}{15}+\frac{8}{15}+\frac{4}{15}+\frac{4\times 14+2}{15}=1525$ by Theorem $4.4$. The result is confirmed by Magma.  }


\end{exa}

\subsection{$w_1w_2$  is an even number }$~$

As $w_1,w_2$ are two distinct primes, in this subsection, we  always assume that $w_1=2$, $w_2=w$ is an  odd integer and $\ord_{rad(n)}(q)=2w$. It is noteworthy that in this subsection, we just have to consider one case as $q^{2w}\not\equiv3\pmod4$ for each $q$.

Similar to Lemma 4.1,  the following lemma holds.
\begin{lem}{\rm Let $w$ be an odd prime. Then



$(1)$ If either $v_{2}(n)=0 $ or $v_{2}(n)\geqslant v_{2}( q^{2w}-1)$ and $v_{w}(n)=0 $ or $v_{w}(n)\geqslant v_{w}( q^{2w}-1)$, then
$\gcd(n, q^{2w}-1)=\gcd(n, q^{}-1)\gcd(n, \frac{q^{2w}-1}{q-1})=\gcd(n, q^{2}-1)\gcd(n, \frac{q^{2w}-1}{q^{2}-1})=\gcd(n, q^{w}-1)\gcd(n, \frac{q^{2w}-1}{q^{w}-1}).$

$(2)$ If $v_{2}(n)=0 $ or $v_{2}(n)\geqslant v_{2}( q^{2w}-1)$ and $1\leqslant v_{w}(n)<v_{w}( q^{2w}-1)$, then
$
\gcd(n, q^{2w}-1)=\gcd(n/w, q^{}-1)\gcd(n, \frac{q^{2w}-1}{q-1})=\gcd(n/w, q^{2}-1)\gcd(n, \frac{q^{2w}-1}{q^{2}-1})=\gcd(n, q^{w}-1)\gcd(n, \frac{q^{2w}-1}{q^{w}-1}).
$

$(3)$ If $v_{w}(n)=0 $ or $v_{w}(n)\geqslant v_{w}( q^{2w}-1)$ and $1\leqslant v_{2}(n)<v_{2}( q^{2w}-1)$, then
\begin{eqnarray*}
\gcd(n, q^{2w}-1)&=&\gcd(n, q^{2}-1)\gcd(n, \frac{q^{2w}-1}{q^{2}-1})\\
&=&\left\{\begin{array}{ll}
\gcd(n/2, q^{}-1)\gcd(n, \frac{q^{2w}-1}{q-1})&\mbox{ if $q\equiv1\pmod4$,}\\
\gcd(n/4, q^{}-1)\gcd(n, \frac{q^{2w}-1}{q-1})&\mbox{ if $q\equiv3\pmod4$,}\\
\end{array}\right.\\
&=&\left\{\begin{array}{ll}
\gcd(n/2, q^{w}-1)\gcd(n, \frac{q^{2w}-1}{q^{w}-1})&\mbox{ if $q\equiv1\pmod4$,}\\
\gcd(n/4, q^{w}-1)\gcd(n, \frac{q^{2w}-1}{q^{w}-1})&\mbox{ if $q\equiv3\pmod4$.}\\
\end{array}\right.\\
\end{eqnarray*}
$(4)$ If $1\leqslant v_{2}(n)<v_{2}( q^{2w}-1)$  and $1\leqslant v_{w}(n)<v_{w}( q^{2w}-1)$, then
\begin{eqnarray*}
\gcd(n, q^{2w}-1)&=&\gcd(n/w, q^{2}-1)\gcd(n, \frac{q^{2w}-1}{q^{2}-1})\\& =&\left\{\begin{array}{ll}
\gcd(n/2w, q^{}-1)\gcd(n, \frac{q^{2w}-1}{q-1}),&\mbox{ if $q\equiv1\pmod4$,}\\
\gcd(n/4w, q^{}-1)\gcd(n, \frac{q^{2w}-1}{q-1}),&\mbox{ if $q\equiv3\pmod4$,}\\
\end{array}\right.\\
&=&\left\{\begin{array}{ll}
\gcd(n/2, q^{w}-1)\gcd(n, \frac{q^{2w}-1}{q^{w}-1}),&\mbox{ if $q\equiv1\pmod4$,}\\
\gcd(n/4, q^{w}-1)\gcd(n, \frac{q^{2w}-1}{q^{w}-1}),&\mbox{ if $q\equiv3\pmod4$.}\\
\end{array}\right.\\
\end{eqnarray*}


}
\end{lem}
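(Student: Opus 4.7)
The plan is to verify each of the four stated identities prime-by-prime, i.e., to show $v_p$ of the left-hand side equals $v_p$ of the right-hand side for every prime $p$, then invoke unique factorization. The key preliminary facts are the Euclidean bounds $\gcd(q-1,\tfrac{q^{2w}-1}{q-1})\mid 2w$, $\gcd(q^2-1,\tfrac{q^{2w}-1}{q^2-1})\mid w$, and $\gcd(q^w-1,\tfrac{q^{2w}-1}{q^w-1})\mid 2$, obtained from the congruence $1+q+\cdots+q^{m-1}\equiv m\pmod{q-1}$ and its analogues (cf.\ Lemma 3.1). These ensure that for every prime $p\notin\{2,w\}$, the two factors in each of the relevant factorizations are $p$-adically coprime, so $v_p(\gcd(AB,n))=v_p(\gcd(A,n))+v_p(\gcd(B,n))$ is automatic; hence all four identities hold at such $p$ uniformly.

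It remains to check $v_w$ and $v_2$ in each case. For $p=w$ (odd), the lifting-the-exponent lemma gives $v_w(q^{2w}-1)=v_w(q-1)+1=v_w(q^w-1)+1$ whenever $w\mid q-1$, while $v_w(q^2-1)=v_w(q-1)$; the four parts of the lemma then correspond exactly to whether $v_w(n)$ lies below $1$ or at/above $v_w(q^{2w}-1)$ on one axis and the analogous dichotomy for $v_2(n)$ on the other. For $p=2$, LTE with even exponent $2w$ yields $v_2(q^{2w}-1)=v_2(q-1)+v_2(q+1)=v_2(q^2-1)$ (using $w$ odd), so $\tfrac{q^{2w}-1}{q^2-1}$ is $2$-adically trivial and $v_2(q^w-1)=v_2(q-1)$. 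The split $q\equiv1\pmod4$ versus $q\equiv3\pmod4$ controls how the $2$-part distributes between $q-1$ and $\tfrac{q^{2w}-1}{q-1}$: in the former case $v_2(\tfrac{q^{2w}-1}{q-1})=1$, in the latter $v_2(\tfrac{q^{2w}-1}{q-1})=v_2(q+1)\geq2$. This is precisely the source of the $n/2$-versus-$n/4$ dichotomy in parts (3) and (4). Plugging these exponents into the $\min$ formulas gives each identity after a short computation.

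The main obstacle will be the $2$-adic bookkeeping in Case (4) under $q\equiv3\pmod4$, since there both the $w$-part and the $2$-part of $n$ are small and their truncations must be combined. One must verify that the divisor $n/(4w)$ appearing in $\gcd(n/(4w),q-1)$ is exactly the truncation needed so that $\min(v_2(n),v_2(q-1))=\min(v_2(n),1)$ on the left reconciles with the contributions from $\gcd(n,\tfrac{q^{2w}-1}{q-1})$ on the right; symmetrically for the $w$-adic side. Once this hardest subcase is checked, Cases (1)--(3) and the $q\equiv1\pmod4$ branches follow by the same argument with one fewer factor of $2$ to track, in direct parallel with the proof of Lemma 3.2 (and its $w^2$ analogue) given earlier.
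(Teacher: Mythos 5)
Your overall plan --- verifying each identity prime by prime via $p$-adic valuations, using the Euclidean bounds $\gcd(q-1,\frac{q^{2w}-1}{q-1})\mid 2w$, $\gcd(q^2-1,\frac{q^{2w}-1}{q^2-1})\mid w$, $\gcd(q^w-1,\frac{q^{2w}-1}{q^w-1})\mid 2$ to dispose of all primes $p\notin\{2,w\}$ --- is the natural one, and it is exactly the computation the paper leaves implicit (the paper offers no proof of this lemma at all, only the remark that it is ``similar to Lemma 4.1''; the template is the proof of Lemmas 3.1--3.2). However, your exponent bookkeeping contains errors. At $p=w$, lifting the exponent gives $v_w(q^w-1)=v_w(q-1)+v_w(w)=v_w(q-1)+1$ when $w\mid q-1$, so $v_w(q^{2w}-1)=v_w(q^w-1)$, \emph{not} $v_w(q^w-1)+1$ as you assert; with your value the third identity in part (2) would appear to fail, while with the correct value it holds trivially because $\frac{q^{2w}-1}{q^w-1}=q^w+1$ is prime to $w$. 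You also never treat the case $\ord_w(q)=2$ (i.e.\ $w\mid q+1$, $w\nmid q-1$), which is genuinely possible under $\ord_{rad(n)}(q)=2w$ and is the case where the $w$-part sits in $q^2-1$ rather than $q-1$.

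More seriously, the concluding ``plugging these exponents into the $\min$ formulas gives each identity after a short computation'' cannot be carried out for the $q\equiv 3\pmod 4$ branches of parts (3) and (4), because those identities are false as stated. For $q\equiv 3\pmod 4$ one has $v_2(q-1)=1$ and $v_2\!\left(\frac{q^{2w}-1}{q-1}\right)=v_2(q+1)\geqslant 2$, so under $1\leqslant v_2(n)<v_2(q^{2w}-1)$ the right-hand side of $\gcd(n,q^{2w}-1)=\gcd(n/4,q-1)\gcd(n,\frac{q^{2w}-1}{q-1})$ has $2$-adic valuation $\min(v_2(n)-2,1)+v_2(n)$, which equals $v_2(n)$ only when $v_2(n)=2$; no fixed truncation $n/2^c$ works for all admissible $v_2(n)$, since one needs $c\geqslant v_2(n)$. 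Concretely, take $q=7$, $w=3$, $n=344=2^3\cdot 43$: then $\ord_{rad(n)}(q)=6$, $v_3(n)=0$, $1\leqslant v_2(n)=3<4=v_2(q^6-1)$, and $\gcd(n,q^6-1)=344$, whereas $\gcd(n/4,q-1)\gcd(n,\frac{q^6-1}{q-1})=\gcd(86,6)\cdot\gcd(344,19608)=2\cdot 344=688$ (the third identity of part (3) fails the same way). So an honest execution of your strategy would refute, not confirm, these branches; the statement needs a truncation that removes the entire $2$-part of $n$ from the $\gcd(\cdot,q-1)$ and $\gcd(\cdot,q^w-1)$ factors, not a fixed factor of $4$.
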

By Lemmas 2.5, 4.6 and the proof Theorem 4.2,  the following result holds.

\begin{thm} {\rm Suppose that $\ord_{rad(n)}(q)=2w$, and $w$ is an odd  prime.
 Let $n=2^{v_{2}(n)}w^{v_{w}(n)}n_0n_1n_2n_3$, where
$rad(n_0)|(q-1)$,  $rad(n_1)|(q+1)$, $rad(n_2)|\frac{q^{4}-1}{q^2-1}$, and $\gcd(n_3, $ $(q^2-1)(q^{w}-1))=1$. Let
   $m_{{2w},0}=\frac{n_0}{\gcd(n_0, q-1)}$, $m_{{2w},1}=\frac{2^{v_{2}(n)}n_0n_1}{\gcd(2^{v_{2}(n)}n_0n_1, q^{w_1}-1)}$,  $m_{{2w},2}=\frac{w^{v_{w}(n)}n_0n_2}{\gcd(w^{v_{w}(n)}n_0n_2, q^{w_2}-1)}$, $m_{2w}=\frac{n}{\gcd(n,q^{2w}-1)}$, $l_{2w}=\frac{q^{2w}-1}{\gcd(n,q^{2w}-1)}$, $l_{w}=\frac{q^{w}-1}{\gcd(n,q^{w}-1)}$, $l_{2}=\frac{q^{2}-1}{\gcd(n,q^{2}-1)}$, $l_{1}=\frac{q-1}{\gcd(n, q-1)}$,  $\delta$ a  generator of $\Bbb F^{\ast}_{q^{2w}}$ satisfying $\pi=\delta^{\frac{q^{2w}-1}{q^{w}-1}}$,   $\alpha=\delta^{\frac{q^{2w}-1}{q^{2}-1}}$, and $\theta=\delta^{\frac{q^{2w}-1}{q^{}-1}}$. Then

$(1)$ The irreducible factorization of  $x^{n}-1$  over $\Bbb F_{q}$  is
\begin{eqnarray*}&&\prod_{\mbox{\tiny$
\begin{array}{c}
 t|m_{2w,0}\\
\end{array}$
}}
 \prod_{\mbox{\tiny$
\begin{array}{c}
1\le v\le \gcd(n, q-1)\\
\gcd(v, t)=1
\end{array}$
}}(x^{t}-\theta^{vl_{1}})\\
&\cdot&\prod_{\mbox{\tiny$
\begin{array}{c}
t|m_{2w,1}\\
\end{array}$
}}\prod_{\mbox{\tiny$
\begin{array}{c}
v_1\in \mathcal{S}^{(1)}_{t}
\end{array}$
}}
(x^{2t}-(\alpha^{v_1l_{2}}+\alpha^{qv_1l_{2}})x^t+\theta^{v_1l_2})\\
&\cdot&\prod_{\mbox{\tiny$
\begin{array}{c}
t|m_{2w,2}\\
\end{array}$
}}\prod_{\mbox{\tiny$
\begin{array}{c}
v_2\in \mathcal{S}^{(2)}_{t}
\end{array}$
}}
\prod\limits_{k=0}^{w-1}(x^{t}-\pi^{q^kv_2l_{w}})\cdot\prod_{\mbox{\tiny$
\begin{array}{c}
t|m_{2w}\\
\end{array}$
}}\prod_{\mbox{\tiny$
\begin{array}{c}
v_3\in \mathcal{S}^{(3)}_{t}
\end{array}$
}}
\prod\limits_{k=0}^{2w-1} (x^{t}-\delta^{q^kv_3l_{2w}}),
\end{eqnarray*}
where $$\mbox{  }\mathcal{S}^{(1)}_{t}=\bigg\{v_1\in \Bbb N: \begin{array}{l}
1\le v_1\le \gcd(n,q^{2}-1),\gcd(v_1,t)=1,
 \\ (1+q)\nmid v_1l_{2}, v_1=\min\{v_1, qv_1\}_{\gcd(n, q^{2}-1)}\end{array}\bigg\},$$
 $$\mathcal{S}^{(2)}_{t}=\bigg\{v_2\in \Bbb N: \begin{array}{l}
1\le v_2\le \gcd(n,q^{w}-1),\gcd(v_2,t)=1,
 \\ \frac{q^{w}-1}{q-1}\nmid v_2l_{w}, v_2=\min\{v_2, qv_2, \cdots, q^{w-1}v_2\}_{\gcd(n, q^{w}-1)}\end{array}\bigg\},$$
and $$\mbox{  }\mathcal{S}^{(3)}_{t}=\bigg\{v_3\in \Bbb N: \begin{array}{l}
1\le v_3\le \gcd(n,q^{2w}-1),\gcd(v_3,t)=1,
  \frac{q^{2w}-1}{q^{2}-1}\nmid v_3l_{2w}, \\ \frac{q^{2w}-1}{q^{w}-1}\nmid v_3l_{2w},  v_3=\min\{v_3, qv_3, \cdots, q^{2w-1}v_3\}_{\gcd(n, q^{2w}-1)}\end{array}\bigg\}.$$
$(2)$ The number of irreducible factors is
\begin{eqnarray*}&&\prod_{\mbox{\tiny$
\begin{array}{c}
p| m_{2w,0}\\
p~prime\\
\end{array}$
}}(1+v_p(m_{2w,0})\frac{p-1}p)\cdot \frac{w-1}{2w}\cdot\gcd(n, q-1) \\
&+&\prod_{\mbox{\tiny$
\begin{array}{c}
p| m_{2w,1}\\
p~prime\\
\end{array}$
}}(1+v_p(m_{2w,1})\frac{p-1}p)\cdot \frac{w-1}{2w}\cdot\gcd(n, q^{2}-1) \\
&+&\prod_{\mbox{\tiny$
\begin{array}{c}
p| m_{2w,2}\\
p~prime\\
\end{array}$
}}(1+v_p(m_{2w,2})\frac{p-1}p)\cdot \frac{1}{2w}\cdot\gcd(n, q^{w}-1) \\&+&\prod_{\mbox{\tiny$
\begin{array}{c}
p| m_{2w}\\
p~prime\\
\end{array}$
}}(1+v_p(m_{2w})\frac {p-1}p) \cdot\frac{1}{2w}\cdot\gcd(n, q^{2w}-1) .\end{eqnarray*}
}
\end{thm}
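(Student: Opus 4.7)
The plan is to adapt the strategy of Theorems 3.3 and 4.2 to the mixed prime case $\ord_{rad(n)}(q)=2w$. The crucial preliminary observation (already noted in the excerpt) is that $q^{2w}=(q^{w})^{2}$ is always a perfect square, so $q^{2w}\not\equiv 3\pmod 4$. Hence Lemma 2.2 applies unconditionally over $\Bbb F_{q^{2w}}$, and there is no need to split into the two sub-cases that appear when $w$ itself is odd. Concretely, I would begin by writing
\begin{equation*}
x^{n}-1=\prod_{t\mid m_{2w}}\prod_{\substack{1\le u\le\gcd(n,q^{2w}-1)\\ \gcd(u,t)=1}}\bigl(x^{t}-\delta^{u l_{2w}}\bigr)
\end{equation*}
as the irreducible factorization over $\Bbb F_{q^{2w}}$, and then classify each binomial according to the smallest subfield of $\Bbb F_{q^{2w}}$ containing $\delta^{ul_{2w}}$, namely one of $\Bbb F_q, \Bbb F_{q^2}, \Bbb F_{q^w}$, or $\Bbb F_{q^{2w}}$.

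Next, I would translate the four subfield conditions into divisibility conditions on $u$. The binomial descends to $\Bbb F_q[x]$ iff $\tfrac{q^{2w}-1}{q-1}\mid ul_{2w}$; it lies in $\Bbb F_{q^{2}}[x]\setminus\Bbb F_{q}[x]$ iff $\tfrac{q^{2w}-1}{q^{2}-1}\mid ul_{2w}$ but $\tfrac{q^{2w}-1}{q-1}\nmid ul_{2w}$; and analogously for $\Bbb F_{q^w}$. Using Lemma 2.5 together with Lemma 4.6, each of these divisibility conditions reduces to a statement of the form "$\tfrac{1}{c}\gcd(n,\tfrac{q^{2w}-1}{q^i-1})\mid u$," where the correction factor $c\in\{1,w,2,4,2w,4w\}$ depends on which of the four cases of Lemma 4.6 we are in. Once these reductions are in hand, the parametrizations of $u$ by $v,v_1,v_2,v_3$ (in the index sets $\mathcal{S}^{(1)}_t,\mathcal{S}^{(2)}_t,\mathcal{S}^{(3)}_t$ of the theorem statement) drop out, and Lemma 2.4 lets us group the $\Bbb F_{q^{2}}$-binomials into trinomials of degree $2t$ (as in Lemma 2.3), the $\Bbb F_{q^{w}}$-binomials into products of $w$ conjugates of degree $wt$, and the $\Bbb F_{q^{2w}}$-binomials into products of $2w$ conjugates of degree $2wt$. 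This assembles part (1).

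For part (2), I would count the irreducible factors of each degree by a case split on divisibility of $t$ by $m_{2w,0}, m_{2w,1}, m_{2w,2}, m_{2w}$, which controls precisely when binomials/trinomials of a given degree can appear. In each branch, the number of admissible $u$ with $\gcd(u,t)=1$ is $\tfrac{\varphi(t)}{t}\cdot$(size of the residue-class), by the standard inclusion-exclusion argument already used in Theorem 3.3. Summing $\tfrac{\varphi(t)}{t}$ over divisors $t$ of a given $m$ and exploiting multiplicativity gives $\prod_{p\mid m}\bigl(1+v_p(m)\tfrac{p-1}{p}\bigr)$, and combining the four terms with the correct coefficients $\tfrac{w-1}{2w},\tfrac{w-1}{2w},\tfrac{1}{2w},\tfrac{1}{2w}$ (matching the four conjugacy-class sizes $1,2,w,2w$) yields the counting formula.

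The main obstacle will be the bookkeeping of the four subcases of Lemma 4.6: when $q\equiv 3\pmod 4$ and $1\le v_2(n)<v_2(q^{2w}-1)$ an extra factor of $2$ enters the reduction between $\gcd(n,q^{2w}-1)$ and $\gcd(n/4,q-1)\cdot\gcd(n,\tfrac{q^{2w}-1}{q-1})$, so the correction constants $c$ above must be tracked through all four subfields simultaneously. Once one verifies that these corrections cancel consistently, so that the number of admissible $u$ in each conjugacy class reduces cleanly to $\gcd(n,q^i-1)-\gcd(n,q^j-1)$ for the appropriate $i,j\in\{1,2,w,2w\}$, the rest of the argument is a verbatim repetition of the proofs of Theorems 4.2 and 4.4.
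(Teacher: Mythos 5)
Your proposal is correct and follows exactly the route the paper intends: the paper proves Theorem 4.7 only by the remark ``By Lemmas 2.5, 4.6 and the proof of Theorem 4.2,'' and your outline --- factoring over $\Bbb F_{q^{2w}}$ via Lemma 2.2 (noting $q^{2w}\not\equiv 3\pmod 4$ removes the case split), classifying each binomial by the least subfield among $\Bbb F_q,\Bbb F_{q^2},\Bbb F_{q^w},\Bbb F_{q^{2w}}$ containing $\delta^{ul_{2w}}$, tracking the correction factors through the four cases of Lemma 4.6, and grouping conjugates by Lemma 2.4 --- is precisely that argument spelled out. The inclusion--exclusion coefficients you give, $\tfrac{w-1}{2w},\tfrac{w-1}{2w},\tfrac{1}{2w},\tfrac{1}{2w}$, are the correct ones.
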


\section{Concluding remarks}
Suppose that $w_1,w_2$ are two prime numbers, and $ord_{rad(n)}(q)=w_1w_2$. In this paper,   we explicitly factorized  $x^{n}-1$ over $\Bbb F_{q}$ in the two cases of $w_1=w_2$ and  $w_1\neq w_2$.

Suppose that $ord_{rad(n)}(q)=d=p_1^{\alpha_1}\cdots p_l^{\alpha_l}$, where $p_1,\cdots,p_l$ are distinct prime numbers and $\alpha_i>0$, $1\le i\le l$. Assume that all positive divisors of $d$ are  $d_1,\cdots,d_k$.
Since $rad(n)|(q^d-1)$ and $q^d-1=\prod_{d'|d}\Phi_{d'}(q)$, $n$ is rewritten  as follows.
$$n=p_1^{v_{p_1}(n)}\cdots p_l^{v_{p_l}(n)}\cdot n_{d_1}n_{d_2}\cdots n_{d_k}\cdot\overline{n},$$
where $\gcd(\overline{n},(q^{d/p_1}-1)\cdots(q^{d/p_l}-1))=1$, $rad(n_{d_1})|\Phi_{d_1}(q), \cdots$, and $ rad(n_{d_k})|\Phi_{d_k}(q)$,
After a complicated calculation, we can also obtain the factorization of $x^n-1$ over $\Bbb F_q$ and the number of irreducible factors, which is similar to those theorems above.

\bigskip
\section*{Acknowledgments} The second  author was supported by the National Natural Science Foundation of China (No. 61772015) and the Foundation of Science and Technology on Information Assurance Laboratory (No. KJ-17-010). The authors wish to thank the anonymous reviewers and the Associate Editor for their very helpful comments that improved the presentation and quality of this paper.





\end{document}